\keywords{games on graphs, one-to-two-player lifting, strategy complexity}
\theoremstyle{plain} 
\newcommand{\Min}{\mathrm{Min}}
\newcommand{\Max}{\mathrm{Max}}
\newcommand{\source}{\mathsf{source}}
\newcommand{\target}{\mathsf{target}}
\newcommand{\col}{\mathsf{col}}
\newcommand{\vphi}{\varphi}
\newcommand{\Cons}{\mathsf{Cons}}
\newcommand{\FMD}{\mathsf{FMD}}
\newcommand{\oneFMD}{\mathsf{1playerFMD}}
\begin{document}

\title[Instructions]{One-to-Two-Player Lifting for Mildly Growing Memory}
\titlecomment{An extended abstract of the paper is published in the proceedings of STACS 2022.}
\thanks{}	

\author[A.~Kozachinskiy]{Alexander Kozachinskiy}[a]
\address{}	
\email{}  






\begin{abstract}
  \noindent We investigate a phenomenon of ``one-to-two-player lifting'' in infinite-duration two-player games on graphs with zero-sum objectives. More specifically, let $\mathcal{C}$ be a class of strategies. It turns out that in many cases, to show that all two-player games on graphs with a given payoff function are determined in $\mathcal{C}$, it is sufficient to do so for one-player games.  That is, in many cases the determinacy in $\mathcal{C}$ can be ``lifted'' from one-player games to two-player games. Namely, Gimbert and Zielonka~(CONCUR 2005) have shown this for the class of positional strategies. Recently, Bouyer et al.~(CONCUR 2020) have extended this to the classes of arena-independent finite-memory strategies. Informally, these are finite-memory strategies that use the same way of storing memory in all game graphs.

In this paper, we put the lifting technique into the context of memory complexity. The memory complexity of a payoff function measures, how many states of memory we need to play optimally in game graphs with up to $n$ nodes, depending on $n$. We address the following question. Assume that we know the memory complexity of our payoff function in one-player games. Then what can be said about its memory complexity in two-player games? In particular, when is it finite?

In this paper, we answer this questions for strategies with ``chromatic'' memory. These are strategies that only accumulate sequences of colors of edges in their memory. We obtain the following results.
\begin{itemize}
\item Assume that the chromatic memory complexity in one-player games is sublinear in $n$ on some infinite subsequence. Then the chromatic memory complexity in two-player games is finite.

\item We provide an example in which \textbf{\emph{(a)}} the chromatic memory complexity in one-player games is linear in $n$;  \textbf{\emph{(b)}} the memory complexity in two-player games is infinite.
\end{itemize}
Thus, we obtain the exact barrier for the one-to-two-player lifting theorems in the setting of chromatic finite-memory strategies. Previous results only cover payoff functions with constant chromatic memory complexity.

\end{abstract}

\maketitle

\section{Introduction}

We study two-player infinite-duration games on graphs. These games are of interest in many areas of computer science, ranging from purely theoretical disciplines, such as decidability of logical theories~\cite{muchnik1992games,zielonka1998infinite}, to more practically-oriented ones, such as controller synthesis~\cite{gradel2002automata}. 

These games are played as follows. There is a finite directed graph with a token. We will call this graph \emph{arena}. Initially, the token is placed in one of the nodes of the arena. In each turn, one of the two players takes the token and moves it to some other node. A restriction is that there must be an edge to the new location of the token. For each node of the arena, it is fixed in advance which of the players is the one to move the token in this node. The game proceeds for infinitely many turns. The outcome of the game is decided by the resulting trajectory of the token (it forms an infinite path in the arena).

We restrict ourselves to \emph{zero-sum} games. Correspondingly, the players will be called Max and Min from now on. In a zero-sum game, objectives of the players are defined through a \emph{payoff function} -- a function of the form $\vphi\colon C^\omega\to\mathcal{W}$, where $C$ is a set of \emph{colors}, and $(\mathcal{W}, \le)$ is an arbitrary linearly ordered set. Next, we assume that arenas are edge-colored by elements of $C$. To compute the outcome of a play (which will be an element of $\mathcal{W}$), we take the trajectory of the token in this play, then consider the infinite sequence of colors $\gamma \in C^\omega$ written on the edges of the trajectory, and, finally, apply $\vphi$ to $\gamma$. The aim of Max is to maximize $\vphi(\gamma)$, while the aim of Min is to minimize it (with respect to the ordering of $\mathcal{W}$).

As usually, a pair of strategies of the players in which the first strategy is the best response to the second one, and vice versa, is called an \emph{equilibrium}. Next, a strategy which belongs to some equilibrium is called \emph{optimal}. Now, a payoff function is called \emph{determined} if in every arena there exists an equilibrium with respect to this payoff function.

We will study determinacy with respect to restricted classes of strategies. Namely, if $\mathcal{C}$ is a class of strategies, then we say that a payoff function is determined in $\mathcal{C}$ if the following holds: in every arena there is an equilibrium for this payoff function in which both strategies are from $\mathcal{C}$. The smaller is $\mathcal{C}$, the stronger is this requirement. 

\medskip

One of the main research directions in the area of games of graphs is \emph{strategy complexity}. Its goal, broadly speaking, is to find out, for a payoff function $\vphi$ of our interest, what is the ``simplest'' class of strategies $\mathcal{C}$ in which $\vphi$ is determined. This is highly relevant when our task is to actually implement in practice one of the optimal strategies for $\vphi$. For instance, this is the case when we want to produce a device whose performance is measured by $\vphi$. If this device is meant 
to act in the environment, then the execution of this device can be modeled as a game -- between the controller of the device and the environment. In this framework, the controller realizes one of the strategies in this game. Ideally, we want an optimal performance w.r.t.~$\vphi$ at the lowest cost (in terms of the resources we need to implement the controller). The lower is strategy complexity of $\vphi$, the easier is this task.

Classically, there are two classes of strategies that are often considered in this context. One is the class of \emph{positional} strategies and the other is the class of \emph{finite-memory} strategies.

Let us first consider positional strategies.
A strategy is positional if, for every node $v$ of the arena, it always makes the same move when the token is in $v$, no matter what was the path of the token to this node. Sometimes these strategies are called \emph{memory-less} -- they do not need to ``remember'' anything about the previous development of the game. For brevity, we call payoff functions that are determined in the class of positional strategies \emph{positionally determined}. Classical examples of games with positionally determined payoff functions are Parity Games, Mean-Payoff Games and Discounted Games~\cite{mostowski1991games,ehrenfeucht1979positional,zwick1996complexity}.

These games, especially Parity Games, had a tremendous impact on such areas as verification, model checking and program analysis~\cite{emerson1991tree,emerson1993model,baldan2019fixpoint}. However, say, in controller synthesis, it is often required to consider more complex games, namely, those for which positional strategies do not suffice. This brings us to a more general class of strategies -- the class of finite-memory strategies.

Unlike positional strategies, finite-memory strategies can store some information about the previous development of the game. The point is that during the whole play, which is infinitely long, the amount of this information should never exceed some constant.

The storage of information in finite-memory strategies is carried out by \emph{memory skeletons}. A memory skeleton $\mathcal{M}$ is a deterministic finite automaton whose input alphabet is the set of colors. Now, an $\mathcal{M}$-strategy is a strategy which, informally, stores information according to the memory skeleton $\mathcal{M}$. To understand how it works, imagine that during the game, each time the token is shifted along some edge, the color of this edge is fed to $\mathcal{M}$. Then, at every moment, the current state of $\mathcal{M}$ represents the current content of the memory. Correspondingly, the moves of an $\mathcal{M}$-strategy depend solely on the current state of $\mathcal{M}$ and the current node with the token.

A strategy is finite-memory if it is an $\mathcal{M}$-strategy for some memory skeleton $\mathcal{M}$.
For brevity, we call payoff functions that are determined in the class of finite-memory strategies \emph{finite-memory determined}.

\begin{rem}
Finite-memory strategies as defined above are sometimes called  ``chromatic''.  This is because one can consider a more general definition. Namely, one can allow memory skeletons to take the whole edge as an input, not only its color. However, as shown by Le Roux~\cite{le2020time}, determinacy in general finite-memory strategies is equivalent to determinacy in chromatic finite-memory strategies. In this paper, we work only with chromatic finite-memory strategies.
\end{rem}

\subsection{One-to-two-player lifting}

One of the techniques in the area of strategy complexity is called \emph{one-to-two-player lifting}. Our paper is devoted to this technique. It relies on the notion of \emph{one-player arenas}.
An arena is called one-player if for one of the players the following holds: all the nodes of the arena from which this player is the one to move have exactly one out-going edge. This means that one of the players is given no choice and has only one way of playing. Correspondingly, there are two types of one-player arenas -- those in which Max has no choice and those in which Min has no choice.

It turns out that to study determinacy in some class of strategies $\mathcal{C}$, it is sometimes sufficient to consider only one-player arenas. As was shown by Gimbert and Zielonka~\cite{gimbert2005games}, this applies to the class of positional strategies. More specifically, their result states the following. Assume that a payoff function is such that all \emph{one-player} arenas have an equilibrium of two positional strategies\footnote{Note that in one-player arenas, one of the players has just one strategy (and this strategy is positional). So this requirement means that the other player has a positional strategy which is at least as good against the unique strategy of the opponent as any other strategy.} with respect to this payoff function. Then \emph{all} arenas, not only one-player ones, have an equilibrium of two positional strategies with respect to this payoff function. That is, then this payoff function is positionally determined. In a way, this means the positional determinacy of one-player games can always be ``lifted'' to two-player games. 

This result has fundamental significance for studying the positional determinacy.  This is because often one-player arenas are considerably easier to analyze than two-player ones. Indeed, assume we have an arena in which, say, Min has no choice. A question of whether such an arena has a positional equilibrium reduces to the following question. Is there a ``lasso'' (a simple path to a simple cycle over which we rotate infinitely many times)  which maximizes our payoff function over all infinite paths? Often this can be figured out with a simple graph reasoning. For instance, this is fairly easy for Parity Games and Mean Payoff Games. Thus, through the lifting theorem of Gimbert and Zielonka one gets simple proofs of positional determinacy of these games. In turn, proofs that existed prior to the paper of Gimbert and Zielonka were highly non-trivial.

Given such a success in the case of positional strategies, it is temping to extend this to larger classes of strategies. This was recently investigated for the class of finite-memory strategies by Bouyer et al.~in~\cite{bouyer2020games}. It turns out that the situation is quite different for this class. More specifically, Bouyer et al.~have constructed a payoff function such that \textbf{\emph{(a)}} all one-player arenas have an equilibrium of two finite-memory strategies with respect to this payoff function \textbf{\emph{(b)}} there is an arena (in fact, with just 2 nodes) which is not one-player and which has no equilibrium of two finite-memory strategies  with respect to this payoff function. 

Thus, the class of positional strategies admits one-to-two-player lifting and the class of finite-memory strategies does not. Bouyer et al.~suggested to study intermediate classes. Namely, their approach was as follows.
By definition, the class of finite-memory strategies
 is the union of the classes of $\mathcal{M}$-strategies over all memory skeletons $\mathcal{M}$. Let us now fix a memory skeleton $\mathcal{M}$ and consider the class of $\mathcal{M}$-strategies for this specific $\mathcal{M}$. Bouyer et al.~show that for every $\mathcal{M}$ this class admits one-to-two-player lifting.

More precisely, the lifting theorem of Bouyer et al.~states that for any memory skeleton $\mathcal{M}$ the following holds. Assume that a payoff function is such that all one-player arenas have an equilibrium of two $\mathcal{M}$-strategies. Then the same holds for all arenas, with exactly this memory skeleton $\mathcal{M}$. That is, then this payoff function is determined in $\mathcal{M}$-strategies.
 
Observe that positional strategies are exactly $\mathcal{M}$-strategies if the memory skeleton $\mathcal{M}$ has just one state.
Thus, the lifting theorem Bouyer et al. includes the lifting theorem of Gimbert and Zielonka as a special case.

Bouyer et al.~call payoff functions to which one can apply their lifting theorem 
\emph{arena-independent finite-memory determined}. That is, a payoff function $\vphi$ is arena-independent finite-memory determined if there exists a memory skeleton $\mathcal{M}$ such that $\vphi$ is determined in $\mathcal{M}$-strategies.

In the literature there is a number of games with arena-independent finite-memory determined payoff functions. For example, one can list games with $\omega$-regular winning conditions~\cite{buchi1969solving} and bounded multidimensional energy games~\cite{bouyer2008infinite}. In turn, unbounded multidimensional energy games are finite-memory determined but not arena-independently~\cite{chatterjee2014strategy}.

\subsection{Our results}

The aim of this work is to extend the lifting technique beyond the class of arena-independent finite-memory determined payoff functions. 

For payoff functions beyond this class, there is no single memory $\mathcal{M}$ skeleton which suffices for all arenas (here  ``suffices'' means the existence of an equilibrium of two $\mathcal{M}$-strategies). Instead, larger arenas require larger memory skeletons. This motivates a notion of the \emph{memory complexity} of a payoff function. It can be defined as follows. For every $n$ consider the minimal memory skeleton which is sufficient for all arenas with up to $n$ nodes (w.r.t.~our payoff function).  Let the size of this memory skeleton (that is, the number of its states) be $S_n$. Then we call the function $n\mapsto S_n$ the memory complexity of our payoff function. Observe that arena-independent finite-memory determined payoff functions have memory complexity $O(1)$.

The memory complexity is the decisive factor in practice -- if it grows too quickly, we might have no resources to implement optimal strategies for our payoff function. This complexity measure was studied for a number of payoff functions in~\cite{chatterjee2012energy,chatterjee2014strategy}

We initiate the study of the memory complexity in the context of one-to-two-player lifting. More specifically, we address the following question. Assume that we know the memory complexity of our payoff function in one-player arenas. Then what can be said about its memory complexity in all arenas? Thus,
our approach differs from the approach of Bouyer et al.~in the following regard. Instead of lifting determinacy in some fixed class of strategies from one-player arenas to all arenas, we lift bounds on the memory complexity.

 To formulate our results, we introduce the following notation.  Let $\mathbb{Z}^+$ denote the set of positive integers, and let $f\colon\mathbb{Z}^+\to\mathbb{Z}^+$ be a function. Then by $\FMD(f)$ we denote the class of all payoff functions $\vphi$ such that for all $n\in\mathbb{Z}^+$ there exists a memory skeleton $\mathcal{M}$ with at most $f(n)$ states such that every arena with at most $n$ nodes has an equilibrium of two $\mathcal{M}$-strategies with respect to $\vphi$. In other words, $\FMD(f)$ is the class of all payoff function with memory complexity at most $f$. We also introduce similar notation for one-player arenas. Namely, we let $\oneFMD(f)$ be the class of all payoff functions $\vphi$ such that for all $n\in\mathbb{Z}^+$ there exists a memory skeleton $\mathcal{M}$ with at most $f(n)$ states such that every \emph{one-player} arena with at most $n$ nodes has an equilibrium of two $\mathcal{M}$-strategies with respect to $\vphi$. Again, $\oneFMD(f)$ is the class of payoff functions whose memory complexity in one-player arenas is at most $f$.
Obviously, $\FMD(f) \subseteq \oneFMD(f)$. Additionally, we let $\FMD$ stand for the class of all finite-memory determined payoffs. Finally, let $\oneFMD$ be the class of all payoff functions $\vphi$ such that every one-player arena has an equilibrium of two finite-memory strategies w.r.t.~$\vphi$.

In this notation, the question we address in this paper can be formulated as follows: for which functions $f$ and $g$ do we have $\oneFMD(f) \subseteq \FMD(g)$? 

\begin{rem}
One could consider an alternative definition of $\FMD(f)$, in which different arenas of size up to $n$ may be mapped to different memory skeletons of size $f(n)$. Unfortunately, it is not clear how to extend results of this paper to this setting.
\end{rem}

\medskip

Before presenting our results, let us express previous ones in this notation.
For technical convenience, we assume from now on that the set $C$ of colors is finite. This is not an essential restriction, as any arena involves only finitely many colors. Hence, if $C$ is infinite, one can study, separately all finite subsets $C^\prime\subseteq C$, arenas that involve colors only from $C^\prime$.

First, let us understand what payoff functions are included\footnote{Here, formally, by $\FMD(1)$ we mean $\FMD(f)$ for the function $f\colon\mathbb{Z}^+\to\mathbb{Z}^+$ such that $f(n) = 1$ for all $n\in\mathbb{Z}^+$. More generally, if there is some expression in $n$ defining a function $f\colon\mathbb{Z}^+\to\mathbb{Z}^+$, we will use $\FMD$ of this expression instead of $\FMD(f)$. For example, if $f(n) = 2n^2 + 2$ for all $n\in\mathbb{Z}^+$, then we will write $\FMD(2n^2 + 2)$ instead of $\FMD(f)$.}~in $\FMD(1)$. By definition, these are payoff functions such that for every $n$ there is a memory skeleton $\mathcal{M}$ with 1 state such that all arenas with up to $n$ nodes are determined in $\mathcal{M}$-strategies -- or, equivalently, in positional strategies. Thus, $\FMD(1)$ is exactly the class of positionally determined payoff functions. Observe then that the lifting theorem of Gimbert and Zielonka can be stated as the equality $\oneFMD(1) = \FMD(1)$.

In fact, the lifting theorem of Bouyer et al.~asserts that, more generally, for any constant $k\in\mathbb{Z}^+$ we have $\oneFMD(k) = \FMD(k)$. Indeed, take any $\vphi\in\oneFMD(k)$. Our goal is to show that $\vphi\in\FMD(k)$. By definition, for every $n$ there exists a memory skeleton $\mathcal{M}$ with at most $k$ states such that all one-player arenas with at most $n$ nodes have an equilibrium of two $\mathcal{M}$-strategies w.r.t.~$\vphi$. A problem is that these $\mathcal{M}$ may be different for different $n$. However, since the set $C$ of colors is finite, there are only finitely many memory skeletons with up to $k$ states. One of them works for infinitely many $n$ -- and, hence, for all one-player arenas. Due to the lifting theorem of Bouyer et al., the same memory skeleton works for all arenas. Thus, since this memory skeleton has at most $k$ states, we have $\vphi\in\FMD(k)$.

Let us note that the class of arena-independent finite-memory determined payoffs is the class $\FMD(O(1)) = \bigcup_{k\in\mathbb{Z}^+}\FMD(k)$.

Finally, since lifting does not hold for the whole class of finite-memory strategies, we have $\oneFMD \neq \FMD$. In fact, this means that for some function $f$ we have $\oneFMD(f) \nsubseteq \FMD$. This is because
\[\FMD = \bigcup\limits_f \FMD(f),\qquad \oneFMD = \bigcup\limits_f\oneFMD(f)\]
over all $f\colon\mathbb{Z}^+\to\mathbb{Z}^+$. Why is it so? For example, let us show this for $\FMD$. We have to show that for any $\vphi\in\FMD$ and for every $n$ there exists a memory skeleton $\mathcal{M}$ such that all arenas with up to $n$ nodes have an equilibrium of two $\mathcal{M}$-strategies (w.r.t $\vphi$). A point is that, since $C$ is finite, for every $n$ the number of such arenas is also finite (w.l.o.g.~we may assume that between each pair of nodes there are at most $|C|$ edges). In each of these arenas, fix a pair of finite-memory strategies forming an equilibrium (this is possible since $\vphi\in\FMD$). This gives a finite set of finite-memory strategies such that every arena with up to $n$ nodes is determined in strategies from this set. It remains to set $\mathcal{M}$ to be the product of the memory skeletons of these strategies. Then all these strategies will be $\mathcal{M}$-strategies.

\medskip

We proceed to our main result. Let $\Omega(n)$ denote the set of functions $f\colon\mathbb{Z}^+\to\mathbb{Z}^+$ for which there exists $C > 0$ such that $f(n) \ge C n$ for all $n\in\mathbb{Z}^+$. We obtain the following lifting theorem:

\begin{thm}
\label{main_theorem}
 Consider any function $f\colon\mathbb{Z}^+\to\mathbb{Z}^+, f\notin\Omega(n)$. Define
$g\colon\mathbb{Z}^+\to\mathbb{Z}^+, g(n) = f\left(\min\left\{m\mid \frac{f(m)}{m + 1} \le \frac{1}{2n}\right\}\right)$.
Then $\oneFMD(f) \subseteq\FMD(g)$.
\end{thm}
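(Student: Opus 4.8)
The plan rests on two ingredients: an elementary computation showing that $g$ is well-defined and well-behaved, and a \emph{size-sensitive} sharpening of the lifting theorem of Bouyer et al.~\cite{bouyer2020games}.

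First I would check that $g$ makes sense. Since $f\notin\Omega(n)$, for every $C>0$ there is $m$ with $f(m)<Cm$, i.e.\ $\inf_m f(m)/m=0$; as $f(m)/(m+1)\le f(m)/m$ for all $m$, also $\inf_m f(m)/(m+1)=0$. Hence for every $n\in\mathbb{Z}^+$ the set $\{m\mid f(m)/(m+1)\le 1/(2n)\}$ is nonempty, so $m_n:=\min\{m\mid f(m)/(m+1)\le 1/(2n)\}$ and $g(n)=f(m_n)$ are well-defined positive integers. Moreover $g(n)=f(m_n)\le (m_n+1)/(2n)$, and therefore $n\cdot g(n)\le (m_n+1)/2\le m_n$ (using $m_n\ge 1$). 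This inequality $n\cdot g(n)\le m_n$ is the only arithmetical fact about $g$ that will be used.

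Now fix $\vphi\in\oneFMD(f)$ and $n\in\mathbb{Z}^+$; the goal is a single memory skeleton with at most $g(n)$ states that is good for all arenas with at most $n$ nodes (say $\mathcal{M}$ is \emph{good} for a class of arenas if each of them admits an equilibrium of two $\mathcal{M}$-strategies for $\vphi$). Applying the definition of $\oneFMD(f)$ with parameter $m_n$, we obtain a skeleton $\mathcal{M}$ with at most $f(m_n)=g(n)$ states that is good for all one-player arenas with at most $m_n$ nodes. Since $|\mathcal{M}|\le g(n)$ and $n\cdot g(n)\le m_n$, we get $n\cdot|\mathcal{M}|\le m_n$, so in particular $\mathcal{M}$ is good for all one-player arenas with at most $n\cdot|\mathcal{M}|$ nodes. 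The theorem then follows from the claim: \emph{whenever a memory skeleton $\mathcal{M}$ is good for all one-player arenas with at most $n\cdot|\mathcal{M}|$ nodes, it is good for all arenas with at most $n$ nodes.} Indeed, this makes the $\mathcal{M}$ just produced a witness for $\vphi\in\FMD(g)$ at level $n$, and $n$ was arbitrary.

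It remains to establish the claim, which is a size-aware version of the lifting theorem of Bouyer et al., and this is where I expect the real work to be. The natural route is to revisit their proof and keep track of the sizes of the one-player arenas it invokes. The central gadget is the product arena $G\ltimes\mathcal{M}$, whose vertices are pairs (vertex of $G$, state of $\mathcal{M}$) with the $\mathcal{M}$-component driven by the colors read along edges: it has at most $n\cdot|\mathcal{M}|$ vertices when $G$ has at most $n$, its plays from $(v,s_0)$ biject with the plays of $G$ from $v$ preserving color sequences, and its positional strategies are exactly the $\mathcal{M}$-strategies of $G$; consequently an equilibrium of two positional strategies in $G\ltimes\mathcal{M}$ translates back into an equilibrium of two $\mathcal{M}$-strategies in $G$. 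One then replays the Gimbert--Zielonka~\cite{gimbert2005games} induction (on the number of edges) inside $G\ltimes\mathcal{M}$: the inductive step only deletes edges and so never enlarges the vertex set, while the base case (a one-player arena) and the combination step (fixing one player's positional strategy and analysing the induced one-player arena) appeal solely to one-player arenas that are sub-arenas of $G\ltimes\mathcal{M}$ — or of $G$ with one player's $\mathcal{M}$-strategy fixed — hence have at most $n\cdot|\mathcal{M}|$ vertices, which is exactly what the hypothesis of the claim supplies. The main obstacle is that Bouyer et al.'s argument is not phrased with any size bookkeeping, so one has to go into it and verify that every one-player arena it uses really does sit inside the product $G\ltimes\mathcal{M}$ (a harmless additive slack, if any, being absorbed by the factor $2$ hidden in the definition of $m_n$); once this is confirmed, the inequality $n\cdot|\mathcal{M}|\le m_n$ from the first paragraph closes everything.
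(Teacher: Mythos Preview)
Your outline matches the paper's: reduce to a size-sensitive lifting lemma (the paper's Theorem~\ref{lifting}: if $\mathcal{M}$ works for all one-player arenas with at most $2n|\mathcal{M}|-1$ nodes, it works for all arenas with at most $n$ nodes), and the arithmetic with $m_n$ is exactly as in the paper. Two imprecisions are worth flagging.

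First, the one-player arenas used in the Gimbert--Zielonka induction are \emph{not} sub-arenas of the product. The combination step glues two copies of the current arena (each restricted by one of the inductively obtained positional Min-strategies $\tau_1,\tau_2$) at the splitting node $w$, producing a one-player arena $\mathcal{B}$ with $2N-1$ nodes where $N=n\cdot|\mathcal{M}|$. So the slack is a genuine multiplicative factor $2$, not additive; but you anticipated this, and indeed your inequality $g(n)\le(m_n+1)/(2n)$ gives exactly $2n|\mathcal{M}|-1\le m_n$, which is what Theorem~\ref{lifting} needs.

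Second, and more substantively: the glued arena $\mathcal{B}$ is not itself of the form $\mathcal{M}\times\mathcal{A}'$ for any $\mathcal{A}'$, so you cannot directly convert the hypothesis ``$\mathcal{M}$ is good for one-player arenas of this size'' into ``$\mathcal{B}$ has a positional equilibrium''. The paper closes this by running the induction not over product arenas but over the wider class of \emph{$\mathcal{M}$-trivial} arenas (arenas equipped with a homomorphism $f$ into the transition diagram of $\mathcal{M}$), in which any $\mathcal{M}$-strategy coincides with a positional one on plays starting from $f^{-1}(m_{init})$. Both product arenas and the glued $\mathcal{B}$ are $\mathcal{M}$-trivial, so the induction goes through; this also forces weakening the equilibrium notion from ``uniform'' to ``$f^{-1}(m_{init})$-wise'', which is still enough to pull back to a uniform $\mathcal{M}$-equilibrium in the original arena via Observation~\ref{prod_obs}.
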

First, why is the function $g$ well-defined? Since $f\notin \Omega(n)$, the fraction $f(m)/m$ gets arbitrarily close to $0$ for some $m$. Hence, the minimum in the definition of $g$ is always over a non-empty set.

Now consider the case when, as in the lifting theorem of Bouyer et al., the function $f$ is constant, that is $f(n) = k$ for some constant $k\in\mathbb{Z}^+$ and for all $n\in\mathbb{Z}^+$. Then we have $g(n) = k$  for all $n\in\mathbb{Z}^+$ as well. That is, our main results implies the equality $\oneFMD(k) = \FMD(k)$, and this equality is the lifting theorem of Bouyer et al.

It is instructive to consider an example when $f\notin\Omega(n)$ and is super-constant. Say, assume that $f(n) = O(n^\gamma)$ for some $\gamma < 1$. It is easy to see that then $g(n) = O(n^{\gamma/(1 - \gamma)})$. Now there is a gap between memory complexity in one-player arenas and in all arenas. The closer $\gamma$ is to $1$, the larger is this gap.

When $\gamma$ gets equal to $1$, Theorem \ref{main_theorem} becomes inapplicable. We demonstrate that this is not due to the weakness of our technique.

\begin{thm}
\label{thm:sharp}
 $\oneFMD(2n + 2) \nsubseteq \FMD$.
\end{thm}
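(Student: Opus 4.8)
The plan is to construct a single payoff function $\vphi$ on a finite color set $C$ that is witnessed to lie in $\oneFMD(2n+2)$ but is not finite-memory determined on all arenas. The essential idea is to pack an unbounded ``counting'' task into the payoff in such a way that it only becomes visible in genuinely two-player arenas, while in every one-player arena the relevant counter is bounded by roughly the number of nodes (hence by the arena size, giving the linear $2n+2$ bound), and so finite memory linear in $n$ suffices there. The counterexample of Bouyer et al.\ already shows the failure of lifting for finite memory using a $2$-node arena; what we need here is to combine their obstruction with a parametrized family so that the one-player memory complexity is exactly linear rather than constant.

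Concretely, I would take the colors to encode a ``request'' and a ``grant'' mechanism, together with an integer parameter that is only controllable when both players genuinely interact. The target behaviour is: Max wins (gets the top value of $\mathcal{W}$) iff infinitely many requests at level $k$ are eventually followed by a grant at level $\ge k$, for every $k$ that occurs infinitely often; in a suitable two-player arena Min can force Max to need a counter of unbounded size (the standard trick behind non-finite-memory-determinacy, as in the concrete $2$-node example), so $\vphi\notin\FMD$. The key point to engineer is the one-player side: if, say, Min has no choice (the arena is a fixed infinite path once we pass a finite prefix, i.e.\ a lasso), then along the simple cycle only finitely many distinct ``levels'' appear --- at most $n$ of them --- so the optimal Max-strategy only ever needs to track which of these $\le n$ levels it currently owes a grant for, plus a constant amount of bookkeeping; this gives a memory skeleton of size $\le 2n+2$. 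The symmetric case (Max has no choice) is even easier since then Max has nothing to do and the unique Min strategy is trivially positional, so one needs only to check that Min's forced play is evaluated correctly, which again reduces to lasso analysis. One then invokes the finiteness of $C$ to note there are only finitely many candidate skeletons of size $\le 2n+2$, but here we don't even need that reduction because we are directly exhibiting a skeleton of the right size for each $n$.

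The main steps, in order, are: \textbf{(1)} fix $C$ and define $\vphi\colon C^\omega\to\mathcal{W}$ precisely, with $\mathcal{W}=\{0,1\}$ say, so that the ``all-levels-eventually-granted'' condition is a genuine $\omega$-regular-flavoured but non-finite-memory-determined objective when combined with the arena gadget below; \textbf{(2)} build, for the lower bound $\vphi\notin\FMD$, a fixed two-player arena (ideally as small as possible, reusing the Bouyer et al.\ gadget) in which no finite-memory strategy for Max is optimal, because Min can diagonalize against any bound on memory by issuing a request at a level higher than Max's memory can distinguish; \textbf{(3)} for the upper bound $\vphi\in\oneFMD(2n+2)$, take an arbitrary one-player arena $\mathcal{A}$ with at most $n$ nodes, reduce optimality to the existence of an optimal lasso (path to cycle), observe that such a lasso visits at most $n$ nodes and hence carries at most $n$ distinct ``levels'', and describe explicitly a memory skeleton $\mathcal{M}_n$ with $\le 2n+2$ states --- one ``idle'' state, one ``rejecting'' sink, and for each level a ``pending grant'' state, collapsed appropriately --- such that the unique free player has an $\mathcal{M}_n$-optimal response; \textbf{(4)} conclude $\oneFMD(2n+2)\nsubseteq\FMD$.

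I expect step \textbf{(2)} to be the main obstacle: one must design the arena and the level-assignment so that Min can genuinely force unbounded memory \emph{and} so that this same payoff, on one-player arenas, stays within linear memory --- the danger is that whatever makes Min powerful in the two-player arena also leaks into some one-player arena and pushes its memory complexity above linear or even to infinity. The resolution should be that Min's diagonalization power relies on Min choosing \emph{which} level to request in response to Max's move, a dependency that simply cannot exist when Min has no outgoing choices; and when Max has no outgoing choices there is nothing to diagonalize against. Making this separation airtight --- i.e.\ proving that the ``level alphabet'' actually seen along any cycle in an $n$-node one-player arena is bounded by $n$ and that tracking it is all the memory Max needs --- is the technical heart of the argument. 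A secondary subtlety is verifying the precise state count $2n+2$ rather than, say, $O(n)$; this just requires being slightly careful in merging the ``idle'', ``pending'', and ``rejecting'' states of the skeleton, and is the kind of bookkeeping the paper would spell out in full but which I would only sketch here.
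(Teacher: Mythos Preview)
Your high-level instinct is right --- you want a payoff whose optimal play requires an unbounded counter in a genuinely two-player arena, while in any one-player arena of size $n$ the counter is provably confined to $O(n)$ values. But the proposal, as it stands, has a real gap: you never actually define $\vphi$, and the ``request/grant at level $k$'' sketch does not obviously survive the constraint that $C$ is \emph{finite}. If levels are colors, there are only $|C|$ of them and Min cannot ``issue a request at a level higher than Max's memory can distinguish''; if levels are derived from the path, you have not said how, and then the whole one-player analysis (``at most $n$ distinct levels along a cycle'') is ungrounded. A second issue: the definition of $\oneFMD(f)$ demands one skeleton $\mathcal{M}$ of size $\le f(n)$ that works for \emph{all} one-player arenas with $\le n$ nodes, so your skeleton cannot be built from ``the levels that appear in this particular arena''; it must be arena-independent for each $n$.

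The paper resolves both issues with a concrete and rather different payoff. Take $C=\{-1,+1\}$ and set $\psi(c_1c_2\ldots)=1$ iff either the partial sums $\sum_{i\le n}c_i$ tend to $+\infty$, or they equal $0$ infinitely often. The ``level'' is the running sum --- unbounded even over a two-letter alphabet. For $\psi\notin\FMD$: in a two-node arena (a Max node with a $-1$-loop and an edge to a Min node with a $+1$-loop and an edge back), Max wins but only by counting how far above $0$ Min has pushed the sum; against any $s$-state strategy of Max, Min pumps the sum to $s+2$ and Max cannot return it to $0$ without looping forever. For $\psi\in\oneFMD(2n+2)$: the skeleton $\mathcal{M}_n$ simply stores the running sum while $|\text{sum}|\le n$, plus one absorbing ``overflow'' state --- that is $2n+2$ states, arena-independently. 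The one-player argument is \emph{not} a lasso argument; rather, one first strips the nodes from which a positive (resp.\ negative) cycle is reachable --- there the relevant player wins positionally --- and observes that in the remainder every path has weight in $[-n,n]$, so the truncated counter is exact. The winning condition then becomes a parity condition in the product $\mathcal{M}_n\times\mathcal{A}$ (visit state $0$ infinitely often, or a slight variant in the dual case), and positional determinacy of parity games yields the $\mathcal{M}_n$-strategy equilibrium. Your lasso reduction would not by itself give this; the cycle-sign peeling and the parity reduction are the missing technical ingredients.
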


This result shows the sharpness of Theorem \ref{main_theorem}.
Namely, in order to obtain at least some bound on the memory complexity in all arenas, the memory complexity in one-player arenas should be a function not from $\Omega(n)$. In other words, it should be sublinear on some infinite subsequence. In turn, when it is already just linear, we might have no finite-memory determinacy.

Thus, our paper pushes the technique of one-to-two-player lifting to its limit. Unfortunately, this limit turns out to be very low. We are not aware of a payoff function which has been considered in the literature and to which one can apply Theorem \ref{main_theorem}, but which is not arena-independent finite-memory determined. For example, let us consider unbounded multidimensional games -- as we have indicated, they are finite-memory determined but not arena-independently. As shown in~\cite{jurdzinski2015fixed}, these games are in $\FMD(n^{O(1)})$. Here the constant in $O(1)$ depends on the dimension and the maximum of the norms of the weights.  In any case, this bound is not sufficient for Theorem \ref{main_theorem}.

Still, we provide an example of a payoff function to which our lifting theorem is applicable and the lifting theorem of Bouyer et al.~is not.
\begin{thm} 
\label{thm:example}
There exists a function $f\colon\mathbb{Z}^+\to\mathbb{Z}^+$ with $f\notin\Omega(n)$ and a payoff function from $\oneFMD(f)$ which is not arena-independent finite-memory determined.
\end{thm}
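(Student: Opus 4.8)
The plan is to reduce the statement to the construction of a single payoff function $\vphi$ whose \emph{one-player} chromatic memory complexity $n\mapsto S_n$ --- the least number of states of a memory skeleton that suffices for all one-player arenas with at most $n$ nodes --- is finite for every $n$, tends to infinity, and is not in $\Omega(n)$. Such a $\vphi$ immediately witnesses the theorem. First, taking $f=S$ we get $f\notin\Omega(n)$ and $\vphi\in\oneFMD(f)$ by the very definition of $S_n$. Second, $\vphi$ cannot be arena-independent finite-memory determined: arena-independence would give $\vphi\in\FMD(k)\subseteq\oneFMD(k)$ for some constant $k$ (recall $\FMD(k)=\oneFMD(k)$ by the lifting theorem of Bouyer et al.), hence $S_n\le k$ for all $n$, contradicting $S_n\to\infty$. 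As a bonus, feeding $f=S$ into Theorem~\ref{main_theorem} shows that $\vphi$ is nonetheless finite-memory determined in \emph{all} arenas, with super-constant memory complexity, so such a $\vphi$ sits strictly between the arena-independent regime and the regime of Theorem~\ref{thm:sharp}.

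So everything reduces to building $\vphi$ with $S_n$ finite, unbounded and sublinear, say $S_n=\Theta(\log n)$. A natural route is to reuse the gadget machinery behind Theorem~\ref{thm:sharp}: the payoff $\vphi$ used there is assembled from gadgets indexed by $k$, where gadget $k$ fits inside a one-player arena on $\Theta(k)$ nodes and forces the free player there to use $\Theta(k)$ states of memory, the gadget skeletons being organised so that one skeleton of $\Theta(n)$ states handles all one-player arenas of size at most $n$ --- this is what produces the linear bound $2n+2$. To push the one-player memory complexity down to $\Theta(\log n)$ we ``stretch'' the indexing: we reserve a fresh block of colours and require that gadget $k$ only becomes active along a play once that play has exhibited, infinitely often, a pattern of ``rank $k$'', the patterns being chosen so that (i) no one-player arena on fewer than roughly $2^k$ nodes admits an \emph{optimal} play whose rank reaches $k$, yet (ii) once an arena has enough nodes to realise rank $k$, rank $k$ together with gadget $k$ is achievable with $\Theta(k)$ memory, and no less. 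Then a one-player arena on $n$ nodes can only profit from gadgets of index $O(\log n)$, giving $S_n=O(\log n)$, while for each $k$ a dedicated one-player arena on $O(2^k)$ nodes still needs $\Omega(k)$ memory, giving $S_n\to\infty$; and $f(n)=\Theta(\log n)$ fed into Theorem~\ref{main_theorem} keeps the two-player memory complexity finite, as it must be.

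The crux --- and the step I expect to be the hardest --- is to make (i) and (ii) hold simultaneously. In one-player arenas there is no adversary, so any constraint that concerns only a bounded initial segment of a play, or more generally can be met by a small looping sub-arena, is essentially free; the ``rank'' of a play must therefore be pinned down by a genuinely infinitary condition, one that speaks about arbitrarily late behaviour. But there is a fine line: the condition must be strong enough that small arenas cannot reach high rank \emph{optimally}, and at the same time weak enough that the value of \emph{every} one-player arena is still attained by some finite-memory strategy --- otherwise $S_n$ would be infinite and $\vphi$ would fail to be finite-memory determined at all. Verifying that the chosen rank condition threads this needle, and then proving the matching bounds --- the $O(\log n)$ upper bound for \emph{arbitrary} one-player arenas on $n$ nodes, not merely the witness arenas, and the $\Omega(k)$ lower bound for the witness arenas (the latter inherited from the gadget lower bounds of Theorem~\ref{thm:sharp}) --- is where the real work lies. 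The remaining ingredients (that the stretched $\vphi$ is a well-defined payoff into a linear order, that the gadget lower bounds survive the stretching, and that the skeletons still nest into a single skeleton of $\Theta(\log n)$ states per $n$) are routine adaptations of the corresponding parts of the proof of Theorem~\ref{thm:sharp}.
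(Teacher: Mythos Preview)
Your opening reduction is fine (and you do not even need the lifting theorem of Bouyer et al.: the trivial inclusion $\FMD(k)\subseteq\oneFMD(k)$ already forces $S_n\le k$ from arena-independence). The gap is that after this reduction you have no construction. The ``rank'' condition is never defined, ``gadget $k$'' is never specified, and you yourself flag that verifying (i) and (ii) simultaneously ``is where the real work lies''. That \emph{is} the theorem; without it the proposal is a hope, not a plan. Two smaller issues compound this. First, the payoff $\psi$ of Theorem~\ref{thm:sharp} is a single closed-form condition on partial sums, not something ``assembled from gadgets indexed by $k$'', so it is unclear what object you intend to stretch. Second, the lower bound in Theorem~\ref{thm:sharp} is a \emph{two-player} lower bound (the witness arena has a genuine Min node), so the \emph{one-player} lower bound you need on $S_n$ cannot simply be ``inherited'' from it.

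The paper's construction is unrelated to the payoff of Theorem~\ref{thm:sharp}. Over $C=\{0,1\}$, fix $T\subseteq\mathbb{Z}^+$ and set $\vphi(\alpha)=1$ iff $\alpha$ has only finitely many $0$'s or contains $01^t0$ as a factor for some $t\in T$. Taking $T$ very spread out (e.g.\ $T=\{2^{4^n}:n\in\mathbb{Z}^+\}$) achieves both halves. For the upper bound: for each $k\in T$ with no other element of $T$ between $k$ and $k^4$, a skeleton $\mathcal{M}_k$ with $k+O(1)$ states that merely counts $1$'s since the last $0$, capped at $k$, suffices for \emph{all} arenas (not just one-player) with at most $k^2$ nodes. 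The reason is that in the product arena $\mathcal{M}_k\times\mathcal{A}$, which has fewer than $k^4$ nodes, any play against Min's candidate optimal strategy containing $01^t0$ with $t\in T$, $t>k$, would have $t\ge k^4$ and hence exhibit a reachable all-$1$'s cycle, already winning for Max in the auxiliary game that only tracks $t\le k$; so this auxiliary game, a parity game on $\mathcal{M}_k\times\mathcal{A}$, has the same equilibria as $\vphi$. This gives $f(n)\le\sqrt{n}+O(1)$ infinitely often, so $f\notin\Omega(n)$. For the lower bound: for every $m$ there is $k\in T$ with no other element of $T$ within distance $m$, and one builds a one-player arena in which $m$ starting nodes feed words $01^i$ ($1\le i\le m$) into a common central node, from which Max must choose a continuation $1^{k-i}0^\omega$; Max wins from the $i$th start only by picking the matching $i$, so any sufficient skeleton needs at least $m$ states. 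Thus the role you wanted a ``rank condition'' to play is played here, for free, by the gaps in $T$.
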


\subsection{Other related works and concluding remarks}

First, the exact analogs of the theorems of Gimbert and Zielonka and Bouyer et al.~for \emph{stochastic} games were obtained in other works of these authors~\cite{gimbert2016pure,bouyer2021arena}. We find it plausible that our result can be lifted to stochastic games as well. Le Roux and Pauly~\cite{le2018extending} obtained a \emph{two-to-many-players} lifting theorem. Namely, they show that, under some conditions, two-player finite-memory determinacy implies that all multiplayer games have finite-memory Nash equilibrium. A different approach to study finite-memory determinacy can be found in~\cite{le2018extending2}.

A natural open question is to extend lifting theorems to strategies with non-chromatic finite memory. As we mentioned, Le Roux~\cite{le2020time} has shown that non-chromatic finite memory can always be replaced by the chromatic one. Unfortunately, this transformation is rather costly -- the size of the memory grows exponentially in the number of nodes. So even the following modest question seems to be open: is there a payoff function which has constant non-chromatic memory complexity in one-player games but is not finite-memory determined in two-player games?

\bigskip

\textbf{Organization of the paper.} In Section \ref{sec:prel} we give preliminaries. In Section \ref{sec:tech} we give brief overviews of the proofs of our results. The proof of Theorem \ref{main_theorem} is given in Sections \ref{sec:special_case}--\ref{sec:main_proof}. Theorem \ref{thm:sharp} is proved in Section \ref{sec:sharp}. Theorem \ref{thm:example} is proved in Section \ref{sec:example}.

\section{Preliminaries}
\label{sec:prel}

\textbf{Notation.} We denote the set of positive integer numbers by $\mathbb{Z}^+$. Given a set $A$, by $A^*$ and $A^\omega$ we denote the sets of finite and, respectively, infinite sequences of elements of $A$. The length of a sequence $x\in A^*\cup A^\omega$ is denoted by $|x|$. We write $A = B\sqcup C$ for three sets $A, B, C$ if $A = B\cup C$ and $B\cap C = \varnothing$. Function composition is denoted by $\circ$.

\subsection{Arenas}

Following previous papers~\cite{gimbert2005games,gimbert2016pure,bouyer2020games,bouyer2021arena},  we call graphs on which our games are played \emph{arenas}. We start with some notation regarding arenas. First, take an arbitrary finite set $C$. We will refer to the elements of $C$ as \emph{colors}. Informally, an arena is just a directed graph with edges colored by elements of $C$ and with nodes partitioned into two sets.

\begin{defi}
\label{def:arena}
A tuple $\mathcal{A} = \langle V, V_\Max, V_\Min, E, \source, \target, \col\rangle$, where
\begin{itemize}
\item $V, V_\Max, V_\Min, E$ are four finite sets with $V = V_\Max\sqcup V_\Min$;
\item $\source, \target, \col$ are functions of the form $\source\colon E\to V, \target\colon E\to V, \col\colon E\to C$;
\end{itemize}
is called an \textbf{arena} if for every $v\in V$ there exists $e\in E$ with $v = \source(e)$.
\end{defi}
Elements of $V$ will be called \textbf{nodes} of $\mathcal{A}$ and elements of $E$ will be called \textbf{edges} of $\mathcal{A}$. We understand $e\in E$ as a directed edge from the node $\source(e)$ to the node $\target(e)$. There might be parallel edges and loops. Additionally, every edge $e$ of $\mathcal{A}$ is labeled by the color $\col(e)\in C$. Nodes from $V_\Max$ will be called nodes of Max and nodes from $V_\Min$ will be called nodes of Min. The out-degree of a node $v\in V$ is $|\{e\in E\mid \source(e) = v\}|$. By definition, every node in every arena has positive out-degree. An arena is called \textbf{one-player} if either all nodes of Max have out-degree $1$ or all nodes of Min have out-degree $1$.

Fix an arena $\mathcal{A} = \langle V, V_\Max, V_\Min, E, \source, \target, \col\rangle$. We extend the function $\col$ (which determines the coloring of the edges) to arbitrary sequences of edges by setting:
$\col(e_1 e_2 e_3\ldots) = \col(e_1) \col(e_2)\col(e_3)\ldots$ for $e_1, e_2, e_3,\ldots\in E$.

A non-empty sequence of edges $h = e_1 e_2 e_3 \ldots \in E^* \cup E^\omega$ is called a \textbf{path} if for every $1 \le n < |h|$ we have $\target(e_n) = \source(e_{n + 1})$. We define $\source(h) = \source(e_1)$. When $h$ is finite, we define $\target(h) = \target(e_{|h|})$. In addition, for every $v\in V$ we consider a $0$-length path $\lambda_v$ identified with $v$, for which we set $\source(\lambda_v) = \target(\lambda_v) = v$.  For every $v\in V$ we define $\col(\lambda_v)$ as the empty string.

\subsection{Infinite-duration games on arenas}
An arena $\mathcal{A} = \langle V, V_\Max, V_\Min, E, \source, \target, \col\rangle$ induces an infinite-duration two-player game in the following way. First, we call players of this game Max and Min. Informally, Max and Min interact by gradually constructing a longer and longer path in $\mathcal{A}$. In each turn one of the players extends a current path by some edge from its endpoint. Which of the two players is the one to move is determined by whether this endpoint belongs to $V_\Max$ or to $V_\Min$.

Formally, positions in the game are finite paths in $\mathcal{A}$. By definition, $\target(h)\in V_\Max$ for a finite path $h$ means that Max is the one to move in the position $h$; respectively, $\target(h)\in V_\Min$ means that Min is the one to move in the position $h$. A set of moves available in a position $h$ is the set $\{e\in E\mid \source(e) = \target(h)\}$.
 Making a move $e\in E$ in a position $h = e_1 e_2 \ldots e_{|h|}$ brings to a position $he = e_1 e_2 \ldots e_{|h|}e$.

We stress that no position is designated as the initial one. We assume that the game can start in any position of the form $\lambda_v, v\in V$, at our choice.

Next we proceed to a notion of strategies. Namely, a strategy of Max is a function 
\[\sigma\colon\{h\mid h\mbox{ is a finite path in $\mathcal{A}$ with } \target(h) \in V_\Max\} \to E\]
such that for every $h$ from the domain of $\sigma$ we have $\source(\sigma(h)) = \target(h)$. Respectively, a strategy of Min is a function
\[\tau\colon\{h\mid h\mbox{ is a finite path in $\mathcal{A}$ with } \target(h) \in V_\Min\} \to E\]
such that for every $h$ from the domain of $\tau$ we have $\source(\tau(h)) = \target(h)$.

Observe that if $\mathcal{A}$ is one-player, then one of the players has exactly one strategy. For technical consistency we assume that even when one of the players owns all the nodes of $\mathcal{A}$, the other player still has one ``empty'' strategy.

A strategy induces a set of positions \emph{consistent} with it (those that can be reached in a play against this strategy).
Formally, a finite path $h = e_1 e_2 \ldots e_{|h|}$ is consistent with a strategy $\sigma$ of Max if the following conditions hold:
\begin{itemize}
\item $\source(h)\in V_\Max \implies \sigma(\lambda_{\source(h)}) = e_1$;
\item for every $1\le i < |h|$ we have $\target(e_1 e_2\ldots e_i)\in V_\Max \implies \sigma(e_1 e_2\ldots e_i) = e_{i + 1}$.
\end{itemize}
Consistency with the strategies of Min is defined similarly. Further, the notion of consistency can be extended to infinite paths. Namely, given a strategy, an infinite path is consistent with it if all finite prefixes of this path are.

For $v\in V$ and for a strategy $\mathcal{S}$ of one of the players $\Cons(v, \mathcal{S})$ denotes the set of all finite and infinite paths that start at $v$ and are consistent with $\mathcal{S}$. For any strategy $\sigma$ of Max, strategy $\tau$ of Min and $v\in V$, there is a unique infinite path in the intersection $\Cons(v, \sigma)\cap \Cons(v,\tau)$. We denote this path by $h(v,\sigma,\tau)$ and call it \emph{the play} of $\sigma$ and $\tau$ from $v$.

\subsection{Payoff functions and equilibria}

We consider only zero-sum games; correspondingly, in our framework objectives of the players are always given by a \emph{payoff function}. A payoff function is any function of the form $\vphi\colon C^\omega \to \mathcal{W}$, where $(\mathcal{W}, \le)$ is a linearly ordered set. Informally, the aim of Max is to play in a way which maximizes the payoff function (with respect to the ordering of $\mathcal{W}$) while the aim of Min is the opposite one. Technically, to get the value of the payoff function on a play (which is an infinite path in the underlying arena) we first apply the function $\col$ to this play; this gives us an infinite sequence of colors; in conclusion, we apply $\vphi$ to the sequence of colors.

Any payoff function in a standard way induces a notion of an $\emph{equilibrium}$ of two strategies of the players (with respect to this payoff function). Let us first introduce a notion of an \emph{optimal response}. Namely, 
take a strategy $\sigma$ of Max and a strategy $\tau$ of Min. We say that $\sigma$ is a \textbf{uniformly optimal response} to $\tau$ if for all $v\in V$ and for all infinite $h\in \Cons(v,\tau)$ we have $\vphi\circ \col\big(h(v, \sigma,\tau)\big) \ge \vphi\circ\col(h)$. The inequality here, of course, is with respect to the ordering of $\mathcal{W}$.
Similarly, we call $\tau$ a \textbf{uniformly optimal response} to $\sigma$ if for all $v\in V$ and for all infinite $h\in \Cons(v,\sigma)$ we have $\vphi\circ \col\big(h(v, \sigma,\tau)\big) \le \vphi\circ\col(h)$.
Next, we call a pair $(\sigma, \tau)$ a \textbf{uniform equilibrium} if $\sigma$ and $\tau$ are  uniformly optimal responses to each other.
\begin{lem}
\label{cartesian}
 For any arena $\mathcal{A}$ and for any payoff function  $\vphi$, the set uniform equilibria in $\mathcal{A}$ w.r.t.~$\vphi$ is a Cartesian product. 
\end{lem}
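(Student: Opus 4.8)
The plan is to prove the statement by establishing the standard ``rectangularity'' (exchange) property: if $(\sigma_1,\tau_1)$ and $(\sigma_2,\tau_2)$ are both uniform equilibria in $\mathcal{A}$ with respect to $\vphi$, then $(\sigma_1,\tau_2)$ is also a uniform equilibrium. Once this exchange property is known, the set of uniform equilibria is automatically a Cartesian product: it equals $\{\sigma : (\sigma,\tau)\text{ is a uniform eq.\ for some }\tau\} \times \{\tau : (\sigma,\tau)\text{ is a uniform eq.\ for some }\sigma\}$. So the whole proof reduces to verifying the exchange step, and that is where essentially all the content lies.

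To carry out the exchange step, I would argue as follows. Fix $v\in V$ and let $h^{\ast} = h(v,\sigma_1,\tau_2)$ be the play of $\sigma_1$ against $\tau_2$ from $v$. I need two inequalities: $\vphi\circ\col(h^{\ast}) \ge \vphi\circ\col(h)$ for every infinite $h\in\Cons(v,\tau_2)$ (so that $\sigma_1$ is a uniformly optimal response to $\tau_2$), and symmetrically $\vphi\circ\col(h^{\ast}) \le \vphi\circ\col(h)$ for every infinite $h\in\Cons(v,\sigma_1)$. The key intermediate observation is that the single play $h^{\ast}$ has a ``pivot value'' that can be squeezed between the two equilibria. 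Concretely: since $h^{\ast}\in\Cons(v,\sigma_1)$ and $\tau_1$ is a uniformly optimal response to $\sigma_1$, we get $\vphi\circ\col(h^{\ast}) \ge \vphi\circ\col\big(h(v,\sigma_1,\tau_1)\big)$; and since $h^{\ast}\in\Cons(v,\tau_2)$ and $\sigma_2$ is a uniformly optimal response to $\tau_2$, we get $\vphi\circ\col(h^{\ast}) \le \vphi\circ\col\big(h(v,\sigma_2,\tau_2)\big)$. Meanwhile, for any infinite $h\in\Cons(v,\tau_2)$, optimality of $\sigma_2$ against $\tau_2$ gives $\vphi\circ\col\big(h(v,\sigma_2,\tau_2)\big)\ge\vphi\circ\col(h)$, and for any infinite $h\in\Cons(v,\sigma_1)$, optimality of $\tau_1$ against $\sigma_1$ gives $\vphi\circ\col\big(h(v,\sigma_1,\tau_1)\big)\le\vphi\circ\col(h)$. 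Chaining these four inequalities through the pivot $h^{\ast}$ delivers exactly the two required bounds, for every $v$, which is precisely the definition of $(\sigma_1,\tau_2)$ being a uniform equilibrium.

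The remaining detail is to turn the exchange property into the Cartesian-product conclusion cleanly. I would let $P$ be the set of uniform equilibria, $\Sigma = \{\sigma : \exists\tau,\ (\sigma,\tau)\in P\}$, and $T = \{\tau : \exists\sigma,\ (\sigma,\tau)\in P\}$. Clearly $P\subseteq\Sigma\times T$. For the reverse inclusion, take $\sigma\in\Sigma$ and $\tau\in T$, pick witnesses $(\sigma,\tau')\in P$ and $(\sigma',\tau)\in P$, and apply the exchange property (with the roles as above, identifying $\sigma_1=\sigma,\tau_1=\tau',\sigma_2=\sigma',\tau_2=\tau$) to conclude $(\sigma,\tau)\in P$. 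Hence $P = \Sigma\times T$.

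I expect the main (and only real) obstacle to be bookkeeping: making sure the ``consistency'' memberships line up correctly so that each of the four optimality inequalities is applied to a play or a path for which it is actually valid — in particular that $h^{\ast}$ lies in both $\Cons(v,\sigma_1)$ and $\Cons(v,\tau_2)$, which is immediate from the definition of the play $h(v,\sigma_1,\tau_2)$, and that the quantification over all $v$ and all consistent infinite $h$ matches the ``uniform'' requirement in the definition. There is no analytic difficulty; everything is a finite chain of inequalities in the linear order $(\mathcal{W},\le)$. A subtle point worth stating explicitly is that this argument uses only the order structure of $\mathcal{W}$ and the definition of uniformly optimal response, so it holds for an arbitrary payoff function with no hypotheses on $\vphi$, exactly as claimed in Lemma~\ref{cartesian}.
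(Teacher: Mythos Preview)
Your overall plan is exactly the paper's: reduce to the exchange property and then argue that $P=\Sigma\times T$. However, the chain of inequalities you write down does \emph{not} close. With $p=\vphi\circ\col\big(h(v,\sigma_1,\tau_2)\big)$, $a=\vphi\circ\col\big(h(v,\sigma_1,\tau_1)\big)$, $b=\vphi\circ\col\big(h(v,\sigma_2,\tau_2)\big)$, your four displayed facts are $p\ge a$, $p\le b$, $b\ge\vphi\circ\col(h)$ for $h\in\Cons(v,\tau_2)$, and $a\le\vphi\circ\col(h)$ for $h\in\Cons(v,\sigma_1)$. To show that $\sigma_1$ is a uniformly optimal response to $\tau_2$ you need $p\ge\vphi\circ\col(h)$ for $h\in\Cons(v,\tau_2)$; but from $p\le b$ and $b\ge\vphi\circ\col(h)$ nothing follows about $p$ versus $\vphi\circ\col(h)$. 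Symmetrically, $p\ge a$ together with $a\le\vphi\circ\col(h)$ does not yield $p\le\vphi\circ\col(h)$. Your inequalities only give $a\le p\le b$, and that is not enough.

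What is missing is the use of the two optimality conditions you never invoke: $\sigma_1$ is optimal against $\tau_1$, and $\tau_2$ is optimal against $\sigma_2$. Passing through the fourth play $h(v,\sigma_2,\tau_1)$ (which lies in both $\Cons(v,\tau_1)$ and $\Cons(v,\sigma_2)$), these give
\[
a \;\ge\; \vphi\circ\col\big(h(v,\sigma_2,\tau_1)\big) \;\ge\; b,
\]
so that $a=p=b$. Only then does the chaining work: $p=b\ge\vphi\circ\col(h)$ for $h\in\Cons(v,\tau_2)$, and $p=a\le\vphi\circ\col(h)$ for $h\in\Cons(v,\sigma_1)$. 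This is precisely how the paper argues (in the slightly more general $T$-wise setting). Your ``Cartesian product'' wrap-up paragraph is fine.
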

\begin{proof}
See Subsection \ref{subsec:cartesian}.
\end{proof}

 Strategies which belong to some uniform equilibrium will be called \textbf{uniformly optimal}. 

\begin{rem}
Each payoff function induces a total preoder on $C^\omega$. Two payoff functions that induce the same preorder have the same set of equilibria. Due to this reason, previous papers in this line of work~\cite{gimbert2005games,gimbert2016pure,bouyer2020games,bouyer2021arena} do not consider payoff functions at all. Instead, they directly consider total preorders on $C^\omega$, to which they refer as \emph{preference relations}. We prefer to use a terminology of payoff functions, as it is more standard. Of course, this does not make our results less general -- any preference relation is induced by some payoff function. 
\end{rem}

\subsection{Positional strategies and finite-memory strategies}

\textbf{Positional strategies.} 
A strategy $\mathcal{S}$ of one of the players is called positional if for any two positions $h_1, h_2$ from its domain we have $\target(h_1) = \target(h_2) \implies \mathcal{S}(h_1) = \mathcal{S}(h_2)$. In other words,  $\mathcal{S}(h)$ depends solely on $\target(h)$. It makes convenient to consider positional strategies as functions on the set of nodes of the corresponding players (rather than on the set of the positions of this player). I.e., positional strategies of Max can be identified with functions of the form $\sigma \colon V_\Max\to E$ such that $\source(\sigma(v)) = v \mbox{ for all } v\in V_\Max$. Similarly, positional strategies of Min can be identified with functions of the form $\tau \colon V_\Min\to E$ such that $\source(\tau(v)) = v \mbox{ for all } v\in V_\Min$.

Let us fix some notation regarding positional strategies. First, every edge $e\in E$ is a path (of length $1$) and hence also a position in the game induced by $\mathcal{A}$. If $\mathcal{S}$ is a positional strategy of one of the players, we let $E_\mathcal{S}$ be the set of edges that are consistent with $\mathcal{S}$. Observe the following feature of positional strategies: the set of paths (positions) that are consistent with a positional strategy $\mathcal{S}$ is exactly the set of paths that consist only of edges from $E_\mathcal{S}$.

Given a positional strategy $\mathcal{S}$ of one of the players, by $\mathcal{A}_\mathcal{S}$ we denote the arena
\[
\mathcal{A}_\mathcal{S} = \langle V, V_\Max, V_\Min, E_\mathcal{S}, \source, \target, \col\rangle.\] That is, $\mathcal{A}_\mathcal{S}$ is obtained from $\mathcal{A}$ by deleting all edges that are inconsistent with $\mathcal{S}$. Observe that the arena $\mathcal{A}_\mathcal{S}$ is one-player; each node of the player who plays $\mathcal{S}$ has exactly one out-going edge in $\mathcal{A}_\mathcal{S}$.

Instead of saying ``an equilibrium of two positional strategies''  we will simply say ``a positional equilibrium''.

\medskip

\textbf{Finite-memory strategies.}  A memory skeleton is a deterministic finite automaton $\mathcal{M} = \langle M, m_{init}\in M, \delta\colon M\times C\to M\rangle$ whose input alphabet is the set $C$ of colors. Here $M$ is the set of states of $\mathcal{M}$, the state $m_{init}\in M$ is a designated initial state, and $\delta$ is the transition function of $\mathcal{M}$. By $|\mathcal{M}|$ we denote the number of states of a memory skeleton $\mathcal{M}$. Given $m\in M$, we extend $\delta(m,\cdot)$ to finite sequences of elements of $C$ in a standard way. Now, a strategy $\mathcal{S}$ of one of the players is called an $\mathcal{M}$-strategy if for any two positions $h_1$ and $h_2$ from the domain of $\mathcal{S}$ it holds that 
\[\big[\target(h_1) = \target(h_2) \mbox{ and }\delta(m_{init}, \col(h_1)) = \delta(m_{init}, \col(h_2)) \big] \implies \mathcal{S}(h_1) = \mathcal{S}(h_2).\]

In other words, $\mathcal{S}(h)$ depends solely on $\target(h)$ (the node with the token in the position $h$) and $\delta(m_{init}, \col(h))$ (the state into which $\mathcal{M}$ comes after reading the sequence of colors along $h$).

A strategy $\mathcal{S}$ of one of the players is called a finite-memory strategy if it is an $\mathcal{M}$-strategy for some memory skeleton $\mathcal{M}$.
Instead of saying ``an equilibrium of two finite-memory strategies'' or ``an equilibrium of two $\mathcal{M}$-strategies'' we will simply say ``a finite-memory equilibrium'' and ``an $\mathcal{M}$-strategy equilibrium''.

\subsection{Determinacy and memory complexity}

\begin{defi} Let $\mathcal{C}$ be a class of strategies. We say that a payoff function $\vphi$ is determined in $\mathcal{C}$ if every arena has a uniform equilibrium of two strategies from $\mathcal{C}$ w.r.t.~$\vphi$. In particular,
\begin{itemize}
\item if $\mathcal{C}$ is the class of positional strategies, then we call $\vphi$ \textbf{positionally determined}.

\item if $\mathcal{C}$ is the class of finite-memory strategies, then we call $\vphi$ \textbf{finite-memory determined}.

\item if $\mathcal{C}$ is the class of $\mathcal{M}$-strategies for some memory skeleton $\mathcal{M}$, then we call $\vphi$ \textbf{arena-independent finite-memory determined}.
\end{itemize}
\end{defi}

For our results it is important that we require equilibria to be uniform in these definitions. That is, it is important to have a single pair of strategies from $\mathcal{C}$ which is an equilibrium no matter in which node the game starts. As far as we know, this is the case for all positionally and finite-memory determined payoff functions that have been considered in the literature. 

Next we provide definitions regarding the memory complexity.
\begin{defi}

Let $\FMD$ denote the class of functions $\vphi\colon C^\omega\to\mathcal{W}$ such that $C$ is a finite set, $\mathcal{W}$ is linearly ordered and $\vphi$ is finite-memory determined.  Let $\oneFMD$ denote the class of functions $\vphi\colon C^\omega\to\mathcal{W}$ such that $C$ is a finite set, $\mathcal{W}$ is a linearly ordered set and such that the following holds: every one-player arena (with edges colored by elements of $C$) has a uniform finite-memory equilibrium w.r.t.~$\vphi$.

 Next,
consider any function $f\colon \mathbb{Z}^+\to\mathbb{Z}^+$. Let $\FMD(f)$ denote the class of functions $\vphi\colon C^\omega\to\mathcal{W}$ such that $C$ is a finite set, $\mathcal{W}$ is a linearly ordered set and such that the following holds: for all $n\in\mathbb{Z}^+$ there exists a memory skeleton $\mathcal{M}$ over the set $C$ with $|\mathcal{M}| \le f(n)$ such that all arenas (with edges colored by elements of $C$) with at most $n$ nodes have a uniform $\mathcal{M}$-strategy equilibrium w.r.t.~$\vphi$. Similarly, let $\oneFMD(f)$ denote the class of functions $\vphi\colon C^\omega\to\mathcal{W}$ such that $C$ is a finite set, $\mathcal{W}$ is a linearly ordered set and such that the following holds: for all $n\in\mathbb{Z}^+$ there exists a memory skeleton $\mathcal{M}$ over the set $C$ with $|\mathcal{M}| \le f(n)$ such that all one-player arenas (with edges colored by elements of $C$) with at most $n$ nodes have a uniform $\mathcal{M}$-strategy equilibrium w.r.t.~$\vphi$.

\end{defi}

\section{Overviews of the Proofs}
\label{sec:tech}

\subsection{Theorem \ref{main_theorem}}

First, let us give the exact statement of the lifting theorem of Bouyer et al.
\begin{thm}[\cite{bouyer2020games}]
\label{all_arenas_lifting}
For any payoff function $\vphi$ and for any memory skeleton $\mathcal{M}$ the following holds. 
 Assume that all one-player arenas have a uniform $\mathcal{M}$-strategy equilibrium w.r.t.~$\vphi$. Then all arenas have a uniform $\mathcal{M}$-strategy equilibrium  w.r.t.~$\vphi$.
\end{thm}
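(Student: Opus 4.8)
The plan is to reduce the statement to the positional case by a product construction, and then to re-run, in a memory-aware form, the proof of the positional lifting theorem of Gimbert and Zielonka~\cite{gimbert2005games}.

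\textbf{Step 1 (product arena).} Given an arena $\mathcal{A} = \langle V, V_\Max, V_\Min, E, \source, \target, \col\rangle$ and the skeleton $\mathcal{M} = \langle M, m_{init}, \delta\rangle$, form the product arena $\mathcal{A}\ltimes\mathcal{M}$ with nodes $V\times M$, with $(v,m)$ owned by Max iff $v\in V_\Max$, and with one edge of color $\col(e)$ from $(v,m)$ to $(\target(e),\delta(m,\col(e)))$ for each $e\in E$ with $\source(e)=v$. Then: (i) $\mathcal{A}\ltimes\mathcal{M}$ has at most $|V|\cdot|M|$ nodes and is one-player whenever $\mathcal{A}$ is; (ii) along every path starting in $V\times\{m_{init}\}$ the second coordinate of the current node is exactly the state reached by $\mathcal{M}$ on the colors read so far, so on the part reachable from $V\times\{m_{init}\}$ the positional strategies of $\mathcal{A}\ltimes\mathcal{M}$ are in bijection with the $\mathcal{M}$-strategies of $\mathcal{A}$; (iii) this bijection preserves plays and their color sequences, hence carries uniform positional equilibria of $\mathcal{A}\ltimes\mathcal{M}$ w.r.t.~$\vphi$ to uniform $\mathcal{M}$-strategy equilibria of $\mathcal{A}$ and back. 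So it suffices to show: if every one-player arena of the form $\mathcal{B}\ltimes\mathcal{M}$ has a uniform positional equilibrium w.r.t.~$\vphi$, then so does every arena of the form $\mathcal{A}\ltimes\mathcal{M}$; and the premise here follows from the hypothesis of the theorem via (i)--(iii).

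\textbf{Step 2 (combinatorial conditions).} One cannot simply quote Gimbert--Zielonka for Step 1's reduced statement: their theorem needs positional determinacy of \emph{all} one-player $C$-arenas, whereas the product-form ones are a proper subclass (and full positional determinacy of $\vphi$ is typically false). Instead I would follow their proof. From the one-player hypothesis I would derive two properties of $\vphi$ relative to $\mathcal{M}$, the memory-aware analogues of their \emph{monotony} and \emph{selectivity}: roughly, $\mathcal{M}$-monotony says that for each memory state the preferences of the two players among the continuations read from that state are compatible enough to be merged at a single node, and $\mathcal{M}$-selectivity says that in a one-player arena interleaving several cycles through a fixed pair $(v,m)$ is never better for the controlling player than committing to one best such cycle. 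Each condition is obtained by plugging a small, explicitly constructed one-player arena into the hypothesis and reading the desired inequality off of the $\mathcal{M}$-strategy equilibrium it must possess; Lemma~\ref{cartesian} helps by letting one speak of the optimal behaviour of each player separately.

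\textbf{Step 3 (induction on the arena --- the crux).} Using these two conditions I would prove, by induction on $\mu(\mathcal{A})=\sum_{v}(d(v)-1)$ where $d(v)$ is the out-degree of $v$, that every arena has a uniform $\mathcal{M}$-strategy equilibrium. The base case $\mu=0$ is a one-player arena, covered by the hypothesis. For the step, choose a node $v$ with $d(v)\ge 2$, say owned by Max, split its outgoing edges into two non-empty parts, and let $\mathcal{A}_1,\mathcal{A}_2$ be the arenas obtained by retaining only one part each; both have smaller $\mu$, hence uniform $\mathcal{M}$-strategy equilibria by induction. The core is to glue these into one for $\mathcal{A}$: compare, for every relevant memory state, the two candidate optimal strategies of Max ``read onwards from $(v,m)$'', keep at $v$ the better part, and verify that the result is still a uniformly optimal response --- here $\mathcal{M}$-monotony is used to combine the two regimes and $\mathcal{M}$-selectivity to exclude Min profiting by forcing oscillation between the parts --- while Min's strategy is inherited from whichever $\mathcal{A}_i$ is ``not worse''; Lemma~\ref{cartesian} ensures the pieces picked for the two players stay mutually consistent. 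This gluing is the main obstacle: it is already delicate for positional strategies, and the extra bookkeeping of the memory state --- together with the requirement that the \emph{same} skeleton $\mathcal{M}$ serve throughout, which is exactly what breaks down for unrestricted finite memory --- is what needs care.
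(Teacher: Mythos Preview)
Your plan is correct in outline: Steps 2--3 reconstruct the original Bouyer et al.\ argument, extracting $\mathcal{M}$-monotony and $\mathcal{M}$-selectivity from the one-player hypothesis and then running a Gimbert--Zielonka edge-splitting induction using these conditions. (Note that Step~1 is largely superfluous in your scheme: once Step~3 proves directly that ``every arena has a uniform $\mathcal{M}$-strategy equilibrium'', the product reduction of Step~1 is never used. Also, the equilibria you get in the product are only $(\{m_{init}\}\times V)$-wise, not uniform over all of $V\times M$; you gesture at this with ``on the part reachable from $V\times\{m_{init}\}$'', but it deserves an explicit $T$-wise formulation.)

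The paper takes a different route. Rather than abstracting to combinatorial conditions on $\vphi$, it stays with arenas throughout. The key device is the notion of an \emph{$\mathcal{M}$-trivial pair} $(\mathcal{A},f)$ --- an arena together with a map $f\colon V\to M$ compatible with $\delta$ --- in which $\mathcal{M}$-strategies degenerate to positional ones on plays starting in $f^{-1}(m_{init})$. The paper runs the Gimbert--Zielonka induction \emph{directly} on $\mathcal{M}$-trivial pairs (Lemma~\ref{main_lemma}): at the splitting step, the auxiliary one-player arena $\mathcal{B}$ (two copies of $\mathcal{A}$ glued at the split node $w$) is again $\mathcal{M}$-trivial, so the one-player hypothesis is applied to $\mathcal{B}$ itself, bypassing monotony/selectivity entirely. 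The product arena only enters at the very end, because $\mathcal{M}\times\mathcal{A}$ is $\mathcal{M}$-trivial.

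What each buys: your route yields, as a by-product, a \emph{characterisation} of determinacy in $\mathcal{M}$-strategies. The paper's route is more direct and, crucially, tracks arena sizes explicitly --- the auxiliary $\mathcal{B}$ has exactly $2|V|-1$ nodes --- which is precisely what delivers the quantitative Theorem~\ref{lifting}. The monotony/selectivity detour discards this size information, so your plan proves Theorem~\ref{all_arenas_lifting} but would not extend to the paper's strengthening.
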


Our main technical contribution is the following strengthening of Theorem \ref{all_arenas_lifting}.

\begin{thm}
\label{lifting} For any payoff function $\vphi$ and for any $n\in\mathbb{Z}^+$ the following holds.
 Let $\mathcal{M}$ be a memory skeleton such that all one-player arenas with at most $2 n\cdot |\mathcal{M}| - 1$ nodes have a uniform $\mathcal{M}$-strategy equilibrium w.r.t.~$\vphi$. Then all arenas with at most $n$ nodes have a uniform $\mathcal{M}$-strategy equilibrium w.r.t.~$\vphi$.
\end{thm}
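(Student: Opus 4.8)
The plan is to reopen the inductive argument that underlies the lifting theorem of Bouyer et al.\ (Theorem~\ref{all_arenas_lifting}) and to keep track, at every step, of the sizes of the one-player arenas it appeals to. A black-box use of Theorem~\ref{all_arenas_lifting} is not available, since its hypothesis quantifies over \emph{all} one-player arenas whereas here we only know about small ones.

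Fix $\mathcal{M}$ and $n$ and let $\mathcal{A}$ have at most $n$ nodes. I would work with the product arena $\mathcal{A}\otimes\mathcal{M}$, whose nodes are the pairs (node of $\mathcal{A}$, state of $\mathcal{M}$) and whose transitions track the action of $\mathcal{M}$ on colors: an $\mathcal{M}$-strategy in $\mathcal{A}$ is the same thing as a positional strategy in $\mathcal{A}\otimes\mathcal{M}$, the payoff is unaffected because it depends only on colors, and restricting $\mathcal{A}\otimes\mathcal{M}$ to a positional strategy of one player yields a one-player arena on at most $n|\mathcal{M}|$ nodes. The key bookkeeping observation is that in such a restricted arena the $\mathcal{M}$-component evolves exactly like the memory of an $\mathcal{M}$-strategy that reads the same colors; hence any $\mathcal{M}$-strategy (in particular a positional one) in the restricted arena reads back as an $\mathcal{M}$-strategy in $\mathcal{A}$.

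Then I would run the Gimbert--Zielonka induction on the number of edges of $\mathcal{A}$, with $\mathcal{M}$-strategies playing the role that positional strategies play there. If play is forced from every node, $\mathcal{A}$ is one-player and we are done by the hypothesis, since $n\le 2n|\mathcal{M}|-1$. Otherwise pick a node $v$ of, say, Max with out-degree at least $2$, split its out-edges into two non-empty parts, and let $\mathcal{A}_1,\mathcal{A}_2$ be the arenas that keep only one part at $v$; each has at most $n$ nodes and fewer edges, so the induction hypothesis gives $\mathcal{M}$-strategy equilibria $(\sigma_1,\tau_1),(\sigma_2,\tau_2)$ together with the value functions they induce on $\mathcal{A}_1\otimes\mathcal{M}$ and $\mathcal{A}_2\otimes\mathcal{M}$. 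Following Gimbert--Zielonka, I would glue these into a candidate equilibrium $(\sigma,\tau)$ for $\mathcal{A}$: at each (node, memory) pair Max plays the locally better of the two options and otherwise steers toward the better region, while Min interleaves $\tau_1$ and $\tau_2$, which in the worst case needs a second copy of $\mathcal{M}$'s states attached to the first along one shared state, i.e.\ $2|\mathcal{M}|-1$ states of internal memory. Hardwiring this interleaved Min strategy into $\mathcal{A}$ produces a one-player arena on at most $(2|\mathcal{M}|-1)\,n\le 2n|\mathcal{M}|-1$ nodes; the hypothesis supplies an $\mathcal{M}$-strategy equilibrium there, and its optimality is what certifies, exactly as in Gimbert--Zielonka, that $(\sigma,\tau)$ is a uniform equilibrium in $\mathcal{A}$; the bookkeeping observation above ensures that $(\sigma,\tau)$ really consists of $\mathcal{M}$-strategies.

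The main obstacle is the gluing step, which is where the argument genuinely departs from the purely positional one of Gimbert and Zielonka: one has to organize the gluing so that both resulting strategies read only the skeleton $\mathcal{M}$ --- the ``mode'' a player is in must be recoverable from the current node together with the current state of $\mathcal{M}$, not stored in a separate register --- and, when it cannot be recovered, one has to bound precisely the extra internal memory needed to interleave the two sub-equilibria, which is what produces the factor in $2n|\mathcal{M}|-1$. I expect the gluing and its correctness proof (the part overlapping with Bouyer et al.) to be the heavy component, with the size count careful but otherwise routine.
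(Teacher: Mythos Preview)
Your architecture is the paper's: product arena to turn $\mathcal{M}$-strategies into positional ones, the Gimbert--Zielonka edge induction, and the glued one-player arena whose size drives the bound $2n|\mathcal{M}|-1$. Your ``bookkeeping observation'' is exactly what the paper isolates as \emph{$\mathcal{M}$-triviality} (Definition~\ref{consistent} and Observation~\ref{cons_obs}): in an arena where the $\mathcal{M}$-state is a function of the current node, an $\mathcal{M}$-strategy is positional on plays from $f^{-1}(m_{init})$.

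The paper's organization is cleaner in one respect that matters for the count. Rather than running the induction on $\mathcal{A}$ and working with $\mathcal{M}$-strategies, it runs the induction directly on the product (more precisely, on $\mathcal{M}$-trivial arenas with at most $N=n|\mathcal{M}|$ nodes) and works with positional strategies throughout. The split is then at a \emph{single} node $w$ of the product; the glued one-player arena $\mathcal{B}$ is the textbook Gimbert--Zielonka construction at that level and has $2N-1=2n|\mathcal{M}|-1$ nodes on the nose. This sidesteps the part you flag as the main obstacle: there is no need to argue that the glued strategies ``read only $\mathcal{M}$'', because at the product level everything is positional and the $\mathcal{M}$-triviality of $\mathcal{B}$ is immediate.

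Two places where your write-up would need repair. First, the gluing you describe (``Max plays the locally better of the two options'') is not the Gimbert--Zielonka mechanism: there one does \emph{not} pre-define $\sigma$ from value comparisons; instead one forms $\mathcal{B}$, obtains a positional optimal $\widehat\Sigma$ there, and its single choice at $w$ selects between $\sigma_1$ and $\sigma_2$ wholesale. A value-based node-by-node mix of $\sigma_1,\sigma_2$ is not what the one-player arena certifies. Second, your size count $(2|\mathcal{M}|-1)\,n$ via ``two copies of $\mathcal{M}$ sharing one state'' is not justified: the mode bit of $\tau_{12}$ flips at $v$ based on Max's edge, independently of the $\mathcal{M}$-state, so the memory is honestly $\{1,2\}\times M$ with no shared state. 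If you split at $v\in\mathcal{A}$, the corresponding split in the product is at all $|\mathcal{M}|$ copies $(m,v)$, which means $\widehat\Sigma$ faces $|\mathcal{M}|$ independent choice points in $\mathcal{B}$ rather than one, and the ``read off $\sigma_1$ vs.\ $\sigma_2$'' step breaks. Splitting at a single product node, as the paper does, avoids both issues and makes the $2N-1$ count transparent.
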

\begin{proof}[Derivation of Theorem \ref{main_theorem} from Theorem \ref{lifting}]
Take any $\vphi\in\oneFMD(f)$. Our goal is to show that $\vphi\in\FMD(g)$, where $g$ is as in Theorem \ref{main_theorem}. That is, our goal is to establish for every $n\in\mathbb{Z}^+$ a memory skeleton $\mathcal{M}$ with at most $g(n)$ states such that all arenas with at most $n$ nodes have a uniform $\mathcal{M}$-strategy equilibrium.

Take any $n\in\mathbb{Z}^+$. By definition, $g(n) = f(m)$ for some $m\in\mathbb{Z}$ such that $\frac{f(m)}{m + 1} \le \frac{1}{2n}$. Since $\vphi\in\oneFMD(f)$, there exists a memory skeleton $\mathcal{M}$ with at most $f(m)$ states such that all one-player arenas with at most $m$ nodes have a uniform $\mathcal{M}$-strategy equilibrium. Now, since $\frac{f(m)}{m + 1} \le \frac{1}{2n}$, we have $m\ge 2n\cdot f(m) - 1 \ge 2n \cdot |\mathcal{M}| - 1$. By Theorem \ref{lifting}, this means that all arenas with at most $n$ nodes have a uniform $\mathcal{M}$-strategy equilibrium. Since $\mathcal{M}$ has at most $f(m) = g(n)$ states, we are done.
\end{proof}

Before discussing our technique, 
let us briefly overview how  Bouyer et al.~establish Theorem \ref{all_arenas_lifting}. They start by defining ``$\mathcal{M}$-monotone payoff functions'' and ``$\mathcal{M}$-selective payoff functions''.  Then they show that any payoff function which is $\mathcal{M}$-monotone and $\mathcal{M}$-selective is determined in $\mathcal{M}$-strategies. Finally, they show that for any non-$\mathcal{M}$-monotone and for any non-$\mathcal{M}$-selective payoff function there exists a one-player arena which has no uniform $\mathcal{M}$-strategy equilibrium w.r.t.~this payoff function. This also gives a \emph{characterization} of $\mathcal{M}$-determinacy: a payoff function is determined in $\mathcal{M}$-strategies if and only if it is $\mathcal{M}$-monotone and $\mathcal{M}$-selective.

In this paper, we obtain Theorem \ref{lifting} (and, thus, Theorem \ref{all_arenas_lifting}) more directly. For the sake of simplicity, in Section \ref{sec:special_case} we prove it in a special case when $\mathcal{M}$ is a memory skeleton with just one state. In this special case, $\mathcal{M}$-strategies are positional strategies.

\begin{prop}[Special case of Theorem \ref{lifting}]
\label{positional_lifting} For any payoff function $\vphi$ and for any $N\in\mathbb{Z}^+$ the following holds. Assume that all one-player arenas with at most $2 N - 1$ nodes have a uniform positional equilibrium w.r.t.~$\vphi$. Then all arenas with at most $N$ nodes have a uniform positional equilibrium w.r.t.~$\vphi$.
\end{prop}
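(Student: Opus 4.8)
The plan is to prove the contrapositive-free, direct statement by induction on the number of "genuine choices" in the arena, i.e., by successively fixing one move at a single Max- or Min-node and invoking the one-player hypothesis suitably. Fix a payoff function $\vphi$ and $N \in \mathbb{Z}^+$, and assume every one-player arena with at most $2N - 1$ nodes has a uniform positional equilibrium. Let $\mathcal{A}$ be an arena with at most $N$ nodes; we want a uniform positional equilibrium in $\mathcal{A}$. First I would set up the right induction measure: something like $\sum_{v \in V}(\text{out-degree}(v) - 1)$, the total amount of "branching" in $\mathcal{A}$. If this is $0$ the arena is already one-player (in fact both players are forced everywhere), so the hypothesis applies directly — note $N \le 2N - 1$ for $N \ge 1$. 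Otherwise, pick a node $v$ of, say, Max with out-degree at least $2$, and split its outgoing edges into two nonempty groups; this naturally splits $\mathcal{A}$ into two sub-arenas $\mathcal{A}_1, \mathcal{A}_2$ on the same node set but with strictly less branching, to which induction applies, yielding positional equilibria $(\sigma_1, \tau_1)$ and $(\sigma_2, \tau_2)$.

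The heart of the argument is to recombine these two equilibria into one for $\mathcal{A}$. The standard Gimbert--Zielonka-style move is: at $v$, Max should choose whichever of the two "branches" gives him a better value; at every other Max-node and at every Min-node the two sub-equilibria may disagree, so one must argue they can be reconciled. This is exactly where I expect the subtlety to lie, and where the factor $2N-1$ in the hypothesis comes from: rather than induct on a single arena, the cleaner route (and, I suspect, the one taken here) is to build an auxiliary \emph{one-player} arena of size roughly $2N$ that encodes "the play after Max commits at $v$" against a fixed positional strategy of the opponent, apply the one-player hypothesis there, and read off a positional best response. Concretely, I would: (i) guess Min's positional strategy $\tau$ is part of some equilibrium, form $\mathcal{A}_\tau$ (one-player, $\le N$ nodes), and use the hypothesis to get Max's positional best response; (ii) symmetrically handle Max; (iii) to glue the two directions, take a \emph{disjoint union} of $\mathcal{A}_\sigma$ and $\mathcal{A}_\tau$ — two one-player arenas of total size $\le 2N$, hence $\le 2N-1$ after identifying or deleting one redundant node — apply the hypothesis once more, and extract a consistent pair. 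The bookkeeping showing that the resulting pair is a uniform equilibrium in the original $\mathcal{A}$ (using Lemma~\ref{cartesian}, the Cartesian-product structure of the equilibrium set, to combine a Max-optimal strategy from one equilibrium with a Min-optimal one from another) is the routine but delicate part.

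The main obstacle, then, is organizing the gluing so that the auxiliary arenas stay within the size budget $2N-1$ while still certifying \emph{uniform} optimality (over all starting nodes simultaneously) of the recombined positional strategies. I would lean on Lemma~\ref{cartesian}: once we know some uniform positional equilibrium exists in each relevant sub-arena, its Cartesian structure lets us mix and match the Max-component and the Min-component freely, which is what ultimately lets a single pass through the one-player hypothesis (on an arena of size about $2N$) suffice. I expect the write-up to proceed by: reducing to fixing one choice at one node; constructing the size-$\le 2N-1$ one-player witness arena; applying the hypothesis; and finally verifying, edge by edge, that the extracted positional strategies form a uniform equilibrium in $\mathcal{A}$.
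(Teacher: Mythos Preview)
Your high-level plan is on target: induction on a branching measure, split a Max-node $w$ of out-degree $\ge 2$ into two sub-arenas $\mathcal{A}_1,\mathcal{A}_2$, pull back positional equilibria $(\sigma_i,\tau_i)$ by induction, and then invoke the one-player hypothesis on an auxiliary arena of size $2N-1$, with Lemma~\ref{cartesian} letting you treat Max's and Min's halves separately. That much matches the paper's argument.

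The gap is in the auxiliary arena. Your proposed glue --- a disjoint union of $\mathcal{A}_\sigma$ and $\mathcal{A}_\tau$ for some guessed equilibrium pair --- is not a one-player arena: in $\mathcal{A}_\sigma$ only Min moves, in $\mathcal{A}_\tau$ only Max moves, so in the union \emph{both} players have genuine choices, and the hypothesis does not apply. (Also, ``guess $\tau$ and form $\mathcal{A}_\tau$'' is circular: $\tau$ is what you are trying to construct.) The paper's construction is different and this difference is the crux. To establish part~\textbf{(a)} (a uniform equilibrium with positional Max-strategy), the paper takes the two \emph{Min-restricted} sub-arenas $(\mathcal{A}_1)_{\tau_1}$ and $(\mathcal{A}_2)_{\tau_2}$ --- both one-player for Max --- and merges their two copies of the split node $w$ into one. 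The result $\mathcal{B}$ has $2|V|-1\le 2N-1$ nodes and is one-player (all Min-nodes have out-degree $1$), so the hypothesis yields a positional optimum $\widehat{\Sigma}$ for Max in $\mathcal{B}$. The merged node $w$ is exactly where $\widehat{\Sigma}$ decides: if $\widehat{\Sigma}(w)\in E_1(w)$ then $\sigma_1$ is the right choice in $\mathcal{A}$, otherwise $\sigma_2$.

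You are also missing the device that certifies this choice is half of a \emph{uniform} equilibrium in $\mathcal{A}$. The paper pairs $\sigma_1$ not with $\tau_1$ but with a non-positional ``switching'' strategy $\tau_{12}$ of Min: start playing $\tau_1$, and whenever Max at $w$ uses an edge from $E_2(w)$ flip to $\tau_2$ (and back on $E_1(w)$). One then checks that every play against $\tau_{12}$ in $\mathcal{A}$ lifts to a path in $\mathcal{B}$ with the same colour sequence, so optimality of $\widehat{\Sigma}$ in $\mathcal{B}$ transfers to optimality of $\sigma_1$ against $\tau_{12}$ in $\mathcal{A}$. Part~\textbf{(b)} is symmetric, with the roles of Max and Min swapped and a separate auxiliary arena; Lemma~\ref{cartesian} then combines the positional Max-half from \textbf{(a)} with the positional Min-half from \textbf{(b)}.
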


As all papers in this line of works, we build upon the inductive technique first invented by Gimbert and Zielonka~\cite{gimbert2005games}. Our contribution here is a more direct exposition of this technique, with the emphasis on the size of arenas.

\medskip

We extend Proposition \ref{positional_lifting} to all memory skeletons\footnote{Our technique in this part is rather similar to a technique from a recent paper of Bouyer et al.~\cite{bouyer2021arena} (see the arXiv version~\cite{arena} of their paper for more details). In this paper, they give a direct proof of an analogue of Theorem \ref{all_arenas_lifting} for stochastic games.} in two steps.  We first prove an analogue of Proposition \ref{positional_lifting} for so-called $\mathcal{M}$-trivial arenas. Informally, these are arenas where states of $\mathcal{M}$ are ``hardwired'' into nodes. In such arenas, $\mathcal{M}$-strategies degenerate to positional strategies. We show that Proposition \ref{positional_lifting} is true even when only $\mathcal{M}$-trivial arenas are taken into account (in the assumption and in the conclusion).

We then derive Theorem \ref{lifting} from this using the product arena construction~\cite[Chapter 2]{automata_toolbox}. Take any (two-player) arena $\mathcal{A}$ with up to $n$ nodes. We have to derive the existence of an $\mathcal{M}$-strategy equilibrium in $\mathcal{A}$ from the assumption of Theorem \ref{lifting}. It is a classical observation that $\mathcal{M}$-strategies in $\mathcal{A}$ can be viewed as positional strategies in the product arena $\mathcal{M}\times\mathcal{A}$. This product arena is obtained by first pairing states of $\mathcal{M}$ with nodes of $\mathcal{A}$, and then by drawing edges of $\mathcal{A}$ in all possible ways that are consistent with the transition function of $\mathcal{M}$. Now we only have to establish a positional equilibrium in $\mathcal{M}\times \mathcal{A}$. This arena is $\mathcal{M}$-trivial, so we use Proposition \ref{positional_lifting} for $\mathcal{M}$-trivial arenas and $N = n\cdot|\mathcal{M}|$. The size of $\mathcal{M}\times \mathcal{A}$ is the product of the sizes of $\mathcal{M}$ and $\mathcal{A}$, so it does not exceed $N$. It remains to show that all one-player $\mathcal{M}$-trivial arenas with up to $2N - 1 = 2n\cdot|\mathcal{M}| - 1$ nodes have a positional equilibrium. Indeed, by the assumption of Theorem \ref{lifting}, all one-player arenas (not only $\mathcal{M}$-trivial) of this size have an $\mathcal{M}$-strategy equilibrium. But in $\mathcal{M}$-trivial arenas these $\mathcal{M}$-strategy equilibria are automatically positional.

The full proof of Theorem \ref{lifting} is given in Appendix \ref{sec:main_proof}.

\subsection{Theorem \ref{thm:sharp}}
Let the set of colors be $C = \{-1, 1\}$. We define a payoff function  $\psi \colon C^\omega \to \{0, 1\}$ as follows. We set $\psi(c_1 c_2 c_3 \ldots) = 1$ if and only if either $\big(\lim_{n\to\infty} \sum_{i = 1}^n c_i = +\infty\big)$ or $\big(\sum_{i = 1}^n c_i = 0 \mbox{ for infinitely many } n\big)$.
We assume the standard ordering on $\{0, 1\} = \psi(C^\omega)$, so that $1$ is interpreted as victory of Max and $0$ is interpreted as victory of Min.

We show that $\psi\in\oneFMD(2n + 2)\setminus\FMD$.
In fact, this payoff function was defined by Bouyer et al.~in~\cite[Section 3.4]{bouyer2020games}. They have shown that this payoff function is finite-memory determined in one-player arenas but not in two-player arenas. So our contribution here is an upper bound $\psi\in\oneFMD(2n + 2)$ on its memory complexity in one-player arenas. In other words, for every $n$ we provide a memory skeleton $\mathcal{M}_n$ with $2n + 2$ states such that every one-player arena $\mathcal{A}$ with up to $n$ nodes has a uniform $\mathcal{M}_n$-strategy equilibrium. Let us describe the main ideas needed to obtain this upper bound. In this overview, we only consider those one-player arenas where all nodes of Min have out-degree 1. We use similar ideas for one-player arenas of the opposite type (but they require a bit more care).

 It will be more convenient to refer to the elements of $C$ as \emph{weights} rather than as colors. Correspondingly, by the weight of a path we will mean the sum of the weights of its edges. Further, we will call a path \emph{positive} if its weight is positive. We define negative and zero paths similarly.

Take an arena with up to $n$ nodes where all nodes of Min have out-degree 1 (that is, essentially Max is the one to move everywhere). First, we can remove all the nodes from where one can reach a positive cycle. Indeed, Max has a positional winning strategy from these nodes (Max can go to the closest simple positive cycle, and then start rotating over it forever). Here it is important that our arena is one-player. Two-player arenas might have positive cycles, but Max might be unable to stay on them.

Now the only way Max can win is by making the sum of the weights equal to $0$ infinitely many times. As a first attempt, consider an ``illegal'' memory skeleton $\mathcal{M}_\infty$, which simply stores the sum of the weights along the current play. It is illegal since the sum of the weights can be arbitrarily large, so $\mathcal{M}_\infty$ will have infinitely many states. Still, our winning condition for Max can be reformulated in terms of $\mathcal{M}_\infty$. Indeed, Max just has to bring $\mathcal{M}_\infty$ into a state ``the current sum is $0$'' infinitely many times. Notice that this is a parity condition in the product of our initial arena and the memory skeleton $\mathcal{M}_\infty$. Since parity games are positionally determined~\cite{zielonka1998infinite}, we have a uniform positional equilibrium in the product arena, and this gives a uniform $\mathcal{M}_\infty$-strategy equilibrium in the initial arena.

To turn this idea into a proof, we ``truncate'' $\mathcal{M}_\infty$. For arenas with up to $n$ nodes we consider a memory skeleton $\mathcal{M}_n$, which stores the  current sum of the weights while its absolute value is at most $n$; if it exceeds $n$, our memory skeleton comes into a special invalid state. Observe that such memory skeleton requires just $2n + 2$ states.

 We now make use of the fact that w.l.o.g~our arena has no positive cycles. Since our weights are $\pm1$, there is no path of weight larger than $n$. Indeed, any path can be decomposed into cycles and a simple path. The contribution of cycles is non-negative, and the contribution of a simple path is at most $n$, just because its length is at most $n$. So the current sum of the weights can never become larger than $n$. It can become smaller than $-n$, and in this case Max looses (he can never make it equal to $0$ again). So the goal of Max is, first, to avoid a state ``the current sum exceeded $n$ in the absolute value'', and second,  to reach a state ``the current sum is $0$'' infinitely many times. This is a parity condition in the product of our initial arena and the memory skeleton $\mathcal{M}_n$. Therefore, we get a uniform $\mathcal{M}_n$-strategy equilibrium in our initial arena.

\subsection{Theorem \ref{thm:example}}

Let the set of colors be $C = \{0, 1\}$. Fix a set $T\subseteq \mathbb{Z}^{+}$. Define a payoff function $\vphi\colon\{0, 1\}^\omega\to\{0, 1\}$ by setting $\vphi(\alpha) = 1$ for $\alpha = \alpha_1 \alpha_2 \alpha_3\ldots \in\{0, 1\}^\omega$ if and only if at least one of the following two conditions holds:
\begin{itemize}
\item $\alpha$ contains only finitely many $0$'s;
\item for some $t\in T$, the sequence $\alpha$ contains the word $01^t0$.
\end{itemize}
We show that, under some condition on $T$, the payoff function $\vphi$ is not arena-independent finite-memory determined, but belongs to $\oneFMD(f)$ for some $f\colon\mathbb{Z}^+\to\mathbb{Z}^+, f\notin \Omega(n)$. This condition is called \emph{isolation}. Roughly speaking, it requires that there are infinitely many elements in $T$ such that far to the left and to the right of them there are no other elements of $T$. More precisely, $T\subseteq \mathbb{Z}^+$ is isolated if there are infinitely many $k\in T$ such that $l \notin T$ for all $k/2 < l < k^4$, $l\neq k$. We call such $k$ \emph{isolated elements} of $T$.

From now on, we fix any  isolated set $T$, for example, $T = \{2^{4^n} \mid n\in\mathbb{Z}^+\}$. To show that $\vphi\in\oneFMD(f)$ for some $f\colon\mathbb{Z}^+\to\mathbb{Z}^+, f\notin \Omega(n)$, we construct, for every $k$, the following memory skeleton $\mathcal{M}_k$. It simply counts the number of 1's after the last 0. If this number exceeds $k$, it stops counting (it just remembers a fact that there are more than $k$ ones after the last $0$). Now, when our memory skeleton receives a $0$, there are two cases. If the current value of the counter is some number from $T\cap [1, k]$, then $\mathcal{M}_k$ transits into a  special ``winning state'', and stays in it forever. Otherwise, it resets the counter to $0$ and starts counting again.

Note that $\mathcal{M}_k$ can be realized with $k + O(1)$ states.
We show that if $k$ is an isolated element of $T$, then all arenas (even two-player) with up to $k^2$ nodes have a uniform $\mathcal{M}_k$-strategy equilibrium. This will show that $\vphi\in \FMD(f)$ for some function $f$ such that $f(n) \le 2\sqrt{n}$ for infinitely many $n$.

Consider any arena with up to $k^2$ nodes. We define an auxiliary game in which Max wins if either $\mathcal{M}_k$ was brought to the  ``winning state'' or there were just finitely many 0's in the play. Note that if Max wins in the auxiliary game, then Max wins w.r.t.~$\vphi$. The auxiliary game, however, is not entirely equivalent to $\vphi$, because a play can be winning for Min in the auxiliary game but loosing for Min w.r.t.~$\vphi$ (if this play contains $01^t0$ for some $t\in T, t > k$). Still, it holds that if Min can win in the auxiliary game, then Min can also win w.r.t.~$\vphi$. To prove this claim, we notice that the auxiliary game is a parity game in the product of our initial arena and the memory skeleton $\mathcal{M}_k$. So if Min can win in it, then Min can do so via some positional strategy $\tau$ in the product arena. We observe that $\tau$ is also winning w.r.t.~$\vphi$. Indeed, otherwise there is a play against $\tau$ which contains $01^t0$ for some $t\in T, t > k$. Since $k$ is an isolated element of $T$, we have $t \ge k^4$. Therefore, as the size of the product arena is $(k + O(1)) \cdot k^2 < k^4$, there must be a cycle which is consistent with $\tau$ and which consists entirely of 1's. But then Max can win against $\tau$ in the auxiliary game, contradiction.

As we pointed out, the auxiliary game is a parity game in the product of our arena with $\mathcal{M}_k$. Thus, it has a positional equilibrium there. This positional equilibrium translates into an $\mathcal{M}_k$-strategy equilibrium in the initial arena. Finally, as shown in the previous paragraph, any equilibrium in the auxiliary game is also an equilibrium w.r.t.~$\vphi$.

Showing that $\vphi$ is not arena-independent finite-memory determined is much easier. Take an isolated element $k\in T$. The idea is to construct an arena with a node which ``cuts'' the word $01^k0$ in $\Omega(k)$ different ways near the middle. Due to isolation, the only way for Max to win in this arena is to go through one of the cuts. However, Min can choose any of the cuts, so Max needs $\Omega(k)$ states to distinguish between different cuts. Since $k$ can be arbitrarily large, this shows that no single memory skeleton can be sufficient for $\vphi$ in all arenas.

\section{Warm-Up: Proof of Proposition \ref{positional_lifting}}
\label{sec:special_case}
The proof is by induction on the number of edges of an arena. More precisely, we are proving by induction on $m$ the following claim: for every $m$  every arena with $m$ edges and at most $N$ nodes has a uniform positional equilibrium.

The induction base ($m = 1$) is trivial (any arena with one edge is one-player and has exactly one node, so we can just refer to the assumption of the lemma). We proceed to the induction step. Take an arena 
$\mathcal{A} = \langle V, V_\Max, V_\Min, E, \source,\target,\col\rangle$ with at most $N$ nodes and assume that all arenas with at most $N$ nodes and with fewer edges than $\mathcal{A}$ have a uniform positional equilibrium. 
 We prove the same for $\mathcal{A}$. Since the set of uniform equilibria is a Cartesian product by Lemma \ref{cartesian}, 
it is enough to establish the following two claims:
\begin{itemize}
\item \textbf{\emph{(a)}} in $\mathcal{A}$ there exists a uniform equilibrium including a positional strategy of Max;
\item \textbf{\emph{(b)}} in $\mathcal{A}$ there exists a uniform equilibrium including a positional strategy of Min.
\end{itemize}
We only show \textbf{\emph{(a)}}, a proof of \textbf{\emph{(b)}} is similar.

We may assume that $\mathcal{A}$ is not one-player (otherwise we are done due to the assumptions of the lemma). Hence there exists a node $w\in V_\Max$ with out-degree at least $2$. Partition the set $E(w) = \{e\in E\mid \source(e) = w\}$ into two non-empty disjoint subsets $E_1(w)$ and $E_2(w)$. 
 Define two new arenas $\mathcal{A}_1$ and $\mathcal{A}_2$. The arena $\mathcal{A}_1$ is obtained from $\mathcal{A}$ by deleting edges from the set $E_2(w)$. Similarly, the arena $\mathcal{A}_2$ is obtained from $\mathcal{A}$ by deleting edges from the set $E_1(w)$. So in $\mathcal{A}_i$ for $i = 1, 2$ the set of edges with the source in $w$ is $E_i(w)$.

Both $\mathcal{A}_1$ and $\mathcal{A}_2$ have fewer edges than $\mathcal{A}$. So both these arenas have a uniform positional equilibrium. Let $(\sigma_i, \tau_i)$ be a uniform positional equilibrium in $\mathcal{A}_i$ for $i = 1, 2$. We will first define two auxiliary strategies $\tau_{12}$ and $\tau_{21}$ of Min; then we will show that either $(\sigma_1, \tau_{12})$ or $(\sigma_2, \tau_{21})$ is a uniform equilibrium in $\mathcal{A}$. After that \textbf{\emph{(a)}} will be proved.

\medskip

 Strategies $\tau_{12}$ and $\tau_{21}$ will not be positional. In a sense, they are combinations of $\tau_1$ and $\tau_2$. In both strategies  Min has a counter $I$ which can only take two values, $1$ and $2$. The counter $I$ indicates to Min which of the strategies $\tau_1$ or $\tau_2$ to use. I.e., whenever Min should make a move from a node $v\in V_\Min$, he uses an edge $\tau_I(v)$.
  The value of $I$ changes each time in the node $w$ Max uses an edge not from a set $E_{I}(w)$. It only remains to specify the initial value of $I$. There are two ways to do this, one will give us strategy $\tau_{12}$, and the other will give $\tau_{21}$. More specifically, in $\tau_{12}$ the initial value of $I$ is $1$ and in $\tau_{21}$ the initial value of $I$ is $2$.

It is not hard to see that $\tau_{12}$ is a uniformly optimal response to $\sigma_1$ and $\tau_{21}$ is a uniformly optimal response to $\sigma_2$. For instance, let us show this for $\tau_{12}$ and $\sigma_1$. By definition, $\tau_1$ is a uniformly optimal response to $\sigma_1$ in the arena $\mathcal{A}_1$, and hence also in the arena $\mathcal{A}$ (because any play against $\sigma$ takes place inside $\mathcal{A}_1$). It remains to notice that $\tau_{12}$ plays exactly as $\tau_1$ against $\sigma_1$. Indeed, $\sigma_1$ never uses edges from $E_2(w)$, so the counter $I$ always equals $1$ against $\sigma_1$.

It remains to show that either $\sigma_1$ is a uniformly optimal response to $\tau_{12}$ or $\sigma_2$ is a uniformly optimal response to $\tau_{21}$ (in the arena $\mathcal{A}$). We derive it from the assumption of the lemma applied to an auxiliary one-player arena $\mathcal{B}$ with at most $2N - 1$ nodes.

Namely, we define $\mathcal{B}$ as follows. Recall that in our notation $(\mathcal{A}_1)_{\tau_1}$ and $(\mathcal{A}_2)_{\tau_2}$ stand for two arenas obtained from, respectively, $\mathcal{A}_1$ and $\mathcal{A}_2$ by throwing away edges that are inconsistent with, respectively, $\tau_1$ and $\tau_2$. Consider an arena consisting of two ``independent'' parts one of which coincides with $(\mathcal{A}_1)_{\tau_1}$ and the other with  $(\mathcal{A}_2)_{\tau_2}$ (``independent'' means that there are no edges between the parts).  From each part take a node corresponding to the node $w$. Then merge these two nodes into a single one. The resulting arena with $2|V| - 1 \le 2N - 1$ nodes will be $\mathcal{B}$.

\begin{figure}[h!]
\centering
  \includegraphics[width=0.6\textwidth]{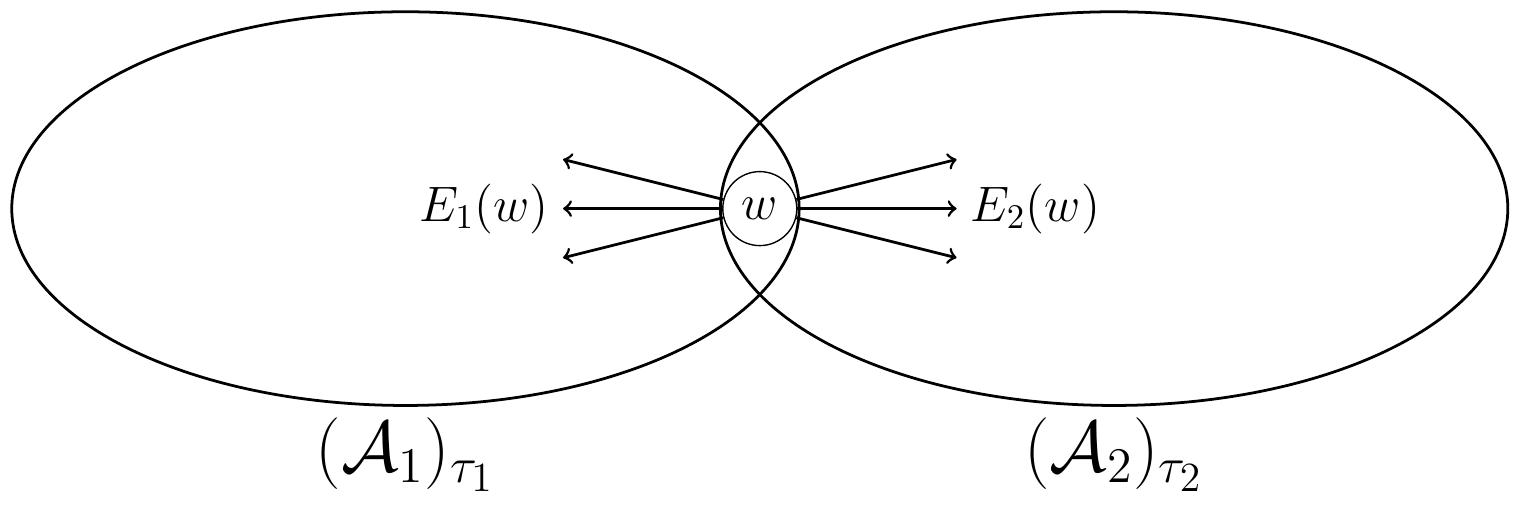}
  \caption{Arena $\mathcal{B}$.}
\label{two_clouds}
\end{figure}

For each node of $\mathcal{A}$ there are two ``copies'' of it in $\mathcal{B}$ -- one from $(\mathcal{A}_1)_{\tau_1}$ and the other from $(\mathcal{A}_2)_{\tau_2}$. We will call copies of the first kind \emph{left copies} and copies of the second kind \emph{right copies}. Note that the left and the right copy of $w$ is the same node in $\mathcal{B}$. Any other node of $\mathcal{A}$ has two distinct copies. Now, by the \emph{prototype} of a node $v^\prime$ of $\mathcal{B}$ we mean a node $v$ of $\mathcal{A}$ of which $v^\prime$ is a copy.

Note that in $\mathcal{B}$ all nodes of Min have out-degree $1$ (because they do so inside $(\mathcal{A}_1)_{\tau_1}$ and $(\mathcal{A}_2)_{\tau_2}$, and the only node of $\mathcal{B}$ which was obtained by merging two nodes is a node of Max). Thus, $\mathcal{B}$ is a one-player arena.

An important feature of $\mathcal{B}$ is that it can ``emulate'' any play against $\tau_{12}$ and $\tau_{21}$ in $\mathcal{A}$. Formally,
\begin{lem}
\label{translation}
For any infinite path $h$ in $\mathcal{A}$ which is consistent with $\tau_{12}$ there exists an infinite path $h^\prime$ in $\mathcal{B}$ with $\col(h^\prime) = \col(h)$ and with the source in the left copy of $\source(h)$. Similarly, for any infinite path $h$ in $\mathcal{A}$ which is consistent with $\tau_{21}$ there exists an infinite path $h^\prime$ in $\mathcal{B}$ with $\col(h^\prime) = \col(h)$ and with the source in the right copy of $\source(h)$.
\end{lem}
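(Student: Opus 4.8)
The plan is to prove the first assertion (about $\tau_{12}$ and left copies); the second is entirely symmetric, swapping the roles of the two parts and the initial value of the counter~$I$. The key observation is that the behaviour of $\tau_{12}$ along any consistent play can be tracked by a two-valued counter $I\in\{1,2\}$, and that this counter tells us in which of the two parts of $\mathcal{B}$ we should currently be located. So I would set up an explicit inductive translation of prefixes: given a finite path $h = e_1 e_2 \ldots e_\ell$ in $\mathcal{A}$ consistent with $\tau_{12}$, I produce a finite path $h' = e_1' e_2' \ldots e_\ell'$ in $\mathcal{B}$ with $\col(h') = \col(h)$, such that $\source(h')$ is the left copy of $\source(h)$, and such that $\target(h_i')$ is the left copy of $\target(h_i)$ whenever the counter value after reading $h_i = e_1\ldots e_i$ is $1$, and the right copy whenever it is $2$ (here $h_0 = \lambda_{\source(h)}$ has counter value~$1$, matching the initial value of $I$ in $\tau_{12}$). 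The infinite path $h'$ is then obtained as the limit of these coherent prefixes.

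\medskip

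The induction step is where I would do the actual work, splitting into cases according to the owner of $\target(h_i)$ and, when $\target(h_i) = w$, according to whether Max's move stays inside the current $E_I(w)$ or switches. First, whatever the current copy (left or right), the prototype of $\target(h_i')$ equals $\target(h_i)$ by the induction hypothesis, so $\target(h_i)$ has an out-edge in $\mathcal{A}$ to be extended. If $\target(h_i) \in V_\Min$: since $h$ is consistent with $\tau_{12}$, the edge $e_{i+1}$ is $\tau_I(\target(h_i))$ where $I$ is the current counter value; by construction of $\mathcal{B}$ the corresponding copy $e_{i+1}'$ — the edge of $(\mathcal{A}_I)_{\tau_I}$ in the $I$-th part — is present and has the same color, and following it keeps us in the $I$-th part, which is correct because a Min move does not change~$I$. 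If $\target(h_i) \in V_\Max$ and $\target(h_i) \ne w$: then $E(\target(h_i))$ is the same in $\mathcal{A}$, $\mathcal{A}_1$, $\mathcal{A}_2$, so $e_{i+1}$ survives into $(\mathcal{A}_I)_{\tau_I}$ (no Min-restriction at a Max-node), giving a same-colored $e_{i+1}'$ in the current part; again $I$ is unchanged, so the target copy is the right one. The delicate case is $\target(h_i) = w$: here the current copy of $w$ is the single merged node, which has all edges $E_1(w) \cup E_2(w)$ available in $\mathcal{B}$ (the left copy of $w$ carries $E_1(w)$, the right copy carries $E_2(w)$, and they are identified). If $e_{i+1} \in E_I(w)$, the counter stays at $I$ and we take the copy of $e_{i+1}$ leading into the $I$-th part; if $e_{i+1} \notin E_I(w)$, i.e. $e_{i+1} \in E_{3-I}(w)$, the counter flips to $3-I$ and we take the copy of $e_{i+1}$ leading into the $(3-I)$-th part. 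Either way the color matches and the target copy matches the updated counter value. This establishes the inductive claim for all finite prefixes.

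\medskip

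Finally, because the translation is defined prefix by prefix and each step only appends one edge (the choice of $e_{i+1}'$ depends only on $h_i'$, $e_{i+1}$ and the current counter, all determined by the prefix), the finite paths $h_1' \sqsubseteq h_2' \sqsubseteq \cdots$ are coherent, and their union is a well-defined infinite path $h'$ in $\mathcal{B}$ with $\source(h') = $ the left copy of $\source(h)$ and $\col(h') = \col(h)$. The main obstacle, and the only place where the specific construction of $\mathcal{B}$ is essential, is the node $w$: one must check that merging the two copies of $w$ is exactly what makes \emph{both} families of outgoing edges simultaneously available, so that a play of $\tau_{12}$ which oscillates between the two ``regimes'' by repeatedly switching at $w$ can always be followed in $\mathcal{B}$ — and, conversely, that nowhere else do the two parts need to communicate, which is guaranteed because $\tau_{12}$ only ever consults $\tau_I$ at Min-nodes and $\sigma$-freely at Max-nodes other than~$w$.
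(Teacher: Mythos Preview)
Your proposal is correct and follows essentially the same approach as the paper: track the counter $I$ along the play and place the translated path in the left or right part of $\mathcal{B}$ accordingly, with the merged node $w$ being the only place where a switch between parts can occur. The paper's proof is a brief sketch of exactly this idea, whereas you spell it out as an explicit prefix-by-prefix induction with the invariant that $\target(h_i')$ is the copy of $\target(h_i)$ in the part indexed by the current counter value; this adds rigor but no new ingredients.
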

\begin{proof}
We only give an argument for $\tau_{12}$, the argument for $\tau_{21}$ is similar. We construct $h^\prime$ from the left copy of $\source(h)$ by always moving in the same ``local direction'' as $h$. There will be no problem with that for the nodes of Max because they have the same set of out-going edges in $\mathcal{B}$ as their prototypes have in $\mathcal{A}$. Now, for the nodes of Min we should be more accurate. The path $h$ is consistent with $\tau_{12}$, so from the nodes of Min it applies either $\tau_1$ or $\tau_2$. Now, in $\mathcal{B}$ strategy $\tau_1$ is available only in the left ellipse of Figure \ref{two_clouds}, and $\tau_2$ is available only in the right ellipse. So each time $h$ wants to apply $\tau_1$, the path $h^\prime$ should be in the left ellipse. Similarly, each time $h$ wants to apply $\tau_2$, the path $h^\prime$ should be in the right ellipse. Initially, until its counter changes, $\tau_{12}$ applies $\tau_1$, and correspondingly $h^\prime$ starts in the left ellipse.
 Now, each time $\tau_{12}$ switches to $\tau_2$, it does so because Max used an edge from $E_2(w)$ in $w$. Correspondingly, $h^\prime$ enters the right ellipse at this moment. Similarly, whenever $\tau_{12}$ switches back to $\tau_1$, the path $h^\prime$ returns to the left ellipse.
\end{proof}

Note that in $\mathcal{B}$ Min has exactly one strategy. We denote it by $T$.
The arena $\mathcal{B}$ is one-player and has at most $2N -1$ nodes, so by the assumption of the lemma there is a uniform positional equilibrium $(\widehat{\Sigma}, T)$ in it. We claim the following:
\begin{itemize}
\item if $\widehat{\Sigma}$ applies an edge from $E_1(w)$ in $w$, then $\sigma_1$ is a uniformly optimal response to $\tau_{12}$ in $\mathcal{A}$;
\item if $\widehat{\Sigma}$ applies an edge from $E_2(w)$ in $w$, then $\sigma_2$ is a uniformly optimal response to $\tau_{21}$ in $\mathcal{A}$.
\end{itemize}
We only show the first claim, the proof of the second one is analogous. Consider a restriction of $\widehat{\Sigma}$ to the left ellipse of $\mathcal{B}$. This defines a positional  strategy $\widehat{\sigma}$ of Max in $\mathcal{A}$. Note that in each node of $\mathcal{A}$ the strategy $\sigma_1$ is at least as good against $\tau_{12}$ as $\widehat{\sigma}$. Indeed,  $\sigma_1(w), \widehat{\sigma}(w)\in E_1(w)$. Hence $\sigma_1, \widehat{\sigma}$ are strategies in the arena  $\mathcal{A}_1$, where $\sigma_1$ is a uniformly optimal response to $\tau_1$. It remains to notice that $\tau_{12}$ plays exactly as $\tau_1$ against  $\sigma_1$ and $\widehat{\sigma}$ since these two strategies of Max never use edges from $E_2(w)$.

Therefore, it is enough to show that $\widehat{\sigma}$ is a uniformly optimal response to $\tau_{12}$ in $\mathcal{A}$. Take any node $v\in V$ and any play $h$ against $\tau_{12}$ from $v$. Our goal is to show that the play of $\widehat{\sigma}$ and $\tau_{12}$ from $v$ is at least as good from the Max's perspective as $h$. Now, by Lemma \ref{translation} some infinite path $h^\prime$ from the left copy of $v$ is colored exactly as $h$. On the other hand, the play of $\widehat{\Sigma}$ and $T$ from the left copy of $v$ is at least as good for Max as $h^\prime$ (and hence as $h$). This is because $h^\prime$ is consistent with $T$ (as there are simply no other strategies of Min in $\mathcal{B}$) and because $(\widehat{\Sigma}, T)$ is an equilibrium. It remains to note that the play of $\widehat{\Sigma}$ and $T$ from the left copy of $v$ is colored exactly as the play of $\widehat{\sigma}$ and $\tau_{12}$ from $v$. Indeed, as we have already observed, $\tau_{12}$ plays exactly as $\tau_1$ against $\widehat{\sigma}$. On the other hand, the play of $\widehat{\Sigma}$ and $T$ can never leave the left ellipse as $\widehat{\Sigma}$ points to the left in $w$. Moreover, restrictions of these strategies to the left ellipse coincide with $\widehat{\sigma}$ and $\tau_1$; for $\widehat{\Sigma}$ this is just by definition and for $T$ this is because the left ellipse coincides with the arena $(\mathcal{A}_1)_{\tau_1}$

\section{Proof of Theorem \ref{lifting}}
\label{sec:main_proof}

\subsection{$T$-wise equilibria}
\label{subsec:cartesian}
Before diving into details of our proof of Theorem \ref{lifting}, we have to generalize the notion of a uniform equilibrium. Take any arena \[\mathcal{A} = \langle V, V_\Max, V_\Min, E, \source,\target,\col\rangle\] and any payoff function $\vphi\colon C^\omega\to\mathcal{W}$.
Fix a subset $T\subseteq V$, a strategy $\sigma$ of Max and a strategy $\tau$ of Min. We say that $\sigma$ is a \textbf{$T$-wise optimal response} to $\tau$ w.r.t.~$\vphi$ if for all $v\in T$ and for all infinite $h\in \Cons(v,\tau)$ we have $\vphi\circ \col\big(h(v, \sigma,\tau)\big) \ge \vphi\circ\col(h)$. 
Similarly, we call $\tau$ a \textbf{$T$-wise optimal response} to $\sigma$ w.r.t.~$\vphi$ if for all $v\in T$ and for all infinite $h\in \Cons(v,\sigma)$ we have $\vphi\circ \col\big(h(v, \sigma,\tau)\big) \le \vphi\circ\col(h)$.
Finally, we call a pair $(\sigma, \tau)$ a \textbf{$T$-wise equilibrium} w.r.t.~$\vphi$ if $\sigma$ and $\tau$ are  $T$-wise optimal responses to each other.

When $T = V$ is the whole set of nodes, then $T$-wise equilibria are uniform equilibria, and vice versa. Thus, the following lemma generalizes Lemma \ref{cartesian}.

\begin{lem}
\label{cartesian2}
 For any arena $\mathcal{A} = \langle V, V_\Max, V_\Min, E, \source,\target,\col\rangle$, for any payoff function $\vphi$, and for any subset $T\subseteq V$, the set of $T$-wise equilibria in $\mathcal{A}$ w.r.t.~$\vphi$ is a Cartesian product.
\end{lem}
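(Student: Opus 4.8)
The plan is to prove Lemma \ref{cartesian2} by showing the following ``exchange'' property: if $(\sigma_1, \tau_1)$ and $(\sigma_2, \tau_2)$ are both $T$-wise equilibria w.r.t.~$\vphi$, then $(\sigma_1, \tau_2)$ is also a $T$-wise equilibrium. This immediately gives that the set of $T$-wise equilibria is a Cartesian product: it is exactly $\{\sigma : \sigma$ appears in some $T$-wise equilibrium$\} \times \{\tau : \tau$ appears in some $T$-wise equilibrium$\}$, since any $\sigma$ from the first set can be paired with any $\tau$ from the second set via the exchange property applied to the two equilibria witnessing their membership. Lemma \ref{cartesian} is then the special case $T = V$.

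So the core task is the exchange property. First I would spell out what needs to be checked: $\sigma_1$ is a $T$-wise optimal response to $\tau_2$, and symmetrically $\tau_2$ is a $T$-wise optimal response to $\sigma_1$; by symmetry it suffices to prove the first. Fix $v \in T$ and an infinite path $h \in \Cons(v, \tau_2)$. I must show $\vphi\circ\col(h(v,\sigma_1,\tau_2)) \ge \vphi\circ\col(h)$. The natural chain is to compare through the ``diagonal'' plays: relate $h(v,\sigma_1,\tau_2)$ to $h(v,\sigma_1,\tau_1)$ and $h(v,\sigma_2,\tau_2)$, and use that $(\sigma_1,\tau_1)$ and $(\sigma_2,\tau_2)$ are equilibria. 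Concretely, since $\tau_1$ is a $T$-wise optimal response to $\sigma_1$, the play $h(v,\sigma_1,\tau_1)$ is $\le$-minimal (from Min's point of view, i.e.~$\le$) among plays in $\Cons(v,\sigma_1)$; and since $h(v,\sigma_1,\tau_2) \in \Cons(v,\sigma_1)$, we get $\vphi\circ\col(h(v,\sigma_1,\tau_2)) \ge \vphi\circ\col(h(v,\sigma_1,\tau_1))$. Symmetrically, using that $\sigma_1$ is a $T$-wise optimal response to $\tau_1$ and $h(v,\sigma_1,\tau_1) \in \Cons(v,\tau_1)$ is not directly what I want; instead I should chase the other direction: $\vphi\circ\col(h(v,\sigma_1,\tau_2)) \ge \vphi\circ\col(h(v,\sigma_2,\tau_2))$? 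That is not obviously available. Let me reorganize: I want a lower bound on $\vphi\circ\col(h(v,\sigma_1,\tau_2))$ and an upper bound on $\vphi\circ\col(h)$ in terms of common quantities.

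The clean argument: since $(\sigma_2,\tau_2)$ is a $T$-wise equilibrium and $v \in T$, $\sigma_2$ is a $T$-wise optimal response to $\tau_2$, so for every $h' \in \Cons(v,\tau_2)$ we have $\vphi\circ\col(h(v,\sigma_2,\tau_2)) \ge \vphi\circ\col(h')$; in particular, taking $h' = h$ and $h' = h(v,\sigma_1,\tau_2)$ (both lie in $\Cons(v,\tau_2)$), $h(v,\sigma_2,\tau_2)$ dominates both. Separately, since $(\sigma_1,\tau_1)$ is a $T$-wise equilibrium, $\tau_1$ is a $T$-wise optimal response to $\sigma_1$, so $\vphi\circ\col(h(v,\sigma_1,\tau_1)) \le \vphi\circ\col(h'')$ for every $h'' \in \Cons(v,\sigma_1)$; in particular $h(v,\sigma_1,\tau_1)$ is dominated by $h(v,\sigma_1,\tau_2)$. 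This still does not chain. The trick I actually need is the standard one: $\vphi\circ\col(h(v,\sigma_1,\tau_2)) \ge \vphi\circ\col(h(v,\sigma_1,\tau_1))$ (as above), and $\vphi\circ\col(h(v,\sigma_1,\tau_1)) = \vphi\circ\col$ of an equilibrium value which, since $\sigma_1$ is optimal against $\tau_1$, is $\ge$ every play against $\tau_1$ — but $h$ is against $\tau_2$, not $\tau_1$, so I compare via $\sigma_2$: $\vphi\circ\col(h) \le \vphi\circ\col(h(v,\sigma_2,\tau_2))$ and $\vphi\circ\col(h(v,\sigma_2,\tau_2)) \le \vphi\circ\col(h(v,\sigma_2,\tau_1)) \le \vphi\circ\col(h(v,\sigma_1,\tau_1)) \le \vphi\circ\col(h(v,\sigma_1,\tau_2))$, where the first inequality uses $\tau_1$ optimal against $\sigma_2$? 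No — I need $\tau_1$ optimal against $\sigma_2$, which I do not have. The correct path uses only the two given equilibria: $\vphi\circ\col(h) \le \vphi\circ\col(h(v,\sigma_2,\tau_2))$ (since $\sigma_2$ is $T$-wise optimal vs $\tau_2$) $\le \vphi\circ\col(h(v,\sigma_1,\tau_2))$ requires $h(v,\sigma_2,\tau_2) \in \Cons(v,\tau_2)$ dominated by $\sigma_1$'s response — again not given.

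I expect the main obstacle to be exactly this bookkeeping, so I would instead follow the route through the value: define, for $v \in T$, the quantities $a = \vphi\circ\col(h(v,\sigma_1,\tau_1))$ and $b = \vphi\circ\col(h(v,\sigma_2,\tau_2))$. From $(\sigma_1,\tau_1)$ being an equilibrium, $a$ is simultaneously $\ge$ all plays in $\Cons(v,\tau_1)$ and $\le$ all plays in $\Cons(v,\sigma_1)$; similarly for $b$. Now $h(v,\sigma_1,\tau_2) \in \Cons(v,\sigma_1) \cap \Cons(v,\tau_2)$, so $\vphi\circ\col(h(v,\sigma_1,\tau_2)) \ge a$ (from $\sigma_1$'s side of the first equilibrium) and $\vphi\circ\col(h(v,\sigma_1,\tau_2)) \le b$ (from $\tau_2$'s side of the second equilibrium), giving $a \le b$; by the symmetric play $h(v,\sigma_2,\tau_1)$ we get $b \le a$, hence $a = b$ and both diagonal plays have the same $\vphi\circ\col$ value as $h(v,\sigma_1,\tau_2)$; call it $w_v$. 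Finally, for any $h \in \Cons(v,\tau_2)$, $\vphi\circ\col(h) \le b = w_v = \vphi\circ\col(h(v,\sigma_1,\tau_2))$, so $\sigma_1$ is a $T$-wise optimal response to $\tau_2$; the symmetric computation for any $h \in \Cons(v,\sigma_1)$ gives $\vphi\circ\col(h) \ge a = w_v$, so $\tau_2$ is a $T$-wise optimal response to $\sigma_1$. Hence $(\sigma_1,\tau_2)$ is a $T$-wise equilibrium, which is precisely the exchange property, and the Cartesian-product claim follows.
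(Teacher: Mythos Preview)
Your proposal is correct and, once you arrive at the final argument via the value $a=b$, it is essentially the same proof as the paper's: both establish the exchange property by showing the four-term inequality cycle $\vphi\circ\col(h(v,\sigma_1,\tau_1)) \ge \vphi\circ\col(h(v,\sigma_2,\tau_1)) \ge \vphi\circ\col(h(v,\sigma_2,\tau_2)) \ge \vphi\circ\col(h(v,\sigma_1,\tau_2)) \ge \vphi\circ\col(h(v,\sigma_1,\tau_1))$ (using each of the four optimal-response conditions once), concluding all four values coincide, and then bounding $\vphi\circ\col(h)$ by $\vphi\circ\col(h(v,\sigma_2,\tau_2))$. The exploratory dead ends in your write-up should be removed; the content that remains is exactly the paper's argument, just organized around $a$ and $b$ rather than written as a single cyclic chain.
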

\begin{proof}
It is sufficient to show the following: if $(\sigma_1, \tau_1)$ and $(\sigma_2, \tau_2)$ are $T$-wise equilibria, then so is $(\sigma_1, \tau_2)$. That is, our goal is to show that $\sigma_1$ is a $T$-wise optimal response to $\tau_2$, and that $\tau_2$ is a $T$-wise optimal response to $\sigma_1$. We only prove the first claim, the second one can be proved similarly. Take any $v\in T$ and any infinite $h\in\Cons(v, \tau_2)$. We have to show that $\vphi\circ \col\big(h(v, \sigma_1,\tau_2)\big) \ge \vphi\circ\col(h)$. We first show that $\vphi\circ \col\big(h(v, \sigma_1,\tau_2)\big) = \vphi\circ \col\big(h(v, \sigma_1,\tau_1)\big) = \vphi\circ \col\big(h(v, \sigma_2,\tau_2)\big)$. Indeed,
\begin{align*}
\vphi\circ \col\big(h(v, \sigma_1,\tau_1)\big) &\ge \vphi\circ \col\big(h(v, \sigma_2,\tau_1)\big) \ge \vphi\circ \col\big(h(v, \sigma_2,\tau_2)\big) \\
&\ge \vphi\circ \col\big(h(v, \sigma_1,\tau_2)\big) \ge \vphi\circ \col\big(h(v, \sigma_1,\tau_1)\big).
\end{align*}
The first inequality here holds because $\sigma_1$ is a $T$-wise optimal response to $\tau_1$. The second inequality here holds because $\tau_2$ is a $T$-wise optimal response to $\sigma_2$. The third inequality here holds because $\sigma_2$ is a $T$-wise optimal response to $\tau_2$. The fourth inequality here holds because $\tau_1$ is a $T$-wise optimal response to $\sigma_1$.

As we have shown, $\vphi\circ \col\big(h(v, \sigma_1,\tau_2)\big) = \vphi\circ \col\big(h(v, \sigma_2,\tau_2)\big)$.  In turn, since $h\in \Cons(v, \tau_2)$, and since $\sigma_2$ is a $T$-wise optimal response to $\tau_2$, we have that $\vphi\circ \col\big(h(v, \sigma_2,\tau_2)\big) \ge \vphi\circ\col(h)$. Therefore, we get $\vphi\circ \col\big(h(v, \sigma_1,\tau_2)\big) \ge \vphi\circ\col(h)$.

\end{proof}

\subsection{Plan of the proof}
 We reduce Theorem \ref{lifting} to a statement about positional strategies (namely, to Lemma \ref{main_lemma} below). First we need a classical concept of \emph{product arenas}.

\begin{defi}[Product arenas]
\label{product_arena}
Let $\mathcal{M} = \langle M, m_{init}, \delta\colon M\times C\to M\rangle$ be a memory skeleton and $\mathcal{A} = \langle V, V_\Max, V_\Min, E, \source, \target, \col\rangle$ be an arena. Then $\mathcal{M}\times\mathcal{A}$ stands for an arena, where
\begin{itemize}
\item the set of nodes is $M\times V$;
\item the set of Max's nodes is $M\times V_\Max$;
\item the set of Min's nodes is $M\times V_\Min$;
\item the set of edges is $M\times E$;
\item the source function is defined as follows: $\source((m, e)) = (m, \source(e))$;
\item the target function is defined as follows: $\target((m, e)) = \big(\delta(m, \col(e)), \target(e) \big)$;
\item the coloring function is defined as follows: $\col((m, e)) = \col(e)$.
\end{itemize}
\end{defi}

The following is a standard observation that product arenas reduce finite-memory determinacy to positional determinacy.

\begin{obs}
\label{prod_obs}
Let \[\mathcal{M} = \langle M, m_{init}, \delta\colon M\times C\to M\rangle\] be a memory skeleton and \[\mathcal{A} = \langle V, V_\Max, V_\Min, E, \source, \target, \col\rangle\] be an arena. Then for every $S\subseteq V$ the following holds: if $\mathcal{M}\times\mathcal{A}$ has an $(\{m_{init}\}\times S)$-wise positional equilibrium, then $\mathcal{A}$ has an $S$-wise  $\mathcal{M}$-strategy equilibrium.
\end{obs}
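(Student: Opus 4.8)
\textbf{Plan of the proof of Observation \ref{prod_obs}.}

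The plan is to set up a canonical correspondence between $\mathcal{M}$-strategies in $\mathcal{A}$ and positional strategies in $\mathcal{M}\times\mathcal{A}$, and then check that this correspondence preserves the colour sequences of the relevant plays, so that optimality (which depends only on $\vphi\circ\col$ of plays) is transferred back and forth. The key bookkeeping device is the map that sends a finite path $h = e_1 e_2\ldots e_\ell$ of $\mathcal{A}$ with $\source(h)\in S$ to the unique path $\widehat{h}$ of $\mathcal{M}\times\mathcal{A}$ with $\source(\widehat h) = (m_{init},\source(h))$ and with the $i$-th edge of the form $(m_{i-1}, e_i)$, where $m_0 = m_{init}$ and $m_i = \delta(m_{i-1},\col(e_i)) = \delta(m_{init},\col(e_1\ldots e_i))$. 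By the definition of the target function in $\mathcal{M}\times\mathcal{A}$ this $\widehat h$ is indeed a path, $\col(\widehat h) = \col(h)$, and $\target(\widehat h) = (\delta(m_{init},\col(h)),\target(h))$; conversely every path of $\mathcal{M}\times\mathcal{A}$ starting in $\{m_{init}\}\times S$ arises this way from its projection to $\mathcal{A}$. In particular the state component of the current node of $\widehat h$ is exactly the memory state $\delta(m_{init},\col(h))$ that an $\mathcal{M}$-strategy is allowed to look at.

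Now suppose $\mathcal{M}\times\mathcal{A}$ has an $(\{m_{init}\}\times S)$-wise positional equilibrium $(\widehat\sigma,\widehat\tau)$. First I would turn $\widehat\sigma$ and $\widehat\tau$ into strategies $\sigma,\tau$ in $\mathcal{A}$: given a position $h$ of, say, Max (so $\target(h)\in V_\Max$), let $\sigma(h)$ be the $\mathcal{A}$-edge underlying the edge $\widehat\sigma\big((\delta(m_{init},\col(h)),\target(h))\big)$ of $\mathcal{M}\times\mathcal{A}$; note this depends only on $\target(h)$ and on $\delta(m_{init},\col(h))$, so $\sigma$ is by definition an $\mathcal{M}$-strategy, and similarly for $\tau$. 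The central routine check is that for $v\in S$ the map $h\mapsto\widehat h$ restricts to a bijection between paths from $v$ consistent with $\sigma$ (resp.\ $\tau$) and paths from $(m_{init},v)$ consistent with $\widehat\sigma$ (resp.\ $\widehat\tau$): a path stays consistent precisely because at each node it is the positional choice $\widehat\sigma$/$\widehat\tau$ that was used to define $\sigma$/$\tau$, and these choices are read off from exactly the data $(\delta(m_{init},\col(\cdot)),\target(\cdot))$ that $\widehat h$ records. Consequently $h(v,\sigma,\tau)$ maps to $h\big((m_{init},v),\widehat\sigma,\widehat\tau\big)$, and $\Cons(v,\tau)$ maps onto $\Cons\big((m_{init},v),\widehat\tau\big)$ (for infinite paths, via finite prefixes), both with colour sequences preserved.

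With these facts in hand the conclusion is immediate: for $v\in S$ and any infinite $h\in\Cons(v,\tau)$, its image $\widehat h$ lies in $\Cons\big((m_{init},v),\widehat\tau\big)$, so by $(\{m_{init}\}\times S)$-wise optimality of $\widehat\sigma$ against $\widehat\tau$ we get $\vphi\circ\col\big(h((m_{init},v),\widehat\sigma,\widehat\tau)\big)\ge\vphi\circ\col(\widehat h)$, which by colour-preservation is exactly $\vphi\circ\col\big(h(v,\sigma,\tau)\big)\ge\vphi\circ\col(h)$; symmetrically $\tau$ is an $S$-wise optimal response to $\sigma$. Hence $(\sigma,\tau)$ is an $S$-wise $\mathcal{M}$-strategy equilibrium in $\mathcal{A}$. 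I expect the main (though entirely mechanical) obstacle to be the careful verification that the path-translation $h\mapsto\widehat h$ is a consistency-preserving bijection in both directions, and in particular the edge case of zero-length paths $\lambda_v$ and of players owning no nodes, where one must recall the convention that the opponent still has the empty strategy; none of this is deep, but it is where all the definitions have to be lined up precisely.
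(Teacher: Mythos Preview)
Your proposal is correct and follows essentially the same approach as the paper's proof: define $\sigma,\tau$ from $\widehat\sigma,\widehat\tau$ via the node $(\,\delta(m_{init},\col(h)),\target(h)\,)$, lift paths from $\mathcal{A}$ to $\mathcal{M}\times\mathcal{A}$ while preserving $\col$, and verify that consistency and hence optimality transfer. The paper phrases the final step as a proof by contradiction and spells out the edge-by-edge check that the lifted play $(m_1,e_1)(m_2,e_2)\ldots$ is consistent with $\widehat\Sigma$ and $\widehat T$, but this is exactly the ``mechanical'' verification you anticipate.
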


Its proof can be found in Appendix \ref{app:prod_obs}.

 Next we introduce one more concept which we need for the reduction, namely, one of \emph{$\mathcal{M}$-triviality}.

\begin{defi}
\label{consistent}
 Let $\mathcal{M}  = \langle M, m_{init}, \delta\colon M\times C \to M\rangle$ be a memory skeleton. A pair $(\mathcal{A}, f)$ of an arena $\mathcal{A} = \langle V, V_\Max, V_\Min, E, \source, \target, \col\rangle$ and a function $f\colon V\to M$ is called \textbf{$\mathcal{M}$-trivial} if for every $e\in E$ it holds that
$\delta\big(f(\source(e)), \col(e)\big) = f(\target(e))$.
\end{defi}

Informally, $f$ is a mapping from $\mathcal{A}$ to the transition graph of $\mathcal{M}$ which takes into account the colors of the edges. Of course, there are arenas that belong to no $\mathcal{M}$-trivial pair.
We observe that  $\mathcal{M}$-strategies, in a sense, degenerate to positional ones in $\mathcal{M}$-trivial pairs.

\begin{obs}
\label{cons_obs}
Let $\mathcal{M}  = \langle M, m_{init}, \delta\colon M\times C \to M\rangle$ be a memory skeleton. Then for every $\mathcal{M}$-trivial pair $(\mathcal{A}, f)$ the following holds: if $\mathcal{A}$ has a uniform $\mathcal{M}$-strategy equilibrium, then $\mathcal{A}$ has an $f^{-1}(m_{init})$-wise positional equilibrium.
\end{obs}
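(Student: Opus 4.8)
The plan is to start from a uniform $\mathcal{M}$-strategy equilibrium $(\sigma,\tau)$ in $\mathcal{A}$, whose existence is assumed, and to \emph{project it down} to a pair of positional strategies that replicates its behaviour on every play starting in $f^{-1}(m_{init})$. The whole argument rests on one elementary \emph{memory-pinning} observation. If $h = e_1 e_2 \ldots e_k$ is a finite path with $f(\source(h)) = m_{init}$, then applying the defining equation of $\mathcal{M}$-triviality along $h$ (namely $f(\target(e_i)) = \delta(f(\source(e_i)), \col(e_i))$, together with $\source(e_{i+1}) = \target(e_i)$) yields, by an immediate induction, $\delta(m_{init}, \col(h)) = f(\target(h))$. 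Hence at every position $h$ reachable from a node of $f^{-1}(m_{init})$, the state $\delta(m_{init}, \col(h))$ that $\mathcal{M}$ enters along $h$ equals $f(\target(h))$, so it is a function of $\target(h)$ alone; this is exactly what will let $\mathcal{M}$-strategies degenerate to positional ones here.

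Next I would define positional strategies $\sigma'$ and $\tau'$. Call a node $v$ \emph{reachable} if there is a finite path from some node of $f^{-1}(m_{init})$ to $v$. For a reachable $v \in V_\Max$, pick any finite path $h$ with $\source(h) \in f^{-1}(m_{init})$ and $\target(h) = v$, and put $\sigma'(v) := \sigma(h)$. By the memory-pinning observation every such $h$ satisfies $\delta(m_{init}, \col(h)) = f(v)$, so two such paths $h_1,h_2$ have $\target(h_1)=\target(h_2)$ and $\delta(m_{init},\col(h_1)) = \delta(m_{init},\col(h_2))$; since $\sigma$ is an $\mathcal{M}$-strategy, $\sigma(h_1) = \sigma(h_2)$, and $\sigma'$ is well-defined. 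On the remaining Max-nodes extend $\sigma'$ arbitrarily to an outgoing edge (which exists by the definition of an arena), and define $\tau'$ on Min-nodes symmetrically. These are legitimate positional strategies because $\source(\sigma(h)) = \target(h) = v$, and likewise for $\tau'$.

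Finally I would verify that $(\sigma', \tau')$ is an $f^{-1}(m_{init})$-wise positional equilibrium. The key claim is that for every $v \in f^{-1}(m_{init})$, a path starting at $v$ is consistent with $\sigma'$ (resp.\ $\tau'$) if and only if it is consistent with $\sigma$ (resp.\ $\tau$). One proves this by induction on prefixes: whenever such a play reaches a node $u \in V_\Max$ via a prefix $h'$, both $u$ is reachable and $h'$ satisfies $\source(h') \in f^{-1}(m_{init})$, so $h'$ is a legal witness in the definition of $\sigma'(u)$, whence $\sigma'(u) = \sigma(h')$ by the $\mathcal{M}$-strategy property of $\sigma$; the Min case is identical. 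Consequently $h(v, \sigma', \tau') = h(v, \sigma, \tau)$, and, among paths starting at $v$, the sets $\Cons(v, \tau')$ and $\Cons(v, \sigma')$ coincide with $\Cons(v, \tau)$ and $\Cons(v, \sigma)$. Now take $v \in f^{-1}(m_{init})$ and an infinite $h \in \Cons(v, \tau')$: then $h \in \Cons(v, \tau)$, and since $(\sigma,\tau)$ is a uniform (hence in particular $\{v\}$-wise) equilibrium, $\vphi \circ \col(h(v,\sigma',\tau')) = \vphi \circ \col(h(v,\sigma,\tau)) \ge \vphi \circ \col(h)$; the symmetric inequality for an infinite $h \in \Cons(v,\sigma')$ is proved the same way using that $\tau$ is a uniformly optimal response to $\sigma$. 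The only delicate point is the bookkeeping showing that memory states along reachable plays are dictated by $f$, which underpins both the well-definedness of $\sigma',\tau'$ and the move-replication claim — but this is precisely the content of $\mathcal{M}$-triviality, so no genuine obstacle arises.
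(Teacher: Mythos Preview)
Your proof is correct and follows essentially the same approach as the paper: both rest on the observation that for any finite path $h$ with $\source(h)\in f^{-1}(m_{init})$ one has $\delta(m_{init},\col(h)) = f(\target(h))$, so an $\mathcal{M}$-strategy's move at $h$ depends only on $\target(h)$. The paper states this more tersely (``any $\mathcal{M}$-strategy coincides with some positional one on all plays that start in the nodes of $f^{-1}(m_{init})$''), whereas you spell out the explicit construction of $\sigma',\tau'$ and the consistency bookkeeping, but the underlying argument is identical.
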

\begin{proof}
Note that  for any finite path $h$ in $\mathcal{A}$ we have:
\[\delta(f(\source(h)), \col(h)) = f(\target(h)).\]
Indeed, this holds by definition as long as $h$ is a single edge; for longer $h$ this can be easily proved by induction on $|h|$.

To show the observation, we simply show that any $\mathcal{M}$-strategy coincides with some positional  one on all plays that start in the nodes of $f^{-1}(m_{init})$. 
 Indeed, a move of an $\mathcal{M}$-strategy in a position $h$ depends solely on $\target(h)$ and $\delta(m_{init}, \col(h))$. However, $\delta(m_{init}, \col(h)) = \delta\big(f(\source(h)), \col(h)\big) = f(\target(h))$ for all $h$ with $\source(h) \in f^{-1}(m_{init})$. In other words, for all such $h$ a move of an $\mathcal{M}$-strategy in $h$ is a function only of $\target(h)$, as required.
\end{proof}

We are ready to formulate a statement about positional strategies to which we reduce Theorem \ref{lifting}.
\begin{lem}
\label{main_lemma}
Let $\mathcal{M}  = \langle M, m_{init}, \delta\colon M\times C \to M\rangle$ be a memory skeleton. Assume that for every $\mathcal{M}$-trivial pair $(\mathcal{A}, f)$ such that $\mathcal{A}$ is one-player and has at most $2N - 1$ nodes there exists an $f^{-1}(m_{init})$-wise positional equilibrium in $\mathcal{A}$.

Then for every $\mathcal{M}$-trivial pair $(\mathcal{A}, f)$ such that $\mathcal{A}$ has at most $N$ nodes there exists an $f^{-1}(m_{init})$-wise positional equilibrium in $\mathcal{A}$.
\end{lem}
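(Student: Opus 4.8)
The plan is to prove Lemma~\ref{main_lemma} by imitating the proof of Proposition~\ref{positional_lifting} almost verbatim, with two adjustments: uniform equilibria are replaced everywhere by $f^{-1}(m_{init})$-wise equilibria (so Lemma~\ref{cartesian2}, with $T=f^{-1}(m_{init})$, plays the role of Lemma~\ref{cartesian}), and every auxiliary arena we build must be equipped with a labeling making it part of an $\mathcal{M}$-trivial pair, so that the hypothesis of the lemma can be applied to it. We argue by induction on the number $m$ of edges of $\mathcal{A}$, proving: every $\mathcal{M}$-trivial pair $(\mathcal{A},f)$ with $m$ edges and at most $N$ nodes has an $f^{-1}(m_{init})$-wise positional equilibrium. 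The base case $m=1$ forces $\mathcal{A}$ to consist of a single node with a loop, hence $\mathcal{A}$ is one-player with at most $2N-1$ nodes and the hypothesis applies directly.

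For the induction step, fix an $\mathcal{M}$-trivial pair $(\mathcal{A},f)$ with $\mathcal{A}=\langle V,V_\Max,V_\Min,E,\source,\target,\col\rangle$ having at most $N$ nodes. If $\mathcal{A}$ is one-player we invoke the hypothesis; otherwise, by Lemma~\ref{cartesian2} it suffices to exhibit an $f^{-1}(m_{init})$-wise equilibrium containing a positional strategy of Max, and symmetrically one containing a positional strategy of Min; we only treat the former. Since $\mathcal{A}$ is not one-player there is $w\in V_\Max$ of out-degree at least $2$; split its out-edges into non-empty sets $E_1(w)\sqcup E_2(w)$ and let $\mathcal{A}_i$ be $\mathcal{A}$ with the edges of $E_{3-i}(w)$ removed. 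Deleting edges only weakens the condition in Definition~\ref{consistent}, so $(\mathcal{A}_1,f)$ and $(\mathcal{A}_2,f)$ are $\mathcal{M}$-trivial pairs with at most $N$ nodes and fewer edges; by the induction hypothesis each has an $f^{-1}(m_{init})$-wise positional equilibrium $(\sigma_i,\tau_i)$. Define the Min strategies $\tau_{12}$ and $\tau_{21}$ exactly as in Section~\ref{sec:special_case} (a counter $I\in\{1,2\}$ instructing Min to play $\tau_I$, flipped whenever Max leaves $w$ along an edge outside $E_I(w)$, with initial value $1$, resp.\ $2$). As in the warm-up, $\tau_{12}$ plays exactly like $\tau_1$ against $\sigma_1$ (the counter never changes, since $\sigma_1$ avoids $E_2(w)$), so $\tau_{12}$ is an $f^{-1}(m_{init})$-wise optimal response to $\sigma_1$; likewise $\tau_{21}$ to $\sigma_2$.

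It remains to show that $\sigma_1$ is an $f^{-1}(m_{init})$-wise optimal response to $\tau_{12}$, or $\sigma_2$ is one to $\tau_{21}$. We build the one-player auxiliary arena $\mathcal{B}$ of Section~\ref{sec:special_case}: the disjoint union of $(\mathcal{A}_1)_{\tau_1}$ and $(\mathcal{A}_2)_{\tau_2}$ with the two copies of $w$ merged; it has at most $2|V|-1\le 2N-1$ nodes and every Min node has out-degree $1$. The new ingredient is the labeling $g\colon V(\mathcal{B})\to M$ sending every left copy and every right copy of a node $v$ to $f(v)$; this is well-defined at the merged node because both copies of $w$ would map to $f(w)$. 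Every edge of $\mathcal{B}$ is (a copy of) an edge of $\mathcal{A}$, joining copies of its $\mathcal{A}$-endpoints and carrying the same color, so the $\mathcal{M}$-triviality of $(\mathcal{A},f)$ transfers to $(\mathcal{B},g)$, and $v\in f^{-1}(m_{init})$ iff the left copy of $v$ (equivalently, the right copy of $v$) lies in $g^{-1}(m_{init})$. Applying the hypothesis to $(\mathcal{B},g)$ gives a $g^{-1}(m_{init})$-wise positional equilibrium $(\widehat{\Sigma},T)$, where $T$ is Min's unique strategy. Now we reuse the argument of Lemma~\ref{translation} verbatim — every $\tau_{12}$-consistent infinite play of $\mathcal{A}$ from $v$ is mirrored color by color by a path of $\mathcal{B}$ from the left copy of $v$, and likewise $\tau_{21}$ with right copies — and reason exactly as in Section~\ref{sec:special_case}: since $\widehat{\Sigma}$ is positional it picks a single out-edge at $w$, and the out-edges of $w$ in $\mathcal{B}$ are precisely $E_1(w)\cup E_2(w)$ (passing to $(\mathcal{A}_i)_{\tau_i}$ keeps all of $E_i(w)$, as $w$ is a Max node), so $\widehat{\Sigma}(w)$ lies in $E_1(w)$ or in $E_2(w)$. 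In the first case the left restriction $\widehat{\sigma}$ of $\widehat{\Sigma}$ is an $f^{-1}(m_{init})$-wise optimal response to $\tau_{12}$ in $\mathcal{A}$, and $\sigma_1$ is at least as good as $\widehat{\sigma}$ against $\tau_{12}$ on $f^{-1}(m_{init})$ (both avoid $E_2(w)$, so both stay in $\mathcal{A}_1$, where $\sigma_1$ is $f^{-1}(m_{init})$-wise optimal against $\tau_1=\tau_{12}$), so $(\sigma_1,\tau_{12})$ is the desired equilibrium; the second case is symmetric with $\widehat{\sigma}$ taken on the right and $\tau_{21}$. The only genuinely new labour relative to Proposition~\ref{positional_lifting} — and the main, though modest, obstacle — is this bookkeeping: checking that $\mathcal{M}$-triviality and the correspondence between $f^{-1}(m_{init})$ and $g^{-1}(m_{init})$ survive the edge-deletions and the merge, and that the ``at least as good'' comparisons, which now hold only on $f^{-1}(m_{init})$, still suffice. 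The game-theoretic core is exactly that of the warm-up.
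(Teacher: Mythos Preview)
Your proposal is correct and follows essentially the same approach as the paper's own proof: induction on the number of edges, reduction via Lemma~\ref{cartesian2} to one side, splitting at a high-degree Max node $w$, building the glued one-player arena $\mathcal{B}$, equipping it with the labeling $g$ inherited from $f$ via prototypes, and tracking that the $f^{-1}(m_{init})$/$g^{-1}(m_{init})$ correspondence makes the warm-up argument go through at the $T$-wise level. The paper's proof is organized identically, and your identification of the only new bookkeeping (preservation of $\mathcal{M}$-triviality under edge deletion and under the merge, plus the matching of $f^{-1}(m_{init})$ with left/right copies in $g^{-1}(m_{init})$) is exactly what the paper singles out as well.
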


\subsection{Derivation of Theorem \ref{lifting} from Lemma \ref{main_lemma}}

Let \[\mathcal{A} = \langle V, V_\Max, V_\Min, E, \source, \target, \col\rangle\] be an arena with at most $n$ nodes. Our goal is to show that $\mathcal{A}$ has a uniform $\mathcal{M}$-strategy equilibrium. By Observation \ref{prod_obs}, it is sufficient to show that the arena $\mathcal{M}\times \mathcal{A}$ has an $\{m_{init}\}\times V$-wise positional equilibrium. It is easy to see that a pair $(\mathcal{M}\times\mathcal{A}, f)$, where 
\[f\colon M\times V\to M, \qquad f((m, v)) = m,\]
is an $\mathcal{M}$-trivial pair, by definition of $\mathcal{M}\times \mathcal{A}$. Observe that $\{m_{init}\}\times V = f^{-1}(m_{init})$, so we only have to show that $\mathcal{M}\times \mathcal{A}$ has an $f^{-1}(m_{init})$-wise positional equilibrium. Since $\mathcal{M}\times \mathcal{A}$ has at most $|\mathcal{M}| \cdot n$ nodes, it remains to explain why the assumption of Lemma \ref{main_lemma} holds for $N = |\mathcal{M}| \cdot n$.

By the assumption of Theorem \ref{lifting} all one-player arenas with at most $2|\mathcal{M}|\cdot n - 1 = 2N - 1$ nodes have a uniform $\mathcal{M}$-strategy equilibrium. In particular, this applies to any one-player arena $\mathcal{A}^\prime$ with at most $2N - 1$ nodes belonging to some $\mathcal{M}$-trivial pair $(\mathcal{A}^\prime, f)$. By Observation \ref{cons_obs} this means that all such $\mathcal{A}^\prime$ have a $f^{-1}(m_{init})$-wise positional equilibrium, as required.

\subsection{Proof of Lemma \ref{main_lemma}}
We use the same technique and terminology as in Section \ref{sec:special_case}.
 We are now proving by induction on $m$ the following claim: for every $m$ and for every $\mathcal{M}$-trivial pair $(\mathcal{A}, f)$ such that $\mathcal{A}$ has $m$ edges and at most $N$ nodes there exists an  $f^{-1}(m_{init})$-wise positional equilibrium in $\mathcal{A}$.

Induction base ($m = 1$) again requires no argument, and we proceed to the induction step. Consider any $\mathcal{M}$-trivial pair 
$(\mathcal{A}, f)$, where $\mathcal{A} = \langle V, V_\Max, V_\Min, E, \source, \target, \col\rangle$ has at most $N$ nodes. Our goal is to show that $\mathcal{A}$ has an $f^{-1}(m_{init})$-wise positional equilibrium, provided that an analogous claim is already proved  for all $\mathcal{M}$-trivial pairs $(\mathcal{A}^\prime, f^\prime)$ in which $\mathcal{A}^\prime$ has at most $N$ nodes and fewer edges than $\mathcal{A}$.
Since the set of $f^{-1}(m_{init})$-wise equilibria is a Cartesian product by Lemma \ref{cartesian2},
it is enough to establish the following two claims:
\begin{itemize}
\item \textbf{\emph{(a)}} in $\mathcal{A}$ there exists an $f^{-1}(m_{init})$-wise equilibrium including a positional strategy of Max;
\item \textbf{\emph{(b)}} in $\mathcal{A}$ there exists an $f^{-1}(m_{init})$-wise equilibrium including a positional strategy of Min.
\end{itemize}
We only show \textbf{\emph{(a)}}, a proof of \textbf{\emph{(b)}} is similar. As before, we may assume that $\mathcal{A}$ is not one-player so that there exists a node $w\in V_\Max$ with out-degree at least $2$. We partition the set of its out-going edges into two disjoint non-empty sets $E_1(w)$ and $E_2(w)$. Then we define arenas $\mathcal{A}_1$ and $\mathcal{A}_2$ exactly as in Section \ref{sec:special_case}. Since $(\mathcal{A}, f)$ is an $\mathcal{M}$-trivial pair, then so are pairs $(\mathcal{A}_1, f)$ and $(\mathcal{A}_2, f)$.
 Indeed, $\mathcal{A}_1$ and $\mathcal{A}_2$ were obtained by simply throwing away some edges of $\mathcal{A}$. The remaining edges satisfy the definition of $\mathcal{M}$-triviality with respect to $f$ just because they do so inside $\mathcal{A}$. 

Note that $\mathcal{A}_1$ and $\mathcal{A}_2$ both have fewer edges than $\mathcal{A}$ and at most as many nodes. So by the induction hypothesis both these arenas have an $f^{-1}(m_{init})$-wise positional equilibrium. Let $(\sigma_1, \tau_1)$ be an $f^{-1}(m_{init})$-wise positional equilibrium in $\mathcal{A}_1$ and $(\sigma_2, \tau_2)$ be an $f^{-1}(m_{init})$-wise positional equilibrium in $\mathcal{A}_2$. Next, we define two auxiliary strategies $\tau_{12}$ and $\tau_{21}$ of Min exactly as in Section \ref{sec:special_case}. Our goal is to show that either $(\sigma_1, \tau_{12})$ is an $f^{-1}(m_{init})$-wise equilibrium in $\mathcal{A}$ or $(\sigma_2,\tau_{21})$ is an $f^{-1}(m_{init})$-wise equilibrium in $\mathcal{A}$.  

By the same argument as in Section \ref{sec:special_case}, we have that $\tau_{12}$ is an $f^{-1}(m_{init})$-wise optimal response to $\sigma_1$ and $\tau_{21}$ is an $f^{-1}(m_{init})$-wise optimal response to $\sigma_2$. The main challenge is to show the opposite for at least one of the pairs $(\sigma_1, \tau_{12})$ and $(\sigma_2, \tau_{21})$.

 For that we define a one-player arena $\mathcal{B}$ exactly as in Section \ref{sec:special_case} (see Figure \ref{two_clouds}). It has $2|V| - 1\le 2N - 1$ nodes.  
We will apply the assumption of Lemma \ref{main_lemma} to $\mathcal{B}$. More precisely, this will be done for some $\mathcal{M}$-trivial pair which includes $\mathcal{B}$. For that we define the following mapping $g$ from the set of nodes of $\mathcal{B}$ to the set of states of $\mathcal{M}$. Namely, if $v^\prime$ is a node of $\mathcal{B}$, we set $g(v^\prime) = f(v)$, where $v$ is the prototype of $v^\prime$. Observe that $(\mathcal{B}, g)$ is an $\mathcal{M}$-trivial pair. Indeed any edge of $\mathcal{B}$ is between two nodes whose prototypes are connected in $\mathcal{A}$ by an edge of the same color. Thus, by the assumption of Lemma \ref{main_lemma}, the arena $\mathcal{B}$ has a $g^{-1}(m_{init})$-wise positional equilibrium $(\widehat{\Sigma}, T)$ (as before, in $\mathcal{B}$ there are no strategies of Min other than $T$). It is sufficient to establish the following two claims:
\begin{itemize}
\item if $\widehat{\Sigma}$ applies an edge from $E_1(w)$ in $w$, then $\sigma_1$ is an $f^{-1}(m_{init})$-wise optimal response to $\tau_{12}$ in $\mathcal{A}$;
\item if $\widehat{\Sigma}$ applies an edge from $E_2(w)$ in $w$,  then $\sigma_2$ is an $f^{-1}(m_{init})$-wise optimal response to $\tau_{21}$ in $\mathcal{A}$.
\end{itemize}
A key observation here is that $g^{-1}(m_{init})$ is the union of the left copies of the nodes from $f^{-1}(m_{init})$ and the right copies of the nodes of $f^{-1}(m_{init})$. In fact, for a proof of the first claim we only need a fact that $g^{-1}(m_{init})$ includes  all the left copies of the nodes from $f^{-1}(m_{init})$. Correspondingly, only the right copies of  $f^{-1}(m_{init})$  are relevant for a proof of the second claim.

We only show the first claim, the second one can be proved similarly. As in Section \ref{sec:special_case}, the argument is carried out through a positional strategy $\widehat{\sigma}$ of Max in $\mathcal{A}$ obtained by restricting $\widehat{\Sigma}$ to the left ellipse. First we observe that in any node from $f^{-1}(m_{init})$ the strategy $\sigma_1$ is at least as good against $\tau_{12}$ as $\widehat{\sigma}$. Indeed, both $\sigma_1$ and $\widehat{\sigma}$ are strategies in $\mathcal{A}_1$ whereas $\sigma_1$ is an $f^{-1}(m_{init})$-wise optimal response to $\tau_1$ in $\mathcal{A}_1$ by definition. On the other hand, $\tau_{12}$ plays against $\sigma_1$ and $\widehat{\sigma}$ exactly as $\tau_1$.

It remains to show that $\widehat{\sigma}$ is an optimal response to $\tau_{12}$ in any node from $f^{-1}(m_{init})$. This can be done by exactly the same argument as in the last paragraph of  Section \ref{sec:special_case}. A difference is that now we have a weaker assumption about $\widehat{\Sigma}$; namely, we only know that $\widehat{\Sigma}$ is optimal in the nodes from $g^{-1}(m_{init})$ (while before it was optimal everywhere in $\mathcal{B}$). Correspondingly, we are proving a weaker statement. Namely, instead of proving that $\widehat{\sigma}$ is an optimal response to $\tau_{12}$ everywhere in $\mathcal{A}$, we are only proving this for all $v\in f^{-1}(m_{init})$. It can be checked that in the argument for a specific $v$ we only require optimality of $\widehat{\Sigma}$ in the left copy of $v$; so if $v\in f^{-1}(m_{init})$, then its left copy is in $g^{-1}(m_{init})$ so that the argument still works.

\section{Proof of Theorem \ref{thm:sharp}}
\label{sec:sharp}

Let the set of colors be $C = \{-1, 1\}$. Define the payoff function $\psi \colon C^\omega \to \{0, 1\}$ as follows:
\[\psi(c_1 c_2 c_3 \ldots) = 1 \iff \mbox{either } \big(\lim\limits_{n\to\infty} \sum\limits_{i = 1}^n c_i = +\infty\big) \mbox{ or } \big(\sum\limits_{i = 1}^n c_i = 0 \mbox{ for infinitely many } n\big).\]
We assume the standard ordering on $\{0, 1\} = \psi(C^\omega)$ so that $1$ is interpreted as victory of Max and $0$ is interpreted as victory of Min. We show that $\psi\in\oneFMD(2n + 2)\setminus \FMD$. In fact, this payoff was already considered in~\cite{bouyer2020games}. It is shown there that $\psi\in\oneFMD\setminus \FMD$. Thus, our contribution is the inclusion $\psi\in\oneFMD(2n + 2)$.

 It will be more convenient to refer to the elements of $C$ as \emph{weights} rather than as colors. Correspondingly, by the weight of a path we will mean the sum of the weights of its edges. Further, we will call a path \emph{positive} if its weight is positive. We define negative and zero paths similarly (in fact, we will apply this terminology only to cycles).

Given $n\in\mathbb{Z}^+$, define a memory skeleton $\mathcal{M}_n$ which stores the current sum of the weights until its absolute value exceeds $n$ (in this case it comes into a special ``invalid'' state and stays in it forever). We will denote its normal states by integers from $-n$ to $n$, and we will denote its invalid state by $\bot$.
The number of states of $\mathcal{M}_n$ is $2n + 2$. The rest of the proof is organized as follows:
\begin{itemize}
\item in Subsection \ref{subsec:one}  we show that all one-player arenas with at most $n$ nodes have a uniform $\mathcal{M}_n$-strategy equilibrium w.r.t.~$\psi$ (and, thus, $\psi\in\oneFMD(2n + 2)$); 
\item  in Subsection \ref{subsec:two} we show that $\psi$ is not in finite-memory determined. Although it is already established  in~\cite[Section 3.4]{bouyer2020games}, we provide this argument for completeness.
\end{itemize}

\subsection{One-player arenas}
\label{subsec:one}

Consider a one-player arena $\mathcal{A} = \langle V, V_\Max, V_\Min, E, \source, \target, \col\rangle$ with at most $n$ nodes. By definition, either all nodes of Min have out-degree $1$ or all nodes of Max have out-degree $1$. We will consider these two cases separately. In both cases we will deal with a product arena $\mathcal{M}_n\times \mathcal{A}$ (see Definition \ref{product_arena}). Recall that the nodes of $\mathcal{M}_n\times\mathcal{A}$ are pairs of the form $(\mbox{state of } \mathcal{M}_n, \mbox{node of } \mathcal{A})$ so that it will be convenient to use the following notation for these nodes:
\[(-n, v), \ldots (-1, v), (0, v), (1, v), \ldots, (n, v), (\bot, v), \qquad v\in V.\]

\medskip

\emph{Case 1: all nodes of Min have out-degree $1$}. Let $V_+$ be the set of nodes of $\mathcal{A}$ from where one can reach a positive cycle. By a standard reasoning, inside $V_+$ Max has a positional strategy which guarantees that the sum of the weights goes to $+\infty$. This strategy is winning for Max with respect to $\psi$ as well.

Obviously, there are no edges from $V\setminus V^+$ to $V^+$. Thus, for the rest of the argument we may only deal with a restriction of $\mathcal{A}$ to $V\setminus V^+$. In other words, we may assume WLOG that $V^+$ is empty so that all cycles of $\mathcal{A}$ are non-positive. In particular, this implies that the weight of any path is at most $n$ (as it can be decomposed into a simple path and a union of cycles).

Now the sum of the weights cannot go to $+\infty$ so that Max can only win by making this sum equal to $0$ infinitely many times. Observe also that Max looses as long as the sum of the weights becomes smaller than $-n$. Indeed, in this case it is impossible to make it non-negative again (for that we would need a path of weight bigger than $n$).

These considerations show that our winning condition for Max is now equivalent to the following one: Max wins if the sum of the weights equals $0$ infinitely often but never exceeds $n$ in the absolute value. Equivalently (in terms of the memory skeleton $\mathcal{M}_n$) Max wins if $\mathcal{M}_n$ comes into state $0$ infinitely often but never comes into the invalid state. We can further simplify our winning condition by recalling $\mathcal{M}_n$ stays in the invalid state forever once this state is reached. So we can just forget about the requirement of avoiding the invalid state; as long as  $\mathcal{M}_n$ comes into state $0$ infinitely often, we automatically have that it never comes into the invalid state.

In terms of the product arena $\mathcal{M}_n\times\mathcal{A}$ this just means that Max wins from $w\in V$ if and only there exists
 an infinite path from $(0, w)$ in $\mathcal{M}_n\times\mathcal{A}$ which visits some node of the form $(0, v), v\in V$ infinitely often. Hence, this winning condition is just a parity game~\cite[Chapter 2]{obdrzalek2006algorithmic} with 2 priorities. Indeed, label all the nodes of the from $(0, v)$ by priority $2$ and all the other nodes by priority $1$. Observe that Max wins if and only if the largest priority visited infinitely many times is $2$.

 By positional determinacy of parity games~\cite{zielonka1998infinite} some positional strategy $\sigma$ of Max in $\mathcal{M}_n\times\mathcal{A}$ is winning for him wherever he has a winning strategy. Similarly to the proof of Observation \ref{prod_obs}, strategy $\sigma$ defines an $\mathcal{M}_n$-strategy $\Sigma$ of Max in $\mathcal{A}$ which is winning wherever Max has a winning strategy. This strategy $\Sigma$ (together with a unique strategy of Min) forms a uniform $\mathcal{M}_n$-strategy equilibrium in $\mathcal{A}$.

\medskip

\emph{Case 2: all nodes of Max have out-degree $1$}. Similarly to Case 1 we may assume WLOG that all cycles of $\mathcal{A}$ are non-negative. Indeed, in all  nodes from where one can reach a negative cycle Min can win by making the sum of the weights going to $-\infty$; moreover, he has a single positional strategy which does this for all these nodes. There are no edges to these nodes from the remaining ones; so we can restrict our arena to the set of nodes from where no negative cycle is reachable.

By definition, Min wins if and only if the sequence of the running sums of the weights satisfies the following two conditions:
\begin{itemize}
\item \textbf{\emph{(a)}} infinitely many of its elements are smaller than some constant $C$;
\item \textbf{\emph{(b)}} it has only finitely many $0$'s.
\end{itemize}
Let us show that the condition \textbf{\emph{(b)}} can be replaced by the following one: 
\begin{itemize}
\item \textbf{\emph{(c)}}  $\mathcal{M}_n$ comes into state $0$ only finitely many times on our sequence of weights.
\end{itemize}
For the \textbf{\emph{(b)}}$\implies$\textbf{\emph{(c)}} direction observe that $\mathcal{M}_n$ can be in state $0$ only if indeed the current sum of the weights is $0$. Hence if the sum was $0$ only finitely many times, then $\mathcal{M}_n$ was in state $0$ only finitely many times as well. For the other direction, however, a more subtle argument is needed. This is because the sum of weights can be $0$ while $\mathcal{M}_n$ is in state $\bot$ (this may happen if previously the sum exceeded $n$ in the absolute value). So \textbf{\emph{(c)}}$\implies$\textbf{\emph{(b)}} may be false only if the sum was $0$ infinitely many times after $\mathcal{M}_n$ came into state $\bot$. However, due to our assumptions about $\mathcal{A}$ the sum \emph{never} equals $0$ after $\mathcal{M}_n$ has reached $\bot$. Namely, recall that in $\mathcal{A}$ all cycles are non-negative. Hence (by the same argument as in Case 1) there is no path of weight smaller than $-n$. So $\mathcal{M}_n$ can come into $\bot$ only if the sum exceeded $n$. But if this happened, the sum will never be $0$ again (for that we would need a path of weight smaller than $-n$).

So Min wins if and only if both the conditions \textbf{\emph{(a)}} and \textbf{\emph{(c)}} are satisfied. In terms of the arena $\mathcal{M}_n\times \mathcal{A}$ the condition \textbf{\emph{(c)}} just means that the nodes of the form $(0, v)$ should be visited only finitely many times. Now, to finish the argument it is sufficient to show that in $\mathcal{M}_n\times \mathcal{A}$ some positional strategy $\tau$ of Min wins wherever Min has a winning strategy. Indeed, then by turning $\tau$ into corresponding $\mathcal{M}_n$-strategy $T$ in $\mathcal{A}$ (as in Case 1) we obtain a uniform $\mathcal{M}_n$-strategy equilibrium.

Call a cycle of $\mathcal{M}_n\times \mathcal{A}$ \emph{good} if its weight is $0$ and it contains no node of the form $(0, v)$. Call all the other cycles of $\mathcal{M}_n \times\mathcal{A}$ \emph{bad}.

First, observe that if only bad cycles are reachable from a node of $\mathcal{M}_n\times \mathcal{A}$, then Min looses in this node. Indeed, consider any infinite path $h$ which starts at such a node. There exists a simple cycle $C$ such that $h$ passes through it infinitely many times. This cycle must be bad, so it either contains a node of the form $(0, v)$ or its weight is strictly positive. If the first option holds, then Max wins on $h$ because of visiting a node of the form $(0, v)$ infinitely many times. If the second option holds, then Max wins on $h$ because the sum of the weights goes to $+\infty$. Indeed, any prefix of $h$ which passes through $C$ at least $k$ times has weight at least $k - n$ (each pass through $C$ contributes at least $1$, and the rest of the prefix is a path whose weight is at least $-n$). Since $k$ goes to $+\infty$, so does the sum of the weights.

Second, we claim that Min wins in all nodes from where one can reach a good cycle; moreover, there is a single positional strategy $\tau$ of Min which wins in all these nodes. Denote the set of these nodes of $\mathcal{M}_n\times\mathcal{A}$ by $G$. Since all cycles are non-negative, any good cycle contains a simple good subcycle. Hence in $G$ there exists a set $S$ of disjoint good simple cycles such that any node of $G$ has a path to a cycle from $S$. Consider the following strategy $\tau$. If a node belongs to one of the cycles from $S$, then move along this cycle. Otherwise, move by the shortest path to a cycle from $S$. Clearly, from any node of $G$ the strategy $\tau$ first reaches a cycle from $S$ and then stays on it forever. Since all cycles of $S$ of are zero, this means that the sum of the weights is bounded from above; moreover, these cycles do not contain nodes of the form $(0, v)$, so we will see these nodes only finitely many times.

\subsection{Example with no finite-memory equilibrium}
\label{subsec:two}

Consider the arena from Figure \ref{ex}. We show that if the game starts in the square, then Max has a winning strategy, but no finite-memory one. In particular, this means that $\psi$ is not finite-memory determined.

\begin{figure}[h!]
\centering
  \includegraphics[width=0.5\textwidth]{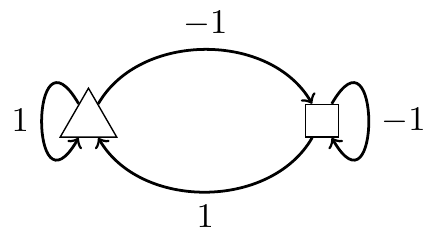}
  \caption{The square is owned by Max and the triangle is owned by Min.}
\label{ex}
\end{figure}

 Namely, the following strategy of Max is winning:  if the current sum of the weights is positive, use a $-1$ loop, otherwise go to the triangle. This strategy guarantees that whenever we reach the square, the sum of the weights will become $0$ once more. So provided that the square is visited infinitely many times, the sum of the weights equals $0$ infinitely often. Now, it might be that from some moment Min stays in the triangle forever, but in this case the sum of the weights goes to $+\infty$, and hence Max also wins.

To show that Max has no finite-memory winning strategy from the square, for every $s$ we define a strategy of Min which wins against any finite-memory strategy of Max with at most $s$ states. Namely, this strategy of Min is as follows: if the current sum of the weights is smaller than $s + 2$, then use a $1$ loop, otherwise go to the square.

Obviously, this strategy guarantees that the sum of the weights never exceeds $s + 2$. Hence Max can win only by making the sum of weights equal to $0$ infinitely many times. However, for that there must be infinitely many periods in which Max stays in the square for at least $s + 1$ moves. But if Max stays in the square for $s + 1$ moves, then it stays there forever after (during these $s + 1$ moves he was in the same state twice). This means that the sum of the weights goes to $-\infty$ and that Min wins.

\section{Proof of Theorem \ref{thm:example}}
\label{sec:example}

Let the set of colors be $C = \{0, 1\}$. Fix a set $T\subseteq \mathbb{Z}^{+}$. Define a payoff function $\vphi\colon\{0, 1\}^\omega\to\{0, 1\}$ by setting $\vphi(\alpha) = 1$ for $\alpha = \alpha_1 \alpha_2 \alpha_3\ldots \in\{0, 1\}^\omega$ if and only if at least one of the following two conditions holds:
\begin{itemize}
\item $\alpha$ contains only finitely many $0$'s;
\item for some $t\in T$ a word
\[0\underbrace{11\ldots 1}_{\mbox{$t$ times}}0\]
is a subword of $\alpha$ (here we call $x\in\{0, 1\}^*$ a subword of $\alpha$ if $x = \alpha_n \alpha_{n + 1} \ldots \alpha_{n + |x| - 1}$ for some $n\in\mathbb{Z}^+$).
\end{itemize}

Call $T$ \emph{sparse} if there are infinitely many $k\in T$ such that $l \notin T$ for all $k < l < k^4$.

\begin{lem}
\label{sparse}
If $T\subseteq \mathbb{Z}^+$ is sparse, then $\vphi\in\oneFMD(f)$ for some $f\colon\mathbb{Z}^+\to\mathbb{Z}^+, f\notin \Omega(n)$.
\end{lem}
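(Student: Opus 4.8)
The plan is to implement the scheme sketched in the overview of Theorem~\ref{thm:example}. For each $k\in\mathbb{Z}^+$ I would use a memory skeleton $\mathcal{M}_k$ that, after the first $0$ is read, counts the number of $1$'s since the most recent $0$, freezing the counter in an ``overflow'' state once it exceeds $k$; whenever a $0$ arrives while the counter holds a value in $T\cap[1,k]$, the skeleton jumps to an absorbing ``winning'' state, and otherwise it resets the counter to $0$. Together with a ``no $0$ seen yet'' state, $\mathcal{M}_k$ has $k+O(1)$ states. It then suffices to prove: whenever $k\in T$ witnesses sparseness (i.e.\ $(k,k^4)\cap T=\varnothing$) and $k$ is large enough, \emph{every} arena with at most $k^2$ nodes has a uniform $\mathcal{M}_k$-strategy equilibrium w.r.t.\ $\vphi$. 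Indeed, setting $f(n)$ to be $k+O(1)$ for the least such witness $k$ with $k^2\ge n$ (there are infinitely many witnesses, so this is well defined) gives $f(n)\le 2\sqrt n$ for infinitely many $n$, hence $f\notin\Omega(n)$, and $\vphi\in\FMD(f)\subseteq\oneFMD(f)$.

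Fix such a $k$ and an arena $\mathcal{A}$ with at most $k^2$ nodes. I would first introduce the auxiliary payoff $\mathrm{aux}\colon\{0,1\}^\omega\to\{0,1\}$ that declares Max the winner of a color sequence iff the run of $\mathcal{M}_k$ on it reaches the winning state, or the sequence has only finitely many $0$'s. On the product arena $\mathcal{M}_k\times\mathcal{A}$ (Definition~\ref{product_arena}) this is a parity condition with three priorities: priority $2$ on the winning-state nodes, priority $1$ on the ``counter $=0$'' nodes, and priority $0$ elsewhere; the point is that, as long as the winning state is avoided, the counter returns to $0$ infinitely often exactly when infinitely many $0$'s occur. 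By positional determinacy of parity games~\cite{zielonka1998infinite}, $\mathcal{M}_k\times\mathcal{A}$ has a uniform positional equilibrium; restricted to plays starting in $\{m_{init}\}\times V$, the parity condition coincides with $\mathrm{aux}$, so this is a $(\{m_{init}\}\times V)$-wise positional equilibrium w.r.t.\ $\mathrm{aux}$, and Observation~\ref{prod_obs} turns it into a uniform $\mathcal{M}_k$-strategy equilibrium $(\sigma^*,\tau^*)$ in $\mathcal{A}$ w.r.t.\ $\mathrm{aux}$. Here the ``no $0$ seen yet'' state ensures that reaching the winning state genuinely witnesses a factor $01^t0$ with $t\in T\cap[1,k]$, so an $\mathrm{aux}$-win for Max is always a $\vphi$-win for Max.

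The main obstacle is to show that $(\sigma^*,\tau^*)$ is also an equilibrium w.r.t.\ $\vphi$. Let $R$ be the set of nodes from which the play of $(\sigma^*,\tau^*)$ wins $\mathrm{aux}$ for Max. Since $\mathrm{aux}$ is $\{0,1\}$-valued, the equilibrium inequalities give that from nodes of $R$ the strategy $\sigma^*$ wins $\mathrm{aux}$ — hence $\vphi$ — against every Min strategy, and from nodes outside $R$ the strategy $\tau^*$ wins $\mathrm{aux}$ for Min against every Max strategy; a routine case split on whether the start node is in $R$ then reduces everything to the claim that from every $v\notin R$, every play consistent with $\tau^*$ also loses $\vphi$ for Max. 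Suppose not: then some $\tau^*$-consistent play from $v$ has color sequence containing a factor $01^t0$ with $t\in T$, and since $\tau^*$ keeps $\mathcal{M}_k$ out of the winning state we must have $t\notin[1,k]$, i.e.\ $t>k$, whence $t\ge k^4$ by sparseness. Lifting this play to $\mathcal{M}_k\times\mathcal{A}$, it uses only edges of Min's positional equilibrium strategy there, and the $1^t$-stretch is a path on $\ge k^4$ edges inside an arena with at most $(k+O(1))\cdot k^2<k^4$ nodes (for $k$ large); by pigeonhole it therefore contains a cycle all of whose edges are colored $1$. But then Max can follow the finite prefix of the lifted play reaching this cycle and loop it forever, producing a play consistent with Min's positional strategy that has only finitely many $0$'s — so Max wins $\mathrm{aux}$ from the lift of $v$, contradicting that Min's strategy wins $\mathrm{aux}$ there. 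This contradiction proves the claim; the remaining steps — the $R$/non-$R$ bookkeeping and verifying that $f\notin\Omega(n)$ from $f(n)\le 2\sqrt n$ on a subsequence — are straightforward.
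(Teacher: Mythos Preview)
Your proposal is correct and follows essentially the same approach as the paper: the same memory skeleton $\mathcal{M}_k$, the same auxiliary payoff (called $\psi$ there), the same reduction to a three-priority parity game on $\mathcal{M}_k\times\mathcal{A}$, and the same pigeonhole argument exploiting sparseness and $(k+O(1))\cdot k^2<k^4$ to rule out $\tau$-consistent plays containing $01^t0$ with $t\in T$, $t>k$. The only differences are cosmetic: your node-based priorities $\{0,1,2\}$ versus the paper's edge-based $\{1,2,3\}$, and your applying Observation~\ref{prod_obs} before transferring the equilibrium from $\mathrm{aux}$ to $\vphi$ whereas the paper carries out both claims inside the product arena.
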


Call $T$ \emph{isolated} if for all $m\in\mathbb{Z}^+$ there exists $k\in T, k > m$ such that $l\notin T$ for all $k - m < l < k + m$, $l\neq k$.

\begin{lem}
\label{isolated}
If $T$ is isolated, then $\vphi$ is not arena-independent finite-memory determined.
\end{lem}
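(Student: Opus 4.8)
To prove that $\vphi$ is not arena-independent finite-memory determined when $T$ is isolated, I will show that for every memory skeleton $\mathcal{M}$ there is an arena in which no pair of $\mathcal{M}$-strategies forms a uniform equilibrium. The key construction, as hinted in the overview, is an arena with a single ``branching'' node owned by Min, from which Min can choose, for each $j$ in some range $1 \le j \le k-1$, to split the target word $0\,1^k\,0$ at position $j$: concretely, after reading a prefix of the form $0\,1^j$, Min routes the play into a gadget which then produces $1^{k-j}\,0$ and loops back. The point of isolation is that $k$ is so far from every other element of $T$ that \emph{the only} way Max can win is to have produced exactly the word $0\,1^k\,0$ at some cut; producing $0\,1^t 0$ for any $t \ne k$ in the relevant window is impossible because $T$ has no elements between $k-m$ and $k+m$ (other than $k$), where $m$ will be chosen larger than the memory bound we are fighting against.

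First I would set up the arena $\mathcal{A}_k$ precisely: a Max-node $u$ with a $1$-loop and an edge to a Min-node $v$ labelled $0$; from $v$, for each $j \in \{1, \dots, k-1\}$, an edge labelled $1$ into a short path of length $k - j$ consisting of $1$-labelled edges, followed by a $0$-labelled edge back to $u$, plus one more Min-branch (say a $1$-loop at $v$, or a direct $0$-edge) that lets Min avoid ever completing a word of the required shape. The number of nodes is $O(k)$. Then I would argue: (1) Max has \emph{some} winning strategy from $u$ — e.g., loop once at $u$, go to $v$; whatever $j$ Min picks, the play so far spells $0\,1^j$, then $k-j$ ones and a $0$, i.e., $0\,1^k\,0$ appears, so Max wins. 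Actually I need to be careful here: Max's move at $u$ determines how many $1$'s precede the $0$, and then Min's choice of $j$ together with Max's number of loops must sum correctly; I would instead let Max simply go straight to $v$ (zero loops at $u$) so the word read is precisely $0\,1^j\,1^{k-j}\,0 = 0\,1^k\,0$ regardless of Min's choice — this is a genuine win for Max. (2) Min, using a finite-memory strategy with at most $s$ states, cannot win: this is automatic since Max has a winning strategy, but the real content is (3): \emph{Max} needs many states. Against a given $\mathcal{M}$-strategy $\sigma$ of Max with $|\mathcal{M}| \le s < k-1$, the initial segment of the play is forced ($u \to v$, or loops at $u$ then $v$); by pigeonhole two distinct cuts $j_1 \ne j_2$ are reached in the same memory state of $\mathcal{M}$, hence $\sigma$ behaves identically in the two gadgets — but the two gadgets have different lengths, so $\sigma$ cannot be simultaneously optimal responding against the Min-strategy that picks $j_1$ and the one that picks $j_2$; more directly, fix Min's strategy to pick the cut $j$ for which $\sigma$'s forced behaviour fails to reproduce $0\,1^k\,0$ and instead leads, after the gadget, to a play with no element of $T$ realised in the isolated window, so Min wins — contradicting that $\sigma$ lies in a uniform equilibrium.

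The delicate point, and the one I expect to be the main obstacle, is making the pigeonhole argument against $\mathcal{M}$-strategies watertight: I must ensure that after traversing the gadget for cut $j$ and returning to $u$, the memory state of $\mathcal{M}$ and hence all of Max's future behaviour is determined, and that there is a single choice of $j$ (depending on $\mathcal{M}$ and $\sigma$) making the resulting infinite play loosing for Max \emph{with respect to $\vphi$}, not merely failing to produce one particular cut. Isolation enters exactly here: since $k \in T$ is isolated with margin $m$ chosen $> |\mathcal{M}|$ (and we only care about $\mathcal{M}$ with $|\mathcal{M}|$ bounded, so for each candidate bound $s$ we pick such a $k$), the word $0\,1^t 0$ for $t$ within distance $m$ of $k$ lies in $T$ only for $t = k$; and any $0\,1^t0$ with $t$ far from $k$ cannot be produced in this small arena because all the ``gap lengths'' between consecutive $0$'s available in $\mathcal{A}_k$ are between $1$ and $k$. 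So if the forced play against $\sigma$ fails to realise the exact cut $0\,1^k\,0$, it realises no element of $T$ at all, and since Min can also ensure infinitely many $0$'s (by never letting the play get stuck at $u$), the play is loosing for Max. I would then conclude: for every $\mathcal{M}$ there is an arena ($\mathcal{A}_k$ for suitable isolated $k$ with $k - 1 > |\mathcal{M}|$) with no uniform $\mathcal{M}$-strategy equilibrium, hence $\vphi$ is not arena-independent finite-memory determined.
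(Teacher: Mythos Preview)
Your construction does not establish the lemma. In the arena you describe, the Max-node $u$ carries a $1$-loop, so Max has the positional strategy ``always take the loop at $u$''. From $u$ this yields the play $1^\omega$; from any other node it yields a play that, after possibly traversing one gadget and returning to $u$, has only finitely many $0$'s. In every case $\vphi=1$, so this positional strategy is uniformly winning and the arena has a uniform positional equilibrium --- it cannot witness failure of $\mathcal{M}$-determinacy for any $\mathcal{M}$. Removing the loop does not repair things: then Min alone controls the length of every $1$-block between $0$'s (your $j$-th branch from $v$ reads $1^{k-j+1}0$, not $1^j1^{k-j}0$ as you compute), so Min wins positionally by always selecting a length outside $T$. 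The deeper problem is where you apply the pigeonhole: you argue that two cuts $j_1\neq j_2$ leave $\mathcal{M}$ in the same state, ``hence $\sigma$ behaves identically in the two gadgets'' --- but the gadgets contain no Max nodes, so this is vacuous. For the argument to bite, Max's critical branching must occur \emph{after} the ambiguity is introduced, not before.

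The paper implements this with a one-player arena (every node owned by Max). There are $m$ source nodes; from the $i$-th a forced path colored $01^i$ leads to a common central node; from the central node Max chooses among $m$ outgoing branches, the $j$-th colored $1^{k-j}$ and ending at a sink with a $0$-loop. The play from the $i$-th source is $01^{k+i-j}0^\omega$, which is winning for Max iff $i=j$ (by isolation of $k$ with margin $m$). If $|\mathcal{M}|<m$, two prefixes $01^{i_1}$ and $01^{i_2}$ drive $\mathcal{M}$ to the same state at the central node, forcing the same $j$ and hence a loss from one of the two sources --- so no uniform $\mathcal{M}$-strategy equilibrium exists. The role you wanted Min to play is taken instead by the \emph{starting node}, with the uniformity requirement doing the work. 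A two-player variant along your intended lines can be made to work, but only if Min's cut precedes a genuine Max decision that must depend on that cut.
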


Assuming Lemmas \ref{sparse} and \ref{isolated} are proved, it remains to construct a sparse isolated set. For instance, one can take
$T = \{2^{4^n} \mid n\in\mathbb{Z}^+\}$.

\subsection{Proof of Lemma \ref{sparse}}

Call $k\in T$ \emph{good} if $l\notin T$ for all $k < l < k^4$. Define $f$ as follows:
\[f(n) = \min\left\{k + 4 \mid k\in T\mbox{ is good}, k\ge 4 \mbox{ and } k^2 \ge n\right\}\]

By definition of sparseness,  there are infinitely many good $k$ in $T$. Hence, $f(n)$ is well-defined for every $n\in\mathbb{Z}^+$. Now, if $k\in T$ is good and $k\ge 4$, then $f(k^2) \le k + 4$. Since this holds for infinitely many $k$, we have that $f\notin\Omega(n)$.

It remains to show that $\vphi\in\oneFMD(f)$. It is sufficient to construct, for every good $k\in T$, $k\ge 4$, a memory skeleton $\mathcal{M}_k$ with $k + 4$ states such that all one-player arenas with at most $k^2$ nodes have a uniform $\mathcal{M}_k$-strategy equilibrium. In fact, we will show this for all arenas with at most $k^2$ nodes, not only for one-player ones.

Let us define $\mathcal{M}_k$.
States of $\mathcal{M}_k$ will be denoted as follows:
\[I, F, q_0, q_1, \ldots, q_k, q_{>k}.\]
State $I$ is the initial one. Our memory skeleton stays in it until it sees the first $0$. Once the $0$ is seen, $\mathcal{M}_k$ starts memorizing the number of $1$'s after the last $0$ read so far, until this number exceeds $k$. So once we see the first $0$, we come into $q_0$ (there were no $1$'s after this $0$ yet). Next, if we see $1$ in a state $q_i$ for $0 \le i < k$, then we come into $q_{i+1}$. In turn, if we see $1$ in $q_k$, we come into $q_{>k}$ and stay in it as long as we see only $1$'s.

When a new $0$ appears, this interrupts the previous sequence of consecutive $1$'s. Correspondingly, when $\mathcal{M}_k$ sees $0$, in most of the cases it comes into $q_0$. However, in some cases it comes into state $F$ in which it then stays forever. More specifically, if $\mathcal{M}_k$ sees $0$  in a state $q_i$ for $i \le k, i\notin T$, or in $q_{>k}$, then it comes into $q_0$. In turn, if $\mathcal{M}_k$ sees $0$ in a state $q_i$ for $i\le k, i\in T$, then it comes into $F$.

\medskip

Take any arena $\mathcal{A} = \langle V, V_\Max, V_\Min, E, \source, \target, \col\rangle$ with at most $k^2$ nodes. Our goal is to show an existence of a uniform $\mathcal{M}_k$-strategy equilibrium in $\mathcal{A}$ (with respect to $\vphi$). By Observation \ref{prod_obs} it is sufficient to establish an $\{I\}\times V$-wise positional equilibrium in a product arena $\mathcal{M}_k\times\mathcal{A}$ (again, with respect to $\vphi$).

Define an auxiliary payoff $\psi\colon\{0, 1\}^\omega\to\{0, 1\}$ by setting $\psi(\alpha) = 1$ for $\alpha\in\{0, 1\}^\omega$ if and only if either $\alpha$ contains only finitely many $0$'s or a word
\[0\underbrace{11\ldots 1}_{\mbox{$t$ times}}0\]
is a subword of $\alpha$ for some $t\in\{1, 2, \ldots, k\}\cap T$. Our argument consists of proving the following two claims:

\begin{itemize}
\item \emph{Claim 1}. If $(\sigma, \tau)$ is an  $\{I\}\times V$-wise positional equilibrium with respect to $\psi$, then $(\sigma, \tau)$ is also an $\{I\}\times V$-wise positional equilibrium with respect to $\vphi$. Here $\sigma$ is a positional strategy of Max in $\mathcal{M}_k\times\mathcal{A}$ and $\tau$ is a positional strategy of Min in $\mathcal{M}_k\times\mathcal{A}$. 
\item \emph{Claim 2}. There exists an $\{I\}\times V$-wise positional equilibrium in $\mathcal{M}_k\times \mathcal{A}$ with respect to $\psi$.
\end{itemize}

\emph{Proving Claim 1.} It is sufficient to show that as long as $\sigma$ (correspondingly, $\tau$)  is winning in a node $(I, w), w\in V$ with respect to $\psi$, then $\sigma$ (correspondingly, $\tau$) is winning in this node with respect to $\vphi$. For $\sigma$ this is immediate because $\psi(\alpha) = 1 \implies \vphi(\alpha) = 1$ for every $\alpha\in\{0, 1\}^\omega$. Now, let $(I, w)$ be a node for which $\tau$ is winning with respect to $\psi$. Assume for contradiction that there exists an infinite path with the source in $(I, w)$ which is consistent with $\tau$ and which is winning for Max with respect to $\vphi$.
Since this path is loosing for Max with respect to $\psi$,  the corresponding infinite sequence of colors must have  a subword of the form
\[0\underbrace{11\ldots 1}_{\mbox{$l$ times}}0\]
for some $l\in T\setminus \{1, 2, \ldots, k\}$. Since $k$ is good, we must have $l\ge k^4$. This means that in  $(\mathcal{M}_k \times \mathcal{A})_\tau$ there is a path which starts in $(I, w)$ and contains $k^4$ consecutive edges colored by $1$.
Now, there are at mot $(k + 4) \cdot k^2$ nodes in $\mathcal{M}_k\times\mathcal{A}$. Since $k\ge 4$, we have $(k + 4) \cdot k^2 < k^4$. This means that in $(\mathcal{M}_k \times \mathcal{A})_\tau$ one can reach from $(I, w)$ a cycle colored only by $1$'s. Therefore there is a strategy of Max such that in its play against $\tau$ there are only finitely many edges colored by $0$. Hence $\tau$ could not be winning in $(I, w)$ with respect to $\psi$, contradiction. 

\medskip

\emph{Proving Claim 2.} We will show that there is a parity game with 3 priorities on $\mathcal{M}_k\times\mathcal{A}$ such that for every $w\in V$ Max wins in this parity game from $(I, w)$ if and only if he wins from $(I, w)$ with respect to $\psi$. Once this claim is proved it remains to refer to the positional determinacy of parity games.

It is easy to see that $\mathcal{M}_k$ is in state $F$ if and only if the current sequence of colors has a subword
\[0\underbrace{11\ldots 1}_i0\]
for some $i\in T\cap\{1, 2, \ldots, k\}$. So Max wins with respect to $\psi$ in the following two cases: \textbf{\emph{(a)}} $\mathcal{M}_k$ ever comes into $F$  \textbf{\emph{(b)}} all but finitely many edges of a play are colored by $1$. In terms of the arena $\mathcal{M}_k\times \mathcal{A}$ condition \textbf{\emph{(a)}} means that a node of the form $(F, v)$ is visited at least once. To put it differently, \textbf{\emph{(a)}} means that some \emph{edge} which starts in a node of the form $(F, v)$ is passed at least once. Let us denote the set of these edges by $E_2$. Partition all the other edges of $\mathcal{M}_k\times\mathcal{A}$ into two sets $E_0$ and $E_1$ according to their color. So Max wins if either a play contains an edge from $E_2$ or it contains only finitely many edges of $E_0$. 
 In fact, instead of requiring to have at least one edge from $E_2$ we may require to have infinitely many such edges (because once $\mathcal{M}_k$ came into $F$, it stays in it forever). This is equivalent to a parity game with $3$ priorities. Namely, label edges from $E_2$ by priority $3$, edges from $E_0$ by priority $2$, and edges from $E_1$ by priority $1$.  Observe that Max wins if and only if the largest priority visited infinitely many times is odd.

\subsection{Proof of lemma \ref{isolated}}

For every $m$ we construct a one-player arena $\mathcal{A}_m$ for which there exists no memory skeleton $\mathcal{M}$ with less than $m$ states such that $\mathcal{A}_m$ has a uniform $\mathcal{M}$-strategy equilibrium. Clearly, this implies that $\vphi$ is not arena-independent finite-memory determined.

By definition of isolation, there exists $k\in T, k > m$ such that $l\notin T$ for all $k - m < l < k + m$, $l\neq k$. Let $\mathcal{A}_m$ be as on Figure \ref{is}. All its nodes are owned by Max. On the left it has $m$ nodes. The $i$th one (from the top) has a single simple path to the central node; the colors along this simple path form a word $01^i$ (a zero followed by $i$ ones). On the right $\mathcal{A}_m$ also has $m$ nodes.  For each $i\in \{1, 2, \ldots, m\}$ there is a single simple path from the central node to the $i$th node (from the top) on the right; colors along this path form a word $1^{k - i}$. Finally, all the nodes on the right have a $0$ loop.

\begin{figure}[h!]
\centering
  \includegraphics[width=0.5\textwidth]{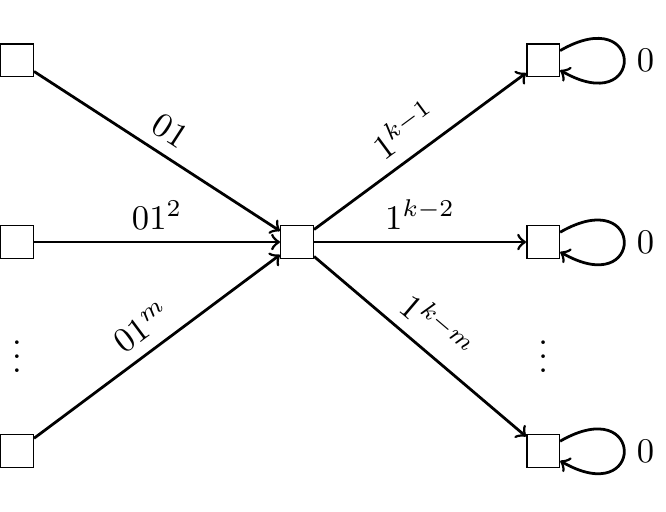}
  \caption{Arena $\mathcal{A}_m$.}
\label{is}
\end{figure}

Assume for contradiction that there exists a memory skeleton $\mathcal{M}$ with less than $m$ states such that $\mathcal{A}_m$ has a uniform $\mathcal{M}$-strategy equilibrium.This means that some $\mathcal{M}$-strategy $\Sigma$ of Max is winning for him in all the nodes where Max has a winning strategy. Observe that all the nodes on the left are winning for Max. Indeed, from the $i$th one Max should go  (through the central node) to the $i$th node on the right. The resulting infinite sequence of colors will be  $01^k 0^\omega$, and this is winning for Max since $k\in T$. Note also that this is the only infinite path which is winning for Max from the $i$th node on the left. Indeed, any other path is colored by  $01^{i + k - j} 0^\omega$ for some $j\in\{1, 2, \ldots, m\}, j\neq i$. This is loosing for Max because  $i + k - j \neq k$ , $k - m < i + k - j  < k + m$, and hence $ i + k - j \notin T$.

 Our $\mathcal{M}$-strategy $\Sigma$ must be winning for all the nodes on the left. So if the game starts in the $i$th node on the left, then in the central node $\Sigma$ must go to the $i$th node on the right. However, as there are less than $m$ states in $\mathcal{M}$, there must be two distinct nodes on the left from which $\mathcal{M}$ comes into the same state upon reaching the central node. So $\mathcal{M}$ must make the same move from the central node no matter in which of these two nodes on the left the game started. This means that $\Sigma$ will be loosing for at least one of these two nodes.


\appendix
\section{Proof of Observation \ref{prod_obs}}
\label{app:prod_obs}
Let $(\widehat{\Sigma}, \widehat{T})$ be an  $(\{m_{init}\}\times S)$-wise positional equilibrium in $\mathcal{M}\times\mathcal{A}$ (here $\widehat{\Sigma}$ is a strategy of Max and $\widehat{T}$ is a strategy of Min). Define an $\mathcal{M}$-strategy $\Sigma$ of Max in $\mathcal{A}$ as follows. To determine $\Sigma(h)$ for a position $h$ with $\target(h) = v\in V_\Max$ and $\delta(m_{init}, \col(h)) = m$, we consider a move which $\widehat{\Sigma}$ makes in a node $(m, v)$. Assume that this move is a pair $(m, e)$ (we must have $\source(e) = v$). Then we set $\Sigma(h) = e$. We define an $\mathcal{M}$-strategy $T$ of Min in $\mathcal{A}$ similarly through the strategy $\widehat{T}$. We claim that $(\Sigma, T)$ is an $S$-wise equilibrium in $\mathcal{A}$.

Assume for contradiction that for some $v\in S$ either $\Sigma$ is not an optimal response to $T$ or $T$ is not an optimal response to $\Sigma$ in $v$. We consider only the first option, the second one can be treated similarly. Then some infinite path $h \in \Cons(v, T)$ is better from the Max's perspective than $h(v, \Sigma, T)$ (the play of $\Sigma$ and $T$ from $v$), i.e., 
\begin{equation}
\label{h}
\vphi \circ \col(h) > \vphi \circ \col\big(h(v, \Sigma, T)\big).
\end{equation}
 For $n\in\mathbb{Z}^+$ let $e_n^\prime$ denote the $n$th edge of $h$ and $e_n$ denote the $n$th edge of $h(v, \Sigma, T)$.
For each of these two sequences of edges define a sequence of states into which $\mathcal{M}$ comes while reading colors of these edges (assuming $\mathcal{M}$ is initially in $m_{init}$):
\begin{align}
\label{m_prime}
m_1^\prime &= m_{init}, \qquad m_{n + 1}^\prime = \delta(m_n^\prime, \col(e_n^\prime)) \mbox{ for every $n\in\mathbb{Z}^+$},\\
\label{m}
m_1 &= m_{init}, \qquad m_{n + 1} = \delta(m_n, \col(e_n)) \mbox{ for every $n\in\mathbb{Z}^+$}.
\end{align}
It is easy to see that the sequence
$(m_1, e_1)(m_2, e_2) (m_3, e_3)\ldots$
is the play of $\widehat{\Sigma}$ and $\widehat{T}$ from $(m_{init}, v)$. For example, let us show its consistency with $\widehat{\Sigma}$. We have to show that for every $n\in\mathbb{Z}^+$ such that $\source((m_n, e_n)) = (m_n, \source(e_n))$ is a node of Max we have $\widehat{\Sigma}((m_n, \source(e_n))) = (m_n, e_n)$. By definition of $\Sigma$ it is sufficient to show that $e_{n} = \Sigma(h)$ for a position $h$ in $\mathcal{A}$ with $\target(h) = \source(e_n)$ and $\delta(m_{init}, \col(h)) = m_{n}$. It is easy to see that we have this for a position $h = e_1 \ldots e_{n - 1}$ if $n > 1$ and for $h = \lambda_v$ if $n = 1$. Indeed, we have $\Sigma(h) = e_n$ and $\target(h) = \source(e_n)$ because $e_1 e_2\ldots e_n$ is a prefix of the play of $\Sigma$ and $T$ from $v$. Now, we have $\delta(m_{init}, \col(h)) = m_n$ because of \eqref{m}.

Consistency of $(m_1, e_1)(m_2, e_2) (m_3, e_3)\ldots$ with $\widehat{T}$ can be shown similarly. Moreover, by the same argument  the sequence
$(m_1^\prime, e_1^\prime)(m_2^\prime, e_2^\prime) (m_3^\prime, e_3^\prime)\ldots$
is also consistent with $\widehat{T}$, due to \eqref{m_prime}. Now, since $\widehat{\Sigma}$ is an optimal response to $\widehat{T}$ in $(m_{init}, v)$, we have that $(m_1, e_1)(m_2, e_2) (m_3, e_3)\ldots$ is at least as good as $(m_1^\prime, e_1^\prime)(m_2^\prime, e_2^\prime) (m_3^\prime, e_3^\prime)\ldots$ from the Max's perspective. However, by definition the first sequence is colored exactly as $h(v, \Sigma, T)$ and the second one exactly as $h$. This is a contradiction with \eqref{h}.
\end{document}